\def\showauthornotes{0}
\def\showkeys{0}
\def\showdraftbox{0}
\def\confversion{0}
\def\widemargin{0}
\def\ipadcompile{1}
\def\todolist{1}
\definecolor{darkred}{rgb}{0.5,0,0}
\definecolor{darkgreen}{rgb}{0,0.35,0}
\definecolor{darkblue}{rgb}{0,0,0.55}
\DeclareMathAlphabet{\mathpazocal}{OMS}{zplm}{m}{n}
\DeclareRobustCommand*{\mathcal}[1]{\mathpazocal{#1}}
\newcommand{\Authornote}[3]{{\sf\small\color{#3}{[#1: #2]}}}
\newcommand{\Authorcomment}[2]{{\sf \small\color{gray}{[#1: #2]}}}
\newcommand{\Authorfnote}[2]{\footnote{\color{red}{#1: #2}}}
\newcommand{\Authornote}[3]{}
\newcommand{\Authorcomment}[2]{}
\newcommand{\Authorfnote}[2]{}
\declaretheorem[numberwithin=section]{theorem}
\declaretheorem[sibling=theorem]{lemma}
\declaretheorem[sibling=theorem]{claim}
\declaretheorem[sibling=theorem]{proposition}
\declaretheorem[sibling=theorem]{corollary}
\declaretheorem[sibling=theorem]{summary}
\theoremstyle{definition}
\declaretheorem[sibling=theorem]{definition}
\declaretheorem[sibling=theorem]{remark}
\declaretheorem[sibling=theorem]{observation}
\newtheorem{algo}[theorem]{Algorithm}
\newenvironment{algorithm}[3]
        {\noindent\begin{boxedminipage}{\textwidth}\begin{algo}[#1]\ \par
        {\begin{tabular}{r l}
        \textbf{Input} & #2\\
        \textbf{Output} & #3
        \end{tabular}\par\enskip}}
        {\end{algo}\end{boxedminipage}}
\def\FullBox{\hbox{\vrule width 6pt height 6pt depth 0pt}}
\def\qed{\ifmmode\qquad\FullBox\else{\unskip\nobreak\hfil
\penalty50\hskip1em\null\nobreak\hfil\FullBox
\parfillskip=0pt\finalhyphendemerits=0\endgraf}\fi}
\def\qedsketch{\ifmmode\Box\else{\unskip\nobreak\hfil
\penalty50\hskip1em\null\nobreak\hfil$\Box$
\parfillskip=0pt\finalhyphendemerits=0\endgraf}\fi}
\newcommand{\Caratheodory}{Carath\'eodory\xspace}
\newcommand{\Zemor}{Z\'emor\xspace}
\newcommand{\cC}{\mathcal{C}}
\newcommand{\cD}{\mathcal{D}}
\newcommand{\cE}{\mathcal{E}}
\newcommand{\cF}{\mathcal{F}}
\newcommand{\ie}{i.e.,\xspace}
\newcommand{\etal}{et al.\xspace}
\newcommand{\mper}{\,.}
\def\to{\rightarrow}
\def\eps{\varepsilon}
\def\epsilon{\varepsilon}
\def\phi{\varphi}
\def\cal{\mathcal}
\def\implies{\Rightarrow}
\newcommand{\R}{{\mathbb R}}
\newcommand{\F}{{\mathbb F}}
\DeclarePairedDelimiter\parens{\lparen}{\rparen}
\DeclarePairedDelimiter\abs{\lvert}{\rvert}
\DeclarePairedDelimiter\norm{\lVert}{\rVert}
\DeclarePairedDelimiter\braces{\lbrace}{\rbrace}
\DeclarePairedDelimiter\brackets{\lbrack}{\rbrack}
\DeclarePairedDelimiter\angles{\langle}{\rangle}
\DeclarePairedDelimiterXPP\lnorm[1]{}\lVert\rVert{_2}{#1}
\DeclareMathDelimiter{\given}
      {\mathbin}{symbols}{"6A}{largesymbols}{"0C}
\newcommand{\prob}{\mathbb{P}}
\newcommand{\Esymb}{\mathbb{E}}
\newcommand{\tildeE}{\widetilde{\mathbb{E}}}
\newcommand{\Psymb}{\mathbb{P}}
\newcommand{\Varsymb}{\mathrm{Var}}
\DeclarePairedDelimiterXPP{\Prob}[1]
 {\prob}{\lparen}{\rparen}{}
 {\renewcommand{\given}{\;\delimsize\vert\nonscript\;\mathopen{}}#1}
\def\Pr#1{%
    \ProbabilityRender{\Psymb}{#1}%
}
\def\Ex#1{%
    \ProbabilityRender{\Esymb}{#1}%
}
\def\tildeEx#1{%
    \ProbabilityRender{\widetilde{\Esymb}}{#1}%
}
\def\condPE#1#2{%
	\@ifnextchar\bgroup
	{\ConditionalProbabilityRender{\widetilde{\Esymb}}{#1}{#2}}
	{\ProbabilityRender{\widetilde{\Esymb}}{#1 \given #2}}
}
\def\tildeVar#1{
	\ProbabilityRender{\widetilde{\Varsymb}}{#1}
}
\def\tildeCov#1#2{
	\ProbabilityRender{\tildecov}{#1,#2}
}
\def\ConditionalProbabilityRender#1#2#3#4{
	\renderwithdist{#1}{#2}{#3 \given #4}	
}
\def\ProbabilityRender#1#2{
  \@ifnextchar\bgroup%
  {\renderwithdist{#1}{#2}}
   {\singlervrender{#1}{#2}}
}
\def\singlervrender#1#2{%
   {\mathchoice
       {{#1}\brackets*{#2}}
       {{#1}[ #2 ]}
       {{#1}[ #2 ]}
       {{#1}[ #2 ]}
   }
}
\def\renderwithdist#1#2#3{%
   \@ifnextchar\bgroup
   {\superfancyrender{#1}{#2}{#3}}
   {\mathchoice
      {\underset{#2}{#1}\brackets*{#3}}
      {{#1}_{#2}[ #3 ]}
      {{#1}_{#2}[ #3 ]}
      {{#1}_{#2}[ #3 ]}
     }
}
\def\superfancyrender#1#2#3#4#5{
   \ensuremath{\mathchoice
      {\underset{#1}{{#1}}\left#4 #3 \right#5}
      {{#1}_{#2}#4 #3 #5}
      {{#1}_{#2}#4 #3 #5}
      {{#1}_{#2}#4 #3 #5}
   }
}
 \newcommand\SetSymbol[1][]{%
     \nonscript\:#1\vert
     \allowbreak
     \nonscript\:
     \mathopen{}}
  \DeclarePairedDelimiterX\Set[1]\{\}{%
     \renewcommand\given{\SetSymbol[\delimsize]}
     #1
}
\newcommand{\ip}[2]{\angles*{#1 , #2}}
\newcommand{\inparen}[1]{\parens*{#1}}             
\newcommand{\inbraces}[1]{\braces*{#1}}           
\newcommand{\conv}[1]{\mathrm{conv}\parens*{#1}}
\let\nfrac=\nicefrac
\newcommand{\one}{{\mathbf{1}}}
\newfont{\inhead}{eufm10 scaled\magstep1}
\newcommand{\calB}{{\cal B}}
\newcommand{\calD}{{\cal D}}
\newcommand{\calH}{{\cal H}}
\newcommand{\calJ}{{\cal J}}
\newcommand{\calL}{{\cal L}}
\newcommand{\calO}{{\cal O}}
\newcommand{\calT}{{\cal T}}
\newcommand{\calS}{{\cal S}}
\newcommand{\calW}{{\cal W}}
\newcommand{\poly}{{\mathrm{poly}}}
\newcommand{\suchthat}{{\;\; : \;\;}}
\newcommand{\sub}{\subseteq}
\newcommand{\defeq}{:=}
\DeclareMathOperator\supp{Supp}
\DeclareMathOperator*{\argmin}{\arg\!\min}
\def\li{\ell}
\def\ri{r}
\def\zee{\mathbf{Z}}
\def\indi#1{{\one \braces*{ #1 } }}
\def\dupPE#1{
	\ProbabilityRender{\widetilde{\Esymb}^*}{#1}
}
\DeclareMathOperator{\tildecov}{\widetilde{\operatorname {Cov}}}
\DeclareMathOperator{\dupCov}{\widetilde{\operatorname {Cov}}_{\widetilde{\Esymb}^*}}
\def\tee{{ \theta }}
\newcommand{\out}{\mathrm{out}}
\newcommand{\inn}{\mathrm{in}}
\newcommand{\dec}{\mathrm{dec}}
\newcommand{\cx}{\cC_{\scaleto{X\mathstrut}{6.5pt}}}
\newcommand{\cz}{\cC_{\scaleto{Z\mathstrut}{6.5pt}}}
\newcommand{\dx}{\cD_{\scaleto{X\mathstrut}{6.5pt}}}
\newcommand{\dz}{\cD_{\scaleto{Z\mathstrut}{6.5pt}}}
\newcommand{\ex}{\cE_{\scaleto{X\mathstrut}{6.5pt}}}
\newcommand{\ez}{\cE_{\scaleto{Z\mathstrut}{6.5pt}}}
\newcommand{\fx}{\cF_{\scaleto{X\mathstrut}{6.5pt}}}
\newcommand{\fz}{\cF_{\scaleto{Z\mathstrut}{6.5pt}}}
\newcommand{\hx}{H_{\scaleto{X\mathstrut}{6.5pt}}}
\newcommand{\hz}{H_{\scaleto{Z\mathstrut}{6.5pt}}}
\newcommand{\subs}[2]{{#1}_{\scaleto{{#2}\mathstrut}{6.5pt}}}
\newcommand{\dxh}{\overline{h}}
\newcommand{\dxz}{\overline{z}}
\newcommand{\distL}[1]{\Delta_L\parens*{#1\,,h}}
\newcommand{\distLperp}[1]{\Delta_{L, \cz^\perp}\parens*{#1\,,h}}
\newcommand{\distR}[1]{\Delta_{R}\parens*{#1\,,h}}
\newcommand{\dist}[1]{\Delta \parens*{ #1 }}
\newcommand{\partmin}{\psi}
\newcommand{\disR}[2]{\Delta_{R}\parens*{#1\,,#2}}
\newcommand{\phix}{\subs{\varphi}{X}}
\newcommand{\phiz}{\subs{\varphi}{Z}}
\newcommand{\wphix}{\widetilde{\phix}}
\newcommand{\wphiz}{\widetilde{\phiz}}
\newcommand{\unconx}{\subs{\varphi}{X}^{\raisebox{0.3ex}{$\scriptscriptstyle-1$}}}
\newcommand{\unconz}{\subs{\varphi}{Z}^{\raisebox{0.3ex}{$\scriptscriptstyle-1$}}}
\newcommand{\embed}{\chi}
\newcommand{\chiTan}{\chi}
\DeclareMathOperator{\fold}{\operatorname{Fold}}
\title{List Decodable Quantum LDPC Codes}
\author{
Thiago Bergamaschi\thanks{{\tt UC Berkeley}. {\tt thiagob@berkeley.edu}. }
\and
Fernando Granha Jeronimo\thanks{{\tt University of Illinois Urbana-Champaign}. {\tt granha@illinois.edu}. }
\and
Tushant Mittal\thanks{{\tt Stanford University}. {\tt tushant@uchicago.edu}. Partly supported by NSF grant CCF-2326685.}
\and
Shashank Srivastava\thanks{{\tt DIMACS \& IAS}. {\tt shashanks@ias.edu}. Partly supported by the NSF grant CCF-2326685.}
\and
Madhur Tulsiani\thanks{{\tt Toyota Technological Institute at Chicago}. {\tt madhurt@ttic.edu}. Supported by the NSF grant CCF-2326685.} 
}
\date{}
\begin{document}

\maketitle
\thispagestyle{empty}

\medskip
We give a construction of Quantum Low-Density Parity Check (QLDPC) codes with near-optimal rate-distance tradeoff and efficient list decoding up to the Johnson bound in polynomial time. 
Previous constructions of list decodable good distance quantum codes either required access to a classical side channel or were based on algebraic constructions that preclude the LDPC property.


Our construction relies on new algorithmic results for codes obtained via the quantum analog of the distance amplification scheme of Alon, Edmonds, and Luby [FOCS 1995].  These results are based on convex relaxations obtained using the Sum-of-Squares hierarchy, which reduce the problem of list decoding the distance amplified codes to unique decoding the starting base codes. Choosing these base codes to be the recent breakthrough constructions of good QLDPC codes with efficient unique decoders, we get efficiently list decodable QLDPC codes.




\newpage

\pagenumbering{roman}
\tableofcontents
\clearpage

\pagenumbering{arabic}
\setcounter{page}{1}

\newpage


\section{Introduction}
The area of quantum error correction has had tremendous progress in recent years, particularly in the construction of quantum low-density parity-check (QLDPC) codes. These are codes where membership can be tested via checks acting only on a small number of qubits, which is often an important property for physical implementations. 
Starting with the classic toric code by Kitaev~\cite{Kitaev03}, a sequence of works~\cite{TZ14, EKZ20, KT21, HHOD21, PK22, LZ22} has led to the construction of \emph{asymptotically good} QLDPC codes, with constant rate and constant relative distance. 

Specifically, the above constructions yield a special form of quantum code known as
Calderbank--Shor--Steane (CSS) code, which can be specified by a pair of subspaces $\cx,\cz
\subseteq \F_q^n$ satisfying $\cz^{\perp} \subseteq \cx$ (which also implies $\cx^{\perp} \subseteq
\cz$). In this work, we will always take quantum codes to mean quantum CSS codes and refer to them
as QLDPC codes when $\cx^{\perp}$ and $\cz^{\perp}$ have generating sets consisting of sparse vectors. The code $\cC =
(\cx, \cz)$ is said to have blocklength $n$ and alphabet size $q$, with the relative distance $\delta$ and the rate $\rho$ defined as,
\[
\delta  ~=~ \frac{1}{n} \cdot \min\inbraces{\abs{c} ~\Big\vert~ c \in (\cx \setminus \cz^{\perp}) \cup (\cz \setminus \cx^{\perp})} 
\qquad \text{and} \qquad
\rho ~=~ \frac{1}{n} \cdot \inparen{\dim (\cx) - \dim(\cz^\perp)} \mper
\]
The codes are said to be \textit{good codes} if both $\delta$ and $\rho$ are constants independent of $n$.
 
The first construction of good QLDPC codes was obtained by the recent breakthrough construction of
Panteleev and Kalachev~\cite{PK22}, also closely related to the independent construction
of good locally testable codes by Dinur, Evra, Livne, Lubotzky, and Mozes~\cite{DELLM22}. Subsequent variants of their
constructions by Leverrier and \Zemor~\cite{LZ22, LZ23:decodable} and by Dinur, Hsieh, Lin, and Vidick~\cite{DHLV23} have also obtained simpler descriptions and stronger properties, including \emph{algorithmic} guarantees such as the existence of linear-time (and even parallelizable) unique-decoding algorithms, for decoding from errors up to a constant fraction of the blocklength. 
\paragraph{Improved error correction via list decoding.} 
Given the existence of codes with constant relative distance and associated unique-decoding algorithms, it is natural to {consider the} \emph{maximum} fraction of errors that can be efficiently corrected and whether even stronger forms of decoding are possible for such codes. 

For a code with relative distance $\delta$, unique-decoding can recover from, at most, a $\delta/2$ fraction of errors (although existing unique-decoding algorithms for quantum codes do not necessarily achieve this threshold). 
For several classical codes, it is possible to correct significantly more than $\delta/2$ fraction
of errors by relaxing the decoding task to \emph{list decoding}~\cite{G01, Guruswami:survey}, where the goal is not to output a single codeword but possibly a (small) list of codewords within a given error radius $\tau$ of the received word.  
In addition to allowing for recovery from $\tau > \delta/2$ fraction of errors, this is also helpful for several applications in complexity theory~\cite{Trevisan05} where tolerating a larger error radius is significantly more important than recovering a unique codeword.

In the quantum case, no-cloning restrictions forbid duplication of quantum states and require the task of list decoding to be defined more carefully. The appropriate analog in the quantum case is actually the \emph{classical} task of recovering a list of error patterns with at most {$\tau$-fraction of} errors that can lead to a given received word (or rather, a given syndrome, which corresponds to the output of the parity checks). 
The question of constructing list decodable quantum codes (but not necessarily QLDPC codes) was considered by Leung and Smith~\cite{LS08}, and more recently, quantum codes admitting efficient list decoding algorithms were also constructed in the work of Bergamaschi, Golowich, and Gunn~\cite{BGG22}.

Another form of decoding for quantum codes, which can go beyond the unique-decoding radius $\delta/2$, is where one is still trying to recover a single quantum state but is allowed to make an exponentially small (in the block-length $n$ of the code) error in the output quantum state. This notion of approximate quantum error-correcting codes (AQECCs) was previously considered by Cr\'{e}pau, Gottesman, and Smith~\cite{CGS05}, and has found renewed recent interest \cite{HP20, Ber23, YYGL24}. 
The work of Bergamaschi, Golowich, and Gunn was also motivated by the construction of AQECCs and shows that existing constructions of purity testing codes~\cite{BCGST02} and robust secret sharing schemes~\cite{CDDFS15} can be combined with list decodable quantum codes, to obtain AQECCs (with the same decoding radius).
Thus, constructions of quantum codes with efficient list decoding can be used for both the above forms of error correction beyond the unique decoding radius.

In light of the above discussion, it is desirable to have quantum codes that are both, efficiently list decodable and LDPC. Constructing such list decodable LDPC codes is challenging, even for classical codes. This is because most 
list decoding techniques rely upon algebraic structure, which does not necessarily allow for the LDPC property.

\subsection{Our Results} 
We give the first construction of QLDPC codes that are efficiently list decodable beyond the unique decoding radius. We can decode upto the \textit{Johnson bound} defined as $\mathcal{J}(\delta) := 1 - \sqrt{1-\delta}$. It is not known how to construct QLDPC codes, or indeed even classical LDPC codes, which can be efficiently list decoded beyond the Johnson bound.

\begin{theorem}[Informal version of \cref{thm:near_mds_main}]\label{thm:intro_main}
	For any constant $0 <\rho < 1$, and small enough $\eps_1,\eps_2 > 0$, there is an infinite family of quantum LDPC codes over a constant-sized alphabet, $q(\eps_1)$, such that:
	\begin{enumerate}[(i)]
        \item Each code can be constructed deterministically in time polynomial in its blocklength, $n$.  
        \item The rate of each code is at least $\rho$, and the relative distance is $\delta \geq \frac{1-\rho-\eps_1}{2}$.
		\item Each code can be list decoded from radius $\calJ\parens[\big]{\frac{1-\rho - \eps_1}{2}}-\eps_2$ in time, $n^{\calO_{\eps_1}(1/\eps_2^4)}$.
	\end{enumerate}
\end{theorem}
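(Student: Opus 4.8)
The plan is to prove \cref{thm:intro_main} (via its formal version \cref{thm:near_mds_main}) by applying a quantum analog of the Alon--Edmonds--Luby (AEL) distance-amplification scheme to a good base QLDPC code, and then showing that list decoding the amplified code reduces, through a Sum-of-Squares (SoS) relaxation, to \emph{unique} decoding the base code. Concretely, fix a constant $\rho_0 \in (0,1)$ and an explicit family of good QLDPC codes $\cC_0$ over $\F_q$, with $q = q(\eps_1)$ a suitable prime power, of rate at least $\rho_0$, constant relative distance $\delta_0$, and admitting a linear-time unique decoder correcting some constant fraction $\alpha_0$ of both $X$- and $Z$-type errors; such families are given by \cite{DHLV23} and \cite{LZ23:decodable}. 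Take an explicit $d$-regular bipartite expander $G$ on $n+n$ vertices with normalized second singular value $\lambda$, and an inner CSS code $\cC_1$ of block length $d$ over $\F_q$ that is near-MDS, \ie $\rho_1 + 2\delta_1 \ge 1 - \calO(\eps_1)$ for constants $\rho_1, \delta_1$ --- such short quantum codes exist once $q$ is large enough in terms of $\eps_1$. The quantum AEL folding $\cC = \fold(\cC_0, G, \cC_1)$ has block length $n$ over alphabet $\F_q^d$; because the folding acts on $\cx$ and $\cz$ through a dual pair of inner codes, the CSS condition $\cz^\perp \subseteq \cx$ is preserved, and the code remains LDPC since each new check is supported on $\calO(d)$ of the old sparse checks. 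A standard expander-mixing computation gives rate at least $\rho_0\rho_1 - o(1)$ and relative distance at least $\delta_1 - \calO(\lambda/\delta_0)$, so choosing $\delta_1 \approx \tfrac{1-\rho-\eps_1}{2}$, $\rho_1 = 1-2\delta_1$ (near-MDS), $\rho_0 = \rho/\rho_1 < 1$, and $\lambda$ small enough (equivalently $d$ large in terms of $\eps_1$) yields rate $\ge \rho$ and relative distance $\ge \tfrac{1-\rho-\eps_1}{2}$; this establishes parts (i) and (ii), with explicitness inherited from the three explicit ingredients.

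\textbf{List decoding via reduction to unique decoding.} Given a syndrome, the task --- the correct quantum analog of list decoding under no-cloning --- is to output a short list containing every error pattern $e \in (\F_q^d)^n$ of relative weight at most $\tau = \calJ\inparen{\tfrac{1-\rho-\eps_1}{2}} - \eps_2$ that is consistent with that syndrome. Unfolding such an $e$ along $G$ yields a string $\hat e \in \F_q^{nd}$ which, grouped by the right vertices of $G$, agrees with some inner codeword of $\cC_1$ on a $(1-\tau)$-fraction of blocks and whose inner-decoded value on the left vertices lies within the base unique-decoding radius $\alpha_0$; if $e$ were unique one could simply run the base decoder, but for $\tau$ beyond $\delta(\cC)/2$ this inner step is itself a list-decoding task with no single answer. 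We therefore set up an SoS relaxation of degree $\calO(1/\eps_2^4)$ whose variables encode an indicator of a planted low-weight error pattern consistent with the syndrome, solve it to obtain a pseudo-expectation operator $\tildeEx{\cdot}$, and round it: using the expansion of $G$ together with a conditioning/propagation argument over $\calO(1/\eps_2)$ rounds, on a $1-\calO(\eps_2)$ fraction of the right vertices the local marginals of $\tildeEx{\cdot}$ become $\calO(\eps_2)$-close in statistical distance to genuine distributions over inner codewords; a Johnson-bound argument internal to $\cC_1$ then pins down the corresponding base symbols on an $\alpha_0$-fraction of the left vertices up to a list of constant size per vertex, and feeding each surviving combination to the linear-time base unique decoder yields the output list. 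The list size is bounded by a constant depending only on $\eps_1,\eps_2$, and the running time is dominated by solving the degree-$\calO(1/\eps_2^4)$ semidefinite program, \ie $n^{\calO_{\eps_1}(1/\eps_2^4)}$; running the procedure once with $(\cx,\cz)$ and once with the roles swapped handles both error types, establishing part (iii).

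\textbf{Main obstacle.} The technical heart is the rounding step: one must show that the SoS pseudo-expectation's local views behave like honest distributions over inner codewords on almost every right vertex, \emph{while} respecting the affine constraints imposed by the fixed syndrome and by the CSS relation $\cz^\perp \subseteq \cx$. Concretely, this is a pseudo-expectation analog of the expander-mixing lemma --- the pseudo-expected disagreement between the planted error and an honest minimum-weight error consistent with the syndrome must be controlled purely by the spectral gap of $G$ and the SoS degree --- and it must hold even though $\cC_0$ is only unique-decodable (not list-decodable) and even though $X$- and $Z$-side errors are coupled through the CSS condition. Making this quantitative enough that the SoS degree and the output list size depend only on $\eps_2$ --- so the running time is $n^{\calO_{\eps_1}(1/\eps_2^4)}$ rather than a quantity growing with $n$ --- is where the real work lies; by comparison, the rate/distance bookkeeping in the first step and the final invocation of the base decoder are routine.
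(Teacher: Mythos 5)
Your high-level plan — AEL-amplify a unique-decodable QLDPC outer code with a near-MDS inner code over a constant-degree expander, do the rate/distance bookkeeping, and reduce list decoding of the amplified code to unique decoding of the outer code via an SoS relaxation of degree $\calO_{\eps_1}(1/\eps_2^4)$ — matches the paper's construction, and your parameter choices are in the right ballpark. But you have correctly identified the \enquote{main obstacle} and then left it unresolved, and the two concrete steps you do describe in the rounding phase would not work as stated.

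First, the gap you flag is exactly where a new idea is needed, and the paper supplies one you don't have: a \emph{partial minimizer}. The quantum distance of $\cF=(\fx,\fz)$ is distance between cosets of $\fz^\perp$, and the Hopkins--Lin lower bound shows SoS cannot even \emph{define} this quantity, so no amount of \enquote{pseudo-expectation EML} will directly bound $\Delta_R$ against the coset distance. The paper's resolution is to decompose $\ez^\perp = \wphix(\dz^\perp) + \F_q^n\otimes\cz^\perp$, handle the local summand $\F_q^n\otimes\cz^\perp$ \emph{inside} the SoS proof with a vertex-by-vertex map $\partmin(\zee,h)_\li$ that is coset-preserving and locally monotone (so $\Delta_R(\partmin(\zee,h),h)\le\Delta_R(\zee,h)$ has a low-degree proof), and hand the remaining global summand $\wphix(\dz^\perp)$ to the outer unique decoder after rounding. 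This decomposition is what lets a low-degree SoS certificate control $\Delta_{L,\cz^\perp}$ rather than $\Delta_L$. Your proposal never separates these two parts, and without that separation the distance argument cannot be expressed as a sum-of-squares of low-degree polynomials.

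Second, your rounding step is not polynomial time. You propose a per-vertex Johnson bound inside $\cC_1$ producing a constant-size candidate list at each left vertex, and then \enquote{feeding each surviving combination to the linear-time base unique decoder.} With a list of size $c\ge 2$ at each of $\Theta(n)$ vertices this enumerates $c^{\Theta(n)}$ combinations. The paper does something different and crucially non-enumerative: it \emph{samples a single value} $y_\li = \dupPE{\phi_X^{-1}(\zee_\li)}$ at each left vertex from the conditioned local marginal, shows $\Ex{y}{\Delta(y,\dxh)} \le \distLperp{\dupPE{\cdot}} \le \delta_{\dec} - \Omega(\eps_2)$, and concludes by Markov that a single rounded string falls inside the outer unique-decoding radius with constant probability. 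One call per iteration, constantly many iterations.

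Third, you omit the covering step entirely. Solving an arbitrary feasible SoS relaxation will not capture more than one list element — a point-mass pseudocodeword on any single codeword is feasible. The paper prevents this by minimizing $\norm{\tildeEx{\embed(\zee)}}^2$ (an entropy-like potential) over feasible pseudocodewords, which forces the solution to be $\alpha^2$-close to \emph{every} codeword within $\Delta_R$-radius $1-\alpha$ of the received word; this is the algorithmic covering lemma, and it is what guarantees the output list contains the whole target list. You also implicitly assume you can take a nearest representative of each coset before invoking a distance argument; that \enquote{nearest element} step has no low-degree SoS proof, which is yet another place the partial minimizer is doing the work. Finally, the correlation-rounding/$\eta$-goodness machinery you gesture at (\enquote{conditioning/propagation over $\calO(1/\eps_2)$ rounds}) is roughly the right shape but needs to be phrased as conditioning on $\calO(s^{3d}/\eta^2)$ random right-vertex neighborhoods to drive down the average pseudocovariance, with the degree of the relaxation set to support that many conditionings.
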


Our construction relies on a quantum analog of the distance amplification procedure due to Alon, Edmonds, and
Luby~\cite{AEL95}. The  Alon--Edmonds--Luby (AEL) procedure takes a balanced bipartite graph ---
$G=(L,R,E)$ with $\abs{L}=\abs{R}=n$ and degree $d$ --- and two quantum CSS codes: an ``inner code''
$\cC$ of blocklength $d$ and an ``outer code'' $\cD$ of blocklength $n$. 
It combines these via concatenation and folding to construct a new quantum code $\cF$ with improved distance properties. Additionally, if the outer code $\cD$ is QLDPC and the degree, $d$, is a constant, then the final code $\cF$ is also QLDPC. 

 We show that this procedure yields a general method for obtaining list decodable codes, using only \emph{unique-decodability} of the starting outer code. 
 
%
%


%
\begin{theorem}[Informal version of \cref{thm:list_decoding_ael}]
 Let $\cF$ be the CSS code obtained by applying quantum AEL amplification to the outer code $\cD$ and inner code $\cC$ using an expander graph such that $\cF$ has distance $\delta \approx \delta (\cC)$. If the outer code has an efficient unique-decoder, then $\cF$ can be efficiently list decoded up to error radius $\tau \approx \mathcal{J}(\delta) > \delta/2$.
\end{theorem}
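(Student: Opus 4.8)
\medskip
\noindent\textbf{Proof plan.}
The plan is to use the CSS structure to split the task into two classical list-decoding problems, and to solve each one by a Sum-of-Squares relaxation over edge-labelings of the AEL graph that is rounded against a black-box run of the outer code's unique decoder; crucially, the relaxation never uses any property of the outer code beyond unique decodability. Since $\cF = (\fx,\fz)$ is CSS, list decoding $\cF$ from radius $\tau$ reduces to the following task for each of the two codes (we describe it for $\fx$; the case of $\fz$ is symmetric with $\dx,\cx,\fz^\perp$ replaced by $\dz,\cz,\fx^\perp$): given $\tilde z$, output a small list of coset representatives covering every $c\in\fx$ with $\abs{c-\tilde z}\le\tau n$, where two such $c$ are identified if they differ by an element of $\fz^\perp$. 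Writing $\eps = \calJ(\delta)-\tau$, the Johnson bound applied to the quantum distance $\delta$ shows this list has size $\calO_\delta(1/\eps)$, so it is enough to produce a $\poly(n)$-size list and prune it. Recall from the AEL construction that an $\fx$-codeword is an edge-labeling $X\in\F_q^E$ whose restriction to the $d$ edges at each left vertex lies in the (constant-size) inner code $\cx$ and whose induced left-vertex word lies in $\dx$, with folding the linear map collecting the $d$ edge-labels at each right vertex; the outer decoder, applied to a left-vertex word within its radius $\eta$ of a $\dx$-codeword, returns that codeword modulo $\dz^\perp$, which lifts through folding to an $\fx$-codeword modulo $\fz^\perp$.

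For the relaxation I would take variables $X_e\in\F_q$ for $e\in E$, impose the constant-degree constraints that $X$ restricted to the edges at each left vertex lies in $\cx$, and impose that the folded word agrees with $\tilde z$ on at least a $1-\tau$ fraction of right vertices, while imposing \emph{no} constraint coming from $\dx$. To reach a prescribed list codeword $c^\ast$ (with edge-labeling $X^\ast$) I would further guess a set $S$ of $\calO_\delta(1/\eps)$ right vertices together with the labels $\sigma:=X^\ast|_S$; since distinct list codewords disagree on $\ge\delta n$ folded coordinates, a random such $S$ separates $c^\ast$ from the other $\calO_\delta(1/\eps)$ list elements, and there are only $n^{\calO_\delta(1/\eps)}$ such guesses to enumerate. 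Adding the constraint ``$X$ folds to $\sigma$ on $S$'' keeps the degree-$k$ relaxation (with $k=\poly(1/\eps)$) feasible, witnessed by $X^\ast$. Given a pseudo-expectation $\tildeE$, I would round by conditioning on the labels of a further random constant-size set of edges sampled from $\tildeE$ and then sampling each left-vertex symbol independently from its marginal; since the AEL graph is a spectral expander, a number of conditionings independent of $n$ suffices (global correlation rounding) for the conditioned pseudo-distribution to be $\eps$-close on average to a product distribution over the left vertices, so the rounded left-vertex word $\hat c$ has folded agreement $\ge 1-\tau-\calO(\eps)$ with $\tilde z$.

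The technical core, and the step I expect to be the main obstacle, is to show that such a $\hat c$ lies within the outer unique-decoding radius $\eta$ of $c^\ast$ --- once this holds, running the outer decoder on $\hat c$ returns $c^\ast$ modulo $\fz^\perp$. This is a ``soft'', SoS-friendly version of the AEL distance-amplification argument: one must argue that any edge-labeling obeying the inner-code constraints, folding to $\sigma$ on $S$, and having folded agreement $\ge 1-\tau-\calO(\eps)$ with $\tilde z$ is forced --- by expander mixing, the inner code's relative distance $\delta(\cx)$, and the Johnson inequality $(1-\tau)^2>1-\delta$ --- to agree with $c^\ast$ on all but an $\calO(\eps)$ fraction of left vertices. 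Here the expander upgrades agreement from the edge level to the left-vertex level and, together with the folded-agreement constraint, rules out the ``Frankenstein'' labelings that interpolate between different list codewords (which could otherwise have comparable edge agreement with $\tilde z$ yet decode to nothing), while the guess $(S,\sigma)$ is what pins the pseudo-distribution to the single codeword $c^\ast$ rather than a convex combination of several; carrying this out at SoS degree independent of $n$, and through the folding map, is the crux. Granting it, the algorithm --- enumerate $(S,\sigma)$, solve and round the relaxation, run the outer unique decoder, retain its output when within radius $\tau$ of $\tilde z$, and union the results for $\fx$ and $\fz$ --- produces the full list in time $n^{\calO_{\eps_1}(1/\eps_2^4)}$, consistent with the stated bound; the only remaining checks are routine verifications that the AEL construction and the two classical decoders respect the CSS relations $\fz^\perp\subseteq\fx$ and $\fx^\perp\subseteq\fz$.
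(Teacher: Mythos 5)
Your proposal shares the paper's high-level architecture --- impose only the inner-code constraints in an SoS relaxation over edge-labelings, use correlation rounding to approximately productize the solution, round a left-vertex word, and feed it to the outer unique decoder --- but it diverges in two places, and in both the paper's solution is load-bearing.

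First, the covering step. You isolate a list element $c^\ast$ by enumerating a set $S$ of $\calO_\delta(1/\eps)$ right vertices together with $\sigma := c^\ast|_S$, pinning the relaxation to $\sigma$ on $S$, and asserting that a random $S$ separates $c^\ast$ from the other list elements. Separation from the finitely many other list codewords is not what is needed: to eventually apply any version of the AEL distance argument with $h = c^\ast$, you must know that the resulting pseudocodeword $\tildeEx{\cdot}$ has small $\Delta_R(\tildeEx{\cdot}, c^\ast)$. But your relaxation only constrains agreement with the received word $g$ and with $\sigma$ on a constant-size $S$, and its solution space is not confined to pseudo-mixtures of list codewords --- a pseudo-expectation concentrated on an arbitrary inner-code-respecting labeling that happens to agree with $g$ on a $1-\tau$ fraction and with $\sigma$ on $S$ is feasible but can have $\Delta_R(\cdot, c^\ast)$ as large as $\approx 2\tau$, which for $\tau$ near the Johnson bound already exceeds the code's distance. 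Triangle inequality through $g$ is too lossy, and pinning on $S$ does nothing to fix it. The paper gets the needed bound via the Covering Lemma (\cref{lem:algo_covering}): the degree-$t$ pseudocodeword minimizing $\norm{\tildeEx{\embed(\zee)}}^2$ subject to $\Delta_R(\tildeEx{\cdot}, g) < 1-\alpha-\eps$ is close in $\Delta_R$ to \emph{every} list codeword simultaneously, so no enumeration of pins is needed at all; the randomness used to separate list elements comes only from the correlation-rounding conditionings and the rounding itself, repeated $M$ times.

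Second, the distance proof. You correctly flag as ``the crux'' the implication that a pseudocodeword with small $\Delta_R(\cdot, c^\ast)$ has small $\Delta_{L,\cz^\perp}(\cdot, c^\ast)$, but you do not supply it, and this is precisely the paper's main technical contribution. The obstruction is not merely ``carry the classical AEL argument through folding at constant degree'': because quantum distance is distance between cosets of $\cz^\perp$, the natural argument would first replace $c^\ast$ by the closest element of its coset, a global coset-minimization that SoS provably cannot perform (the Hopkins--Lin lower bound rules it out even at exponential degree). The paper circumvents this with the \emph{partial minimizer} $\partmin(\cdot, h)$ --- a $d$-local, coset-preserving, right-monotone map that replaces the global coset minimization by a per-left-vertex one --- which makes the implication a degree-$\calO(d)$ SoS statement (\cref{lem:sos_ael_distance}). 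Without this device a direct transcription of the classical AEL distance proof into SoS fails. In short, the proposal is a sound outline of the surrounding machinery, but it replaces the covering lemma with an unjustified pinning scheme and leaves open the partial-minimizer distance proof, which are exactly the two new ideas the theorem rests on.
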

%
%
%
%
 We obtain \cref{thm:intro_main} by picking the outer codes to be the recent constructions of unique-decodable QLDPC codes~\cite{LZ23:decodable}. We choose the (constant-sized) inner code, $\cC$, to be an optimal CSS code matching the Singleton bound, and get $\delta \approx \delta(\cC) \approx 1/2$ and $\tau \approx \mathcal{J}(\delta) \approx 0.293 > 0.25$.

\subsection{Technical Overview}

Our results are based on the framework developed by~\cite{JST23} for list decoding of (classical) codes up to
the Johnson bound, using \enquote{covering lemmas} and \enquote{proofs of distance} that can be captured by convex relaxations in the Sum-of-Squares (SoS) hierarchy of semidefinite programs. 
The covering lemma ensures that the solution to the SoS convex relaxation captures sufficient
information about each element of the list. Given the covering, the list elements are then isolated
using the fact that different codewords are sufficiently far from each other.
The SoS framework requires expressing the proof of distance for the code being decoded, in terms of inequalities obtained via non-negativity of sum-of-squares of low-degree polynomials (in formal variables corresponding to codeword symbols). 

For the case of codes obtained via the AEL distance amplification procedure, their framework gives a reduction --- using convex relaxations in the SoS hierarchy --- from the task of list decoding the resulting codes to that of unique-decoding the outer code.
The reduction relies on the observation that the distance proofs for codes constructed via AEL can
be expressed as the sum of squares of low-degree polynomials. 
This is because AEL distance amplification is carried out using constant-degree expander graphs, and distance proofs are based on spectral inequalities (which correspond to quadratic forms of positive semidefinite matrices arising from the underlying graph).
Our result can be viewed as a quantum analog of this reduction.

\paragraph{{Quantum Extensions and challenges}} Extending the above framework for quantum codes faces some important bottlenecks. There are two key aspects to this generalization: one is that the notion of distance itself is different for quantum codes, and the second is that we need a proof that is expressible as sum-of-squares of low-degree polynomials.

\begin{description}
	\item[Quantum Distance] For a classical code $C \subseteq \F_q^n$, the distance between $u,v \in C$ is defined as the Hamming distance $\Delta(u,v)$ between these vectors, which is easily expressible as a low-degree polynomial in (a real-valued embedding of) the coordinates of $u$ and $v$. 
For a CSS quantum code $\cF = (\fx,\fz)$ the distance between (say) $u,v \in \fx$ is defined as the distance between \emph{cosets} of $\fz^{\perp}$ containing $u$ and $v$ i.e. $\min_{w \in {\cF^\perp_Z}} \Delta(u, v-w)$. 

	\item[Low-Degree Proof] A common technique in proving distance between $u,v \in \fx$ that belong to different cosets of $\fz^\perp$ is to replace $u$ by $u^*$, where $u^*$ is the closest element of $u+\fz^\perp$ to $v$. This optimality of $u^*$ with respect to distance from $v$ gives it extra structural properties, which are then exploited to prove distance $\Delta(u^*,v)=0$ or $\Delta(u^*,v) > \delta $. When working with SoS, however, the trivial statement \enquote{$\Delta(u,v) \geq \Delta(u^*,v)$} need not have a low-degree proof!

\end{description}

The change in distance notions actually presents a significant barrier to implementing the ``proofs to algorithms" approach. In fact, an important SoS lower bound by Hopkins and Lin~\cite{HL22} shows that even the \emph{definition} of quantum distance cannot be captured by SoS, and even exponential-sized semidefinite programs from the SoS hierarchy cannot distinguish codewords at distance zero (in the same coset) from those that are far apart! This is overcome in the present work by a \emph{new proof} of distance amplification for quantum codes via the AEL procedure, which lets us separate ``local" vs "global" aspects of the distance for the resulting codes.
%

Our main technical contribution is thus to give a low-degree proof of the distance bound for the quantum version of the AEL procedure. The other known proof of quantum AEL~\cite{BGG22} uses the unique decoding radius and avoids dealing with $\fz^{\perp}$. While it is an easy argument that the unique-decoding radius is half of the distance, this fact is not necessarily captured as a statement in terms of low-degree polynomials.

In contrast, our proof uses the observation that the space $\fz^{\perp}$, for codes obtained by AEL amplification, has additional structure and can be decomposed as $\fz^{\perp} = \cF_{\out} + \cF_{\inn}$. The distance minimization is then done in two steps corresponding to the two spaces. 


We define the notion of a \textit{partial minimizer} by utilizing decomposition of $\cF_{\inn}$ into local subspaces, $\cF_{\inn} = \F_q^n \otimes \cz^{\perp}$. For any fixed pair $u,v \in \fx$, the \emph{partial minimizer} maps $u$ to $u' \in u + \cF_{\inn}$, instead of the optimal $u^*$ which is closest to $v$. This $u'$ is not necessarily the closest to $v$, but, crucially for us, enables a low-degree proof for the statement, \enquote{$\Delta(u,v) \geq \Delta(u',v)$.} 



The above SoS solution can then be "rounded" (via standard techniques) to obtain a vector within the \emph{unique} decoding radius. We augment the SoS algorithm with the unique decoding algorithm for the outer code, to handle the minimization over $\cF_\out$, which presented a barrier for the SoS algorithm itself.

\subsection{Related work} 
Our work is closely related to that of Bergamaschi, Golowich, and Gunn~\cite{BGG22} which considered AEL amplification in the context of quantum codes. They give multiple code constructions that are either (capacity-achieving) list decodable quantum codes or QLDPC codes (with optimal rate-distance tradeoffs), but not both simultaneously.

Their results on distance amplification yields QLDPC codes nearing the Singleton bound, that are linear-time unique decodable. However, these codes can only be list decoded with access to a classical side channel, where one can transfer classical bits without errors. To obtain list decodable codes, they give another construction by applying the AEL procedure using (CSS codes obtained from) folded Reed--Solomon codes~\cite{GuruswamiR08} as outer codes, which are known to have optimal list decoding properties as a function of the rate. However, since folded Reed--Solomon codes are not LDPC, the list decodable quantum codes thus obtained are not QLDPC codes.
%

%

\paragraph{Proofs to Algorithms paradigm}Spectral algorithms and semidefinite programming for list decoding have been used since the works of Guruswami and Indyk~\cite{GI03} and Dinur \etal~\cite{DHKLNTS19}. More recently, the ``proofs to algorithms'' paradigm has been used in the application of the sum-of-squares method to several statistical and combinatorial problems~\cite{FKP19}. In the context of list decoding, this paradigm was used by Richelson and Roy~\cite{RR23} for decoding the $\epsilon$-balanced code construction of Ta-Shma~\cite{TS17} via an SoS implementation of the distance proof. A similar framework was used by~\cite{JST23} to list decode several classical codes based on spectral expanders, including the AEL distance amplification procedure.

\paragraph{Lower bounds}
Another important line of work is on understanding the limitations of SoS-based algorithms. The inability of SoS to reason about vector spaces over finite fields is a property used to
construct lower bounds for the SoS hierarchy, ruling out even subexponential time algorithms
within this hierarchy, for several problems~\cite{Grigoriev01, Schoenebeck08, Tulsiani09, DFHT21, HL22}. In particular, the recent result of Hopkins and Lin~\cite{HL22} gives an explicit lower bound instance using an expander-based QLDPC code. While their work shows that SoS relaxations cannot distinguish between $\fx$ and $\fz^{\perp}$ for some quantum code $\cF = (\fx, \fz)$, our work can be viewed as proving that this is indeed possible with access to a decoder for the outer code, when $\cF$ is obtained via the AEL construction.

%
%
%
\paragraph{Discussion and Open Problems.}
Our work gives a construction of quantum LDPC codes decodable up to the Johnson bound, and it is an intriguing open problem to construct LDPC codes decodable beyond this threshold, even in the case of classical codes.
%
One concrete direction arising from the present work is to find a \emph{better reduction} between the decodability of codes constructed by amplification procedures, such as AEL, and the starting points of such constructions. 
The current reduction using the SoS hierarchy relies on the expansion used in the amplification procedure and yields decodability up to the Johnson bound. However, it is conceivable that a reduction using additional structure from (a variation of) the amplification procedure can go beyond this threshold.

We also note that while the running times of the algorithms obtained in this work are polynomial
(when $q,d$ are taken to be constants), they are very far from the highly efficient linear-time
unique decoding algorithms that are known for classical and quantum LDPC codes. However, we hope
that understanding the structures that facilitate list decoding via SoS can point the way towards
fast spectral algorithms relying on these structures, as has been the case in the past  for statistical problems such as tensor decomposition~\cite{HSSS16},  and for decoding some classical codes based on expanders~\cite{JST21}. 

\paragraph{Outline}  We start by defining the task of list decoding for vector space CSS codes. We also define the notion of duality preserving maps, which we then use in~\cref{sec:ael} to formalize the central code operations--- \textit{code concatenation} and AEL \textit{amplification}--- in an explicit linear algebraic way. \cref{sec:distance,sec:sos_distance} form the key contribution of our work in which we give a new proof of quantum AEL distance amplification. The main idea is to introduce the \textit{partial minimizer} using which we give a distance proof (\cref{sec:distance}). After setting up some additional preliminaries in \cref{sec:sos_prelims}, we show how our distance proof easily generalizes to the SoS setting~\cref{sec:sos_distance}).
Finally, in~\cref{sec:decoding}, we give a polynomial-time list decoding algorithm by plugging our SoS proof of distance into the machinery of~\cite{JST23}.

\section{Preliminaries}
\label{sec:prelims}

Throughout the paper, we will work with $\F_q$-vector spaces with the standard basis (unless specified otherwise). The spaces will be equipped with the Hamming metric and the canonical bilinear form $\angles{u,v} = \sum_i u_iv_i$, with respect to this basis. We define, $V^\perp = \Set{w\mid \ip{v}{w} = 0,\, \forall\, v \in V}$. We will also work with $\R$-vector spaces equipped with the expectation inner product and norm.

\subsection{Quantum CSS codes and list decoding}
%
%
Calderbank--Shor, and Steane (independently) showed that a pair of subspaces over $\F_q$ define a
quantum code if they satisfy a certain orthogonality condition. This construction is known as the CSS construction, and CSS codes form a subclass of \textit{stabilizer codes}.    
 



\begin{definition}[CSS Codes] Let $\cx,\cz \subseteq \F_q^n$ be $\F_q$-linear subspaces such that $\cz^\perp \subseteq \cx$. Then, $\cC = (\cx, \cz)$ defines a $[[n,k,\delta n]]_{q}$ quantum code, where 
\[ 
k =  \dim (\cx) - \dim(\cz^\perp), \;\; \delta = \frac{1}{n} \min \braces[\big]{ \abs*{v}\;\big\vert\; v \in (\cx\setminus \cz^\perp) \cup (\cz\setminus \cx^\perp)}.
\]
 The CSS code is \textit{low-density parity check} (LDPC) if there exist (row and column)-sparse parity check matrices, $\hx, \hz$, over $\F_q$ such that $\cx = \ker \hx$ and $\cz = \ker \hz$.
\end{definition}

 

\paragraph{Vector space CSS codes}
The works~\cite{BGG22, GG23} show that the CSS construction can be generalized to the setup where the $\F_q$-linear subspaces have coefficients in a vector space $\F_q^b$.  This generalization enables the extension of the classical coding theoretic operation of folding to the quantum setup. We will work with this general definition throughout the paper. 

\paragraph{Folding} \textit{Folding} is an element-wise syntactic operation in which vectors in $\F_q^{bn}$ are viewed as vectors in $\parens*{\F_q^b}^n$. Formally, let $V \subseteq \F_q^{bn}$ be a vector space over $\F_q$. For a vector $v \in V$, denote \[\fold(v) = (v^{(1)}, \cdots, v^{(n)}) \in \parens*{\F_q^b}^n \text{ where } v^{(i)} = (v_{(i-1)b+1},\cdots, v_{ib})\in \F_q^b.\] 


\begin{definition}[Vector Space CSS code] Let $d$ be a positive integer and let $\cC = (\cx,\cz)$ be a $[[nb,k,\delta nb]]_q$ CSS code. Then, $\fold(\cC)$ defines a $[[n,k/b,\delta' n]]_{q,b}$ vector space CSS code wherein, 
 \[ 
\delta' = \frac{1}{n} \min \braces[\big]{ \abs*{\fold(v)}\;\big\lvert\; v \in (\cx\setminus \cz^\perp) \cup (\cz\setminus \cx^\perp)}.
\]
The weight, $\abs*{c}$, now is the Hamming weight over the alphabet $q^b$, \ie $|\fold(v)| = \abs{\braces{i \mid v^{(i)} \neq 0}}$. 	
\end{definition}

We will use the notation $[[n,k/b,\delta' n]]_{q,b}$ to specify that the code consists of $\F_q$-linear spaces that have been folded in blocks of size $b$. Unless specified otherwise, the folding will be done sequentially to the coordinates according to the fixed basis. {We will use the shortened notation $[[n,k]]$ when we do not require to address the distance of the code.  We will drop the notation $\fold(\cC)$ when we explicitly mention that a code is a vector space CSS code.}

\begin{remark} 	
 The \enquote{folded dimension} changes to $k/b$ to make it consistent with the classical notion. However,  both dimension and blocklength change by the same factor $b$, so that the rate is unchanged. Moreover, the dimensions as $\F_q$-subspaces remain unchanged after folding. 
\end{remark}

\paragraph{List decoding} We now formalize the notion of list decoding for quantum CSS codes and the folded codes. This is inspired by the classical definition but there are a couple of crucial changes in the quantum setup, (i) the input is no longer a corrupted codeword but a \textit{syndrome}, and (ii) the output is required to be a list \textit{containing} the list of possible errors. The first constraint is necessitated by the no-cloning theorem whereas the second relaxation is required as pruning the list can be hard, unlike in the classical case.

\begin{definition}[List of codewords]
Let $\cC = (\cx,\cz)$ be a vector space CSS code over $\parens*{\F_q^d}^n$. Let $B(g,\tau)$ denote a ball of fractional radius $\tau$, in the Hamming metric over $q^d$, around a vector $g$. For any pair of vectors $\subs{g}{X}, \subs{g}{Z} \in \parens*{\F_q^d}^n$, we define the following lists as lists of cosets of codewords, 
\begin{align*}
\calL_X (\subs{g}{X}, \tau) ~&=~ \braces[\Big]{\subs{h}{X} + \cz^\perp\, \mid \,\subs{h}{X} \in \cx,\,\; B(\subs{g}{X},\tau) \cap \subs{h}{X} + \cz^\perp \neq \emptyset},\\
\calL_Z (\subs{g}{Z}, \tau) ~&=~ \braces[\Big]{\subs{h}{Z} + \cx^\perp\, \mid \,\subs{h}{Z} \in \cx,\,\; B(\subs{g}{Z},\tau) \cap \subs{h}{Z} + \cx^\perp \neq \emptyset},\\
\calL (\subs{g}{X}, \subs{g}{Z},\tau) ~&=~ \calL_X (\subs{g}{X},\tau)\, \times\, \calL_Z (\subs{g}{Z},\tau).
\end{align*}
\end{definition}


We also define $\calL_e (\subs{g}{X}, \subs{g}{Z},\tau)$ as a list of cosets of errors, which is just the codeword list $\calL(\subs{g}{X}, \subs{g}{Z},\tau)$ shifted by $\subs{g}{X}$ and $\subs{g}{Z}$ respectively. When decoding from syndromes, all information about the original codeword is lost since the syndrome only depends on the error pattern. Therefore, one can only hope to output a list of errors rather than a list of codewords. In other words, since syndrome is invariant to translation by codewords, so should the output of a list decoding algorithm, and $\calL_e (\subs{g}{X}, \subs{g}{Z},\tau)$ is the translation-invariant version of $\calL (\subs{g}{X}, \subs{g}{Z},\tau)$.
\begin{align*}
\calL_e (\subs{g}{X}, \subs{g}{Z},\tau) ~&=~ \braces[\Big]{\parens[\big]{\subs{h}{X} -\subs{g}{X}  + \cz^\perp, \subs{h}{Z} - \subs{g}{Z}+ \cx^\perp}\, \big\vert \, \parens[\big]{\subs{h}{X} + \cz^\perp, \subs{h}{Z} +\cx^\perp} \in \calL (\subs{g}{X}, \subs{g}{Z},\tau)}\\[3pt]
~&=~ \parens*{\calL_X (\subs{g}{X},\tau) -\subs{g}{X}} \times \parens*{\calL_Z (\subs{g}{Z},\tau)-\subs{g}{Z}}\\[4pt]
~&=~\calL_e (\subs{g}{X}', \subs{g}{Z}',\tau) \qquad \text{for any}\qquad  \subs{g}{X}- \subs{g}{X}' \in \cx, \, \subs{g}{Z}- \subs{g}{Z}' \in \cz.
\end{align*} 
The last equality follows as the codes $\cx, \cz$ are linear and the Hamming metric is translation-invariant.


\begin{definition}[List Decodable vector space CSS codes, also in~\cite{BGG22}]
A quantum CSS code $\cC$ is $(\tau,L)$-\textit{list decodable} if for every $(\subs{g}{X}, \subs{g}{Z})$, the list size is bounded, \ie $\abs{\calL (\subs{g}{X}, \subs{g}{Z},\tau)} \leq L$.

 Fix a pair of parity check matrices $(\hx,\hz)$. We say that a code is \textit{efficiently list decodable} upto fractional radius $\tau$ if given $(\hx\subs{g}{X}, \hz\subs{g}{Z})$ such that $\subs{g}{X},\subs{g}{Z} \in B(0,\tau)$, there exists a $\poly(n)$-time algorithm that outputs a list that contains $\calL_e(\subs{g}{X}, \subs{g}{Z},\tau)$.  
\end{definition}

 \begin{observation} Assume that for any $(\subs{g}{X}, \subs{g}{Z})$, one can output the lists $\calL_X (\subs{g}{X}, \tau)$ and $\calL_Z (\subs{g}{Z}, \tau)$ in $\poly(n)$-time.  Then, the quantum CSS code is efficiently list decodable. 
 \end{observation}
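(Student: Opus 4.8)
The plan is to reduce the syndrome decoding task to the assumed ability to enumerate the one‑sided lists $\calL_X$ and $\calL_Z$, using the translation‑invariance of $\calL_e$ recorded just above. Suppose the algorithm receives the syndromes $\hx\subs{g}{X}$ and $\hz\subs{g}{Z}$ for some (unknown) $\subs{g}{X},\subs{g}{Z}\in B(0,\tau)$. First I would run Gaussian elimination over $\F_q$ to compute, in $\poly(n)$ time, arbitrary particular preimages $\subs{g}{X}'$ and $\subs{g}{Z}'$ with $\hx\subs{g}{X}'=\hx\subs{g}{X}$ and $\hz\subs{g}{Z}'=\hz\subs{g}{Z}$; such preimages exist because the given syndromes lie, by hypothesis, in the images of $\hx$ and $\hz$. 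Note that $\subs{g}{X}'$ need not lie in $B(0,\tau)$, but this is immaterial since the hypothesis provides the enumeration procedures for an arbitrary input vector.

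The key observation is that $\subs{g}{X}-\subs{g}{X}'\in\ker\hx=\cx$ and $\subs{g}{Z}-\subs{g}{Z}'\in\ker\hz=\cz$, so by the translation‑invariance identity for $\calL_e$ (the last displayed chain of equalities before the definition of list decodability),
\[
\calL_e(\subs{g}{X}, \subs{g}{Z}, \tau) ~=~ \calL_e(\subs{g}{X}', \subs{g}{Z}', \tau) ~=~ \parens*{\calL_X(\subs{g}{X}', \tau) - \subs{g}{X}'} \times \parens*{\calL_Z(\subs{g}{Z}', \tau) - \subs{g}{Z}'}.
\]
Since $\subs{g}{X}'$ and $\subs{g}{Z}'$ are now explicit vectors in $\parens*{\F_q^d}^n$, the assumed $\poly(n)$‑time procedures output $\calL_X(\subs{g}{X}', \tau)$ and $\calL_Z(\subs{g}{Z}', \tau)$; in particular these lists have $\poly(n)$ size. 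The algorithm then shifts each list by $-\subs{g}{X}'$, respectively $-\subs{g}{Z}'$, forms the Cartesian product, and outputs it. By the display above this output is exactly $\calL_e(\subs{g}{X}, \subs{g}{Z}, \tau)$ (so in fact equal to, not merely containing, the desired list), and every step runs in $\poly(n)$ time, establishing efficient list decodability.

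I do not expect a genuine obstacle here: the substance of the statement is already contained in the translation‑invariance of $\calL_e$ under $\cx$ and $\cz$, combined with the elementary facts that a syndrome pins down the relevant coset and that a coset representative can be extracted by linear algebra. The only subtlety worth flagging is that the decoder sees only the syndromes and not $\subs{g}{X},\subs{g}{Z}$ themselves, which is precisely why one passes to representatives $\subs{g}{X}',\subs{g}{Z}'$ and why the natural object to output is the translation‑invariant error list $\calL_e$ rather than the codeword list $\calL$.
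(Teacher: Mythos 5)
Your proof is correct and takes essentially the same route as the paper's: solve the linear system by Gaussian elimination to obtain representatives $\subs{g}{X}', \subs{g}{Z}'$ in the right cosets of $\cx,\cz$, apply the assumed one-sided enumeration to those representatives, and then invoke the translation-invariance identity for $\calL_e$ to recover the error list. The only additions beyond the paper's two-line argument are the (useful) clarifications that $\subs{g}{X}'$ need not itself lie in $B(0,\tau)$ and that the output is in fact equal to, not merely a superset of, $\calL_e$.
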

 \begin{proof}
Given, $(\hx\subs{g}{X}, \hz\subs{g}{Z})$ one use Gaussian elimination to compute $(\subs{g}{X}', \subs{g}{Z}')$ such that $\subs{g}{X} - \subs{g}{X}' \in \cx$ and $\subs{g}{Z} - \subs{g}{Z}' \in \cz$. By the assumption we can compute $\calL_X (\subs{g}{X}',\tau)$ and $\calL_Z (\subs{g}{Z}',\tau)$, and therefore output, 
$\calL_e (\subs{g}{X}, \subs{g}{Z},\tau) = \parens[Bigg]{\calL_X (\subs{g}{X}',\tau) -\subs{g}{X}'} \times \parens[Bigg]{\calL_Z (\subs{g}{Z}',\tau)-\subs{g}{Z}'}$.
 \end{proof} 
 
 In summary, we can reduce the task of list decoding the classical codes $\fold(\ex)$ and $\fold(\ez)$ but \textit{upto cosets}. Moreover, our decoder will be symmetric and henceforth we will focus on the task of $X$-decoding.

 \begin{summary}[List decoding vector space codes]\label{sum:list_dec_vector_space}
 Let $\fold(\cE)$ be a vector space CSS code over $\parens*{\F_q^d}^n$ with given parity-check matrices $(\hx,\hz)$. The task of efficient list decoding $\fold(\cE)$ upto radius $\tau$ reduces to the following two tasks,
 \begin{itemize}
 	\item \textup{\textbf{X-decoding}}: Given as input $\subs{g}{X} \in \parens*{\F_q^d}^n$, output a list, $\calL_X'$ of cosets of codewords such that, 
 	\[ \calL_X (\subs{g}{X}, \tau)~\subseteq~ \calL_X' ~\subseteq~ \fold(\ex)/\fold(\ez^\perp).\]   
 	\item \textup{\textbf{Z-decoding}}: Given as input $\subs{g}{Z} \in \parens*{\F_q^d}^n$, output a list, $\calL_Z'$ of cosets of codewords such that,
 	\[ \calL_Z (\subs{g}{Z}, \tau)~\subseteq~ \calL_Z' ~\subseteq~ \fold(\ez)/\fold(\ex^\perp).\] 
 \end{itemize}
 	
 \end{summary}

\subsection{Duality preserving maps} 

To generalize the notion of concatenation to quantum CSS codes, we will need the notion of duality-preserving maps. These are needed to properly define the concatenated code such that the orthogonality constraint, $\cz^\perp\subseteq \cx$, is satisfied, thereby defining a quantum CSS code. This has been used in earlier works, for example, see~\cite{Ham08}. 

{A bilinear map $\ip{\cdot}{\cdot} :V \times W \to \F_q $ over $\F_q$-vector spaces $V,W$, is \textit{non-degenerate} if for any non-zero $x \in V$, the map $\ip{x}{\cdot} : W\to \F_q$ is not identically zero. Similarly,  for a non-zero $y \in W$, the map $\ip{\cdot}{y} : V\to \F_q$ is not the zero map.
}
\begin{definition}[Dual Systems and Basis]\label{def:dual}
	Let $V,W$ be $\F_q$-vector spaces of equal dimension. Let $\ip{\cdot}{\cdot}$ be a non-degenerate bilinear map $V \times W \to \F_q$. Then, $(V,W,\ip{\cdot}{\cdot})$ defines a \textit{dual system}. A basis $\{v_1,\cdots, v_b\}$ of $V$, and $\{w_1,\cdots, w_b\}$ of $W$ is said to be \textit{dual} if $\ip{v_i}{w_j} = 
	\delta_{ij}$ for all $i,j \in [b]$.\end{definition}

{The spaces $V,W$ are called \textit{dual spaces} as the bilinear map gives an isomorphism $W \to V^*$ defined as $w \mapsto \ip{w}{\cdot}$, which also proves the existence of such basis. }This map is injective as the bilinear map is non-degenerate. Now, one can use the canonical dual basis of $V^*$. Using a dual basis, one can construct \textit{duality-preserving maps} which are what we will need. 
%
\begin{definition}[Duality preserving map]\label{def:dual_maps}
	Let $(V_1,V_2, \ip{\cdot}{\cdot}_V)$ and $(W_1,W_2, \ip{\cdot}{\cdot}_W)$ be two dual systems. A pair of linear maps $(\phi_1,\phi_2)$ where $\phi_i: V_i \to W_i $ are \textit{duality preserving} if,
	\begin{equation}
   \ip{u}{v}_V = \ip{\phi_1(u)}{\phi_2(v)}_{W}  \; \; \forall\, u,v \in V.
   \end{equation}
\end{definition}

\begin{claim}\label{claim:dual_map}
	Let $(V_1, V_2)$ and be $(W_1, W_2)$ dual systems along with their pairs of dual bases. Then, $\phi_i: V_i \mapsto W_i$ that acts as identity with respect to these dual bases is a duality-preserving map.  
\end{claim}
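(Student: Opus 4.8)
## Proof Plan for Claim \ref{claim:dual_map}

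The plan is to unwind the definitions and reduce the claim to a bilinearity computation. Fix dual bases $\{v_1^{(1)},\dots,v_b^{(1)}\}$ of $V_1$ and $\{v_1^{(2)},\dots,v_b^{(2)}\}$ of $V_2$ (so that $\ip{v_i^{(1)}}{v_j^{(2)}}_V = \delta_{ij}$), and likewise dual bases $\{w_1^{(1)},\dots,w_b^{(1)}\}$ of $W_1$ and $\{w_1^{(2)},\dots,w_b^{(2)}\}$ of $W_2$ with $\ip{w_i^{(1)}}{w_j^{(2)}}_W = \delta_{ij}$. The map $\phi_1 \from V_1 \to W_1$ is defined by $\phi_1(v_i^{(1)}) = w_i^{(1)}$ and extended $\F_q$-linearly, and similarly $\phi_2(v_j^{(2)}) = w_j^{(2)}$. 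These are well-defined linear isomorphisms since they send a basis to a basis.

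First I would verify the duality-preserving identity on basis elements: for any $i,j \in [b]$, we have $\ip{\phi_1(v_i^{(1)})}{\phi_2(v_j^{(2)})}_W = \ip{w_i^{(1)}}{w_j^{(2)}}_W = \delta_{ij} = \ip{v_i^{(1)}}{v_j^{(2)}}_V$, using the defining property of the two pairs of dual bases. Then I would extend to arbitrary $u \in V_1$ and $v \in V_2$ by writing $u = \sum_i a_i v_i^{(1)}$ and $v = \sum_j b_j v_j^{(2)}$ with $a_i, b_j \in \F_q$, and expanding both sides using bilinearity of $\ip{\cdot}{\cdot}_W$ (on the left, after pulling the linear maps $\phi_1,\phi_2$ through the sums) and bilinearity of $\ip{\cdot}{\cdot}_V$ (on the right):
\[
\ip{\phi_1(u)}{\phi_2(v)}_W = \sum_{i,j} a_i b_j \ip{w_i^{(1)}}{w_j^{(2)}}_W = \sum_{i,j} a_i b_j \delta_{ij} = \sum_{i,j} a_i b_j \ip{v_i^{(1)}}{v_j^{(2)}}_V = \ip{u}{v}_V.
\]
This matches Definition \ref{def:dual_maps} (with the roles of $V_1, V_2$ playing $V_i$ and $W_1, W_2$ playing $W_i$), so $(\phi_1,\phi_2)$ is duality preserving.

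There is essentially no obstacle here — the only thing to be a little careful about is bookkeeping the two separate bilinear forms $\ip{\cdot}{\cdot}_V$ and $\ip{\cdot}{\cdot}_W$ and the fact that $\phi_1,\phi_2$ act on the two different spaces of each dual system, rather than conflating them. One should also note in passing that the existence of dual bases for any dual system was already established in the discussion following Definition \ref{def:dual} (via the isomorphism $W \to V^*$ and the canonical dual basis), so the hypothesis of the claim is non-vacuous. The argument works verbatim over any field, and in particular over $\F_q$.
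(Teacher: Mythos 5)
Your proof is correct and follows essentially the same route as the paper's: check the duality-preserving identity on pairs of basis vectors (where it reduces to $\delta_{ij}=\delta_{ij}$) and then appeal to bilinearity to extend to all of $V_1\times V_2$. The only difference is that you write out the bilinear expansion explicitly, whereas the paper simply notes that bilinearity reduces the check to basis vectors.
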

\begin{proof}
Since the condition is bilinear, it suffices to prove it for any pair of basis vectors of $V_1, V_2$. 
\begin{equation*}
   \ip{u_i}{v_j}_V = \delta_{ij} =    \ip{x_i}{y_j}_V = \ip{\phi_1(u_i)}{\phi_2(y_j)}_{W}.\qedhere
   \end{equation*} 
\end{proof}


{ We now define two different dual systems we work with. Both use the canonical bilinear form over $\F_q^n$, albeit for different $n$. The first is the space $(\F_q^b,\F_q^b)$ which forms a dual system with the elementary basis as a dual basis. The second system will be subspaces of a CSS code. We now prove that this second system forms a dual system.
}


   \begin{lemma}\label{lem:compl}Let $\cC = (\cx, \cz)$ be a CSS code over $\F_q^d$ of dimension $k$, and let $\cx = \subs{W}{X} \oplus \cz^{\perp}$ and $\cz = \subs{W}{Z}  \oplus \cx^{\perp}$ respectively. Then, the canonical bilinear form over $\F_q$ is non-degenerate over $\subs{W}{X} \,\times \,\subs{W}{Z}$. 
 Therefore, there exists a pair of $\F_q$--linear isomorphisms $\phix: \F_{q}^{b} \to W_X$, $\phiz: \F_{q}^{b} \to W_Z$ such that, 
\[
\ip{x}{y}_{\F_q^b} = \ip{\phix(x)}{\phiz(y)}_{\F_q^d}  \; \; \forall\, x,y \in \F_{q^b} \mper
\]
\end{lemma}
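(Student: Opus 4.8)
The plan is to prove the two assertions of \cref{lem:compl} in sequence: first, that the canonical bilinear form restricted to $\subs{W}{X} \times \subs{W}{Z}$ is non-degenerate; and second, that this non-degeneracy produces the desired pair of isomorphisms $\phix, \phiz$ satisfying the stated compatibility. The second part will follow almost immediately from \cref{def:dual}, \cref{claim:dual_map}, and the surrounding discussion, since a non-degenerate bilinear form between spaces of equal dimension $b$ is exactly a dual system, which then admits dual bases, and $\phix, \phiz$ can be taken to be the coordinate isomorphisms with respect to those dual bases (which act as the identity, hence are duality-preserving by \cref{claim:dual_map}). So the real content is the first part, the non-degeneracy claim.

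For the non-degeneracy, the key observations are dimension counts and the orthogonality defining a CSS code. First I would record that $\dim \subs{W}{X} = \dim \cx - \dim \cz^\perp = k = \dim \cz - \dim \cx^\perp = \dim \subs{W}{Z}$, so the two spaces have equal dimension $b$ (using the CSS relation $\dim \cx + \dim \cz = d + k$, or directly $\dim \cz^\perp = d - \dim \cz$ and $\dim \cx^\perp = d - \dim \cx$). Then I would argue non-degeneracy on the $\subs{W}{X}$ side: suppose $x \in \subs{W}{X}$ satisfies $\ip{x}{y} = 0$ for all $y \in \subs{W}{Z}$. Since $\cz = \subs{W}{Z} \oplus \cx^\perp$ and $x \in \cx$ (as $\subs{W}{X} \subseteq \cx$), $x$ is orthogonal to all of $\cx^\perp$ automatically — wait, more carefully: $x \in \cx$ means $\ip{x}{w} = 0$ for all $w \in \cx^\perp$, so combined with the hypothesis $x \perp \subs{W}{Z}$ we get $x \perp \cz$, i.e. $x \in \cz^\perp$. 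But $x \in \subs{W}{X}$ and $\subs{W}{X} \cap \cz^\perp = \{0\}$ by the directness of $\cx = \subs{W}{X} \oplus \cz^\perp$, so $x = 0$. The $\subs{W}{Z}$ side is symmetric, swapping the roles of $X$ and $Z$ and using $\cx = \subs{W}{X} \oplus \cz^\perp$ together with $\cz \subseteq$ the perp of $\cx^\perp$... I should double check the symmetric direction uses $\cz^\perp \subseteq \cx$ in the form $\cx^\perp \subseteq \cz$, which the paper notes holds.

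The main (and only real) obstacle is making sure the chain of inclusions and the directness of the decompositions are applied in exactly the right order — in particular, that I correctly use "$x \in \cx$" (not $\cz^\perp$) to get orthogonality to $\cx^\perp$, and then "$x \in \subs{W}{X}$" to invoke the direct-sum triviality of the intersection with $\cz^\perp$. Once non-degeneracy is established, I would invoke the preceding paragraph: $(\subs{W}{X}, \subs{W}{Z}, \ip{\cdot}{\cdot})$ is a dual system, so it has a pair of dual bases $\{v_i\}, \{w_i\}$; letting $\phix : \F_q^b \to \subs{W}{X}$ send the $i$-th elementary basis vector $e_i$ to $v_i$ and $\phiz : \F_q^b \to \subs{W}{Z}$ send $e_i$ to $w_i$, these are $\F_q$-linear isomorphisms, and since $\ip{e_i}{e_j}_{\F_q^b} = \delta_{ij} = \ip{v_i}{w_j}_{\F_q^d}$, bilinearity gives $\ip{x}{y}_{\F_q^b} = \ip{\phix(x)}{\phiz(y)}_{\F_q^d}$ for all $x, y$ — this last step being exactly the computation in the proof of \cref{claim:dual_map}. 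I would then remark that the choice of $\phix, \phiz$ is not canonical (it depends on the chosen dual bases), which is fine for the applications, and note for later use that one is free to fix any particular such pair.
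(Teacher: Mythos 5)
Your proposal is correct and takes essentially the same approach as the paper: prove non-degeneracy of the canonical form on $W_X \times W_Z$ by the same direct-sum/orthogonality argument (you verify the $W_X$ side, the paper verifies the symmetric $W_Z$ side), then invoke the dual-basis construction via \cref{claim:dual_map} to produce $\phix, \phiz$. The explicit dimension count you add is a harmless extra, since non-degeneracy in both directions already forces $\dim W_X = \dim W_Z$.
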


\begin{proof} The space $(\F_q^k,\F_q^k)$ with the canonical bilinear form is a dual system with the elementary basis as a dual basis.  Therefore, if $(\subs{W}{X}, \subs{W}{Z} )$ equipped with the canonical bilinear form is a dual system, then~\cref{claim:dual_map} yields a duality-preserving isomorphism as needed. To show this, we only need to prove the non-degeneracy of the canonical form, and we will do so for one component as the argument is symmetric. 

Let $v \in \subs{W}{Z}$ be such that $\ip{u}{v} = 0$ for all $u \in \subs{W}{X} $. Since, $v \in  \subs{W}{Z} \subseteq \cz$, we have $\ip{w}{v} = 0$ for all $w \in \cz^\perp$. But, $\cx = \subs{W}{X} \oplus \cz^\perp$ and therefore, $v \in \cx^\perp$. Since, $\subs{W}{Z} \cap \cx^\perp = \{0\}$, $v$ must be $0$. 
\end{proof}

\paragraph{Changing base field} We now see that one can view a CSS code over $\F_{q^b}$ as a vector space CSS code over $\F_q^b$. To do this we first 
equip $\F_{q^b}$ with the \textit{trace form}.

  \begin{definition}[Trace Map]
	Let $\F_{q^b}$ be a degree $k$-extension of $\F_q$. The \textit{trace map}\footnote{We will drop the subscript as we will not work with multiple extensions.} is defined as 
	\[\mathsf{Tr}_{\,\F_{q^b}/\F_q} :\F_{q^b} \to \F_q, \;\;  x \mapsto x + x^q + x^{q^2} + \cdots + x^{q^b}.\]
{	The trace map  is $\F_q$-linear as for any $a \in \F_q$, $a^q = a$. }
\end{definition} 

It is a well-known fact that the trace map defines a non-degenerate $\F_q$-bilinear map over $\F_{q^b} \times \F_{q^b}$ defined as $\ip{x}{y}_\mathsf{Tr} = \mathsf{Tr} (xy)$. Therefore, $(\F_{q^b} ,\F_{q^b}, \ip{\cdot}{\cdot}_\mathsf{Tr} )$ forms a dual system, and so from~\cref{claim:dual_map}, there exists a pair of maps $\phix, \phi_z: \F_{q^b} \to \F_q^b$, that is duality preserving. We denote by $(\wphix, \wphiz)$, the map obtained on $\F_{q^b}^n$ obtained by applying $(\phix, \phiz)$ on each coordinate.


\begin{lemma}\label{lem:field_downgrade}
	Let $\cC$ be a $[[n,k,\delta n]]_{q^b}$ CSS code over $\F_{q^b}$, and let $(\phix,\phiz)$ be the duality preserving map defined above. Then $\cC' = \parens{\wphix(\cx), \wphiz(\cz)}$ is a $[[n,k, \delta n]]_{q, b}$ vector space CSS code, where  
\end{lemma}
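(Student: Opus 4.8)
The plan is to verify that $\cC'$ meets the three requirements in the definition of a $[[n,k,\delta n]]_{q,b}$ vector space CSS code: that $(\wphix(\cx),\wphiz(\cz))$ is an honest $\F_q$-linear CSS code (i.e.\ satisfies the orthogonality condition), that its folded dimension is $k$, and that its minimum folded weight is $\delta n$. The single fact that drives everything is that $\phix,\phiz:\F_{q^b}\to\F_q^b$ are $\F_q$-linear \emph{isomorphisms}: injectivity is immediate from non-degeneracy of the trace form, since $\phix(x)=0$ forces $\mathsf{Tr}(xy)=\ip{\phix(x)}{\phiz(y)}=0$ for all $y$ and hence $x=0$, and a dimension count upgrades injectivity to bijectivity. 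Consequently $\wphix,\wphiz:\F_{q^b}^n\to\F_q^{bn}$ are $\F_q$-linear isomorphisms that factor naturally through $(\F_q^b)^n$, and the whole proof reduces to transporting the $\F_{q^b}$-structure of $\cC$ through these maps.

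For the orthogonality condition I would first record the identity $\ip{\wphix(x)}{\wphiz(y)}_{\F_q^{bn}}=\mathsf{Tr}\inparen{\ip{x}{y}_{\F_{q^b}^n}}$ for all $x,y\in\F_{q^b}^n$, which is just the coordinatewise duality-preservation of $(\phix,\phiz)$ summed over the $n$ coordinates together with additivity of $\mathsf{Tr}$. Combined with the standard observation that for an $\F_{q^b}$-\emph{linear} subspace $V\subseteq\F_{q^b}^n$ one has $\mathsf{Tr}\inparen{\ip{x}{y}}=0$ for all $x\in V$ if and only if $\ip{x}{y}=0$ for all $x\in V$ (here one scales $x$ by arbitrary $\lambda\in\F_{q^b}$, using $\F_{q^b}$-linearity of $V$, and invokes non-degeneracy of $\mathsf{Tr}$), this yields $\wphix(V)^{\perp}=\wphiz(V^{\perp})$ and symmetrically $\wphiz(W)^{\perp}=\wphix(W^{\perp})$, where the left-hand perps are over $\F_q$ inside $\F_q^{bn}$ and the right-hand ones are over $\F_{q^b}$ inside $\F_{q^b}^n$. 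Taking $W=\cz$ and using the hypothesis $\cz^{\perp}\subseteq\cx$ gives $\wphiz(\cz)^{\perp}=\wphix(\cz^{\perp})\subseteq\wphix(\cx)$, so $(\wphix(\cx),\wphiz(\cz))$ is a genuine $\F_q$-linear CSS code. The dimension claim then follows by bookkeeping: $\dim_{\F_q}\wphix(\cx)=b\cdot\dim_{\F_{q^b}}\cx$ and $\dim_{\F_q}\wphiz(\cz)^{\perp}=b\cdot\dim_{\F_{q^b}}\cz^{\perp}$, so the unfolded $\F_q$-code has dimension $b\parens*{\dim_{\F_{q^b}}\cx-\dim_{\F_{q^b}}\cz^{\perp}}=bk$, and folding by blocks of size $b$ divides this by $b$ to give folded dimension $k$.

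Finally, for the distance: since $\phix$ and $\phiz$ are bijections fixing $0$, for any $v\in\F_{q^b}^n$ the folded support of $\wphix(v)$, viewed in $(\F_q^b)^n$, equals $\supp(v)$, so $\abs*{\fold(\wphix(v))}=\abs{v}$ is exactly the Hamming weight of $v$ over $\F_{q^b}$, and likewise for $\wphiz$. Using the perp-identities $\wphiz(\cz)^{\perp}=\wphix(\cz^{\perp})$ and $\wphix(\cx)^{\perp}=\wphiz(\cx^{\perp})$ together with injectivity of the maps, one gets $\wphix(\cx)\setminus\wphiz(\cz)^{\perp}=\wphix(\cx\setminus\cz^{\perp})$ and $\wphiz(\cz)\setminus\wphix(\cx)^{\perp}=\wphiz(\cz\setminus\cx^{\perp})$, so the minimum folded weight over $\parens*{\wphix(\cx)\setminus\wphiz(\cz)^{\perp}}\cup\parens*{\wphiz(\cz)\setminus\wphix(\cx)^{\perp}}$ equals $\min\braces{\abs{v}\mid v\in(\cx\setminus\cz^{\perp})\cup(\cz\setminus\cx^{\perp})}=\delta n$. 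I expect the only mildly subtle point to be the perp-identity $\wphix(V)^{\perp}=\wphiz(V^{\perp})$ — specifically the step that passes from the trace form over $\F_{q^b}$ to the plain bilinear form over $\F_q$, which is exactly where the $\F_{q^b}$-linearity of the code spaces is genuinely needed; the dimension and distance parts are then routine transport along isomorphisms.
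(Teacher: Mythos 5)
Your proposal is correct and follows essentially the same route as the paper: transport perps through the duality-preserving isomorphisms, then do dimension bookkeeping and observe that the coordinatewise bijections preserve supports. If anything you are more careful than the paper's own proof, which glosses over the step where $\mathsf{Tr}\inparen{\ip{u'}{v}}=0$ for all $v\in\cz$ is upgraded to $\ip{u'}{v}=0$ using $\F_{q^b}$-linearity of $\cz$ and non-degeneracy of the trace form, and which only establishes the containment $\wphiz(\cz)^{\perp}\subseteq\wphix(\cz^{\perp})$ rather than the equality you record and then reuse in the dimension and distance arguments.
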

\begin{proof}
To see that it is a CSS code, let $u  = \wphix(u') \in \wphiz(\cz)^\perp$. The vector $u$ can be written this way as $\wphix$ is an isomorphism. Then, for any $v \in \cz$,
\begin{align*}
 \ip{u'}{v} = \ip{\wphiz(u')}{\wphiz(v)} = 0. 	
\end{align*}
Thus, $u' \in \cz^\perp$ and therefore, $\wphiz(\cz)^\perp \subseteq \wphix(\cz^\perp) \subseteq \wphix(\cx)$ which is the CSS condition.

	Since $\cx$ is a $\F_{q^b}$- subspace, $\dim(\phix(\cx))$ over $\F_q$ is $b\cdot \dim(\cx)$. Thus, the dimension of the new (unfolded code) is $\dim(\phix(\cx)) - \dim(\phix(\cz)^\perp) = b\cdot k$ which becomes $k$ after folding. 
	The distance is unchanged as for $v = (v_1,\cdots, v_n) \in \F_{q^b}^n$ the inital weight is, 
	\[\abs{v} = \abs{\{ i \mid v_i\neq 0 \}}= = \abs{\{ i \mid \phix(v_i) \neq 0 \}} = \abs{\phix(v)}.\]
\end{proof}
\subsection{Expander Graphs}
For a bipartite graph $G=(L,R,E)$, let $L$ be the set of left vertices, and $R$ be the set of right vertices. Let $A_G$ denote the $L\times R$ biadjacency matrix, and $\sigma_2(A_G)$ be its second largest singular value.
 


%
\begin{definition}[$(n,d,\lambda)$-expander]
A $d$-regular bipartite graph $G(L,R,E)$ with $|L|=|R|=n$ is said to be an $(n,d,\lambda)$--expander if $\sigma_2(A_G) ~\leq~ \lambda \cdot d$.
\end{definition}

Infinite families of $(n,d,\lambda)$--expanders, with growing $n$ as $d$ and $\lambda$ are constant, can be derived based on double covers of Ramanujan graphs of~\cite{LPS88} as long as $\lambda \geq \frac{2\sqrt{d-1}}{d}$.


\begin{lemma}[Expander Mixing Lemma]\label{lem:eml}
	Given an $(n,d,\lambda)$-expander $G=(L,R,E)$ and functions $f: L \rightarrow \R$ and $g: R \rightarrow \R$, the following well--known property is a simple consequence of definition of $(n,d,\lambda)$--expanders:
	\ifnum\confversion=1
	\small
	\[
		\abs*{\Ex{(\li,\ri) \sim E}{ f(\li) \cdot g(\ri)} - \Ex{\li\sim L}{f(\li)} \Ex{\ri\sim R}{g(\ri)}}
                \leq \lambda \cdot \norm{f}_2 
                \norm{g}_2
	\]
	\normalsize
	\else
	\[\abs[\Big]{\Ex{(\li,\ri) \sim E}{ f(\li) \cdot g(\ri)} - \Ex{\li\sim L}{f(\li)} \cdot \Ex{\ri\sim R}{g(\ri)}}
                \leq \lambda \cdot \norm{f}_2\norm{g}_2 \mper
	\]
	\fi
\end{lemma}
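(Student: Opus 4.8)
The plan is to recognize the left-hand side as a bilinear form in $f$ and $g$ governed by the biadjacency matrix $A_G$, and then to control that form using the spectral gap. Viewing $f$ and $g$ as column vectors indexed by $L$ and $R$ respectively, sampling an edge uniformly at random gives $\Ex{(\li,\ri) \sim E}{f(\li)\cdot g(\ri)} = \tfrac{1}{nd}\, f^{\top} A_G\, g$, since $G$ has exactly $nd$ edges. Similarly, the product of marginals is $\Ex{\li\sim L}{f(\li)}\cdot\Ex{\ri\sim R}{g(\ri)} = \tfrac{1}{n^2}\,(f^{\top}\one_L)(\one_R^{\top}g) = \tfrac{1}{nd}\, f^{\top}\!\bigl(\tfrac{d}{n}\,\one_L\one_R^{\top}\bigr)\! g$. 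Hence the quantity to be bounded equals $\tfrac{1}{nd}\,f^{\top} M g$, where $M := A_G - \tfrac{d}{n}\,\one_L\one_R^{\top}$.

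The key step is to show $\norm{M}_{\mathrm{op}} \le \sigma_2(A_G) \le \lambda d$. Because $A_G$ has nonnegative entries and every row sum and every column sum equals $d$, we get $\norm{A_G}_{\mathrm{op}} \le \sqrt{\norm{A_G}_1 \cdot \norm{A_G}_\infty} = d$, while $A_G\one_R = d\,\one_L$ and $A_G^{\top}\one_L = d\,\one_R$ show this bound is attained; thus $\sigma_1(A_G) = d$, realized by the left/right singular vectors $\one_L/\sqrt n$ and $\one_R/\sqrt n$. Since $\tfrac{d}{n}\,\one_L\one_R^{\top} = d\cdot\bigl(\one_L/\sqrt n\bigr)\bigl(\one_R/\sqrt n\bigr)^{\top}$ is exactly the top rank-one term of a singular value decomposition of $A_G$, the matrix $M$ is obtained by deleting that term, so its largest singular value is $\sigma_2(A_G)$, which is at most $\lambda d$ by the definition of an $(n,d,\lambda)$-expander.

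It then remains to apply Cauchy--Schwarz and reconcile the two normalizations of the norm. We have $\abs{f^{\top} M g} \le \norm{f}_{\ell_2}\,\norm{M g}_{\ell_2} \le \norm{M}_{\mathrm{op}}\,\norm{f}_{\ell_2}\,\norm{g}_{\ell_2} \le \lambda d\,\norm{f}_{\ell_2}\,\norm{g}_{\ell_2}$, where $\norm{\cdot}_{\ell_2}$ denotes the ordinary Euclidean norm. The statement uses the expectation norm fixed in the preliminaries, namely $\norm{f}_2 = \sqrt{\Ex{\li\sim L}{f(\li)^2}} = \norm{f}_{\ell_2}/\sqrt n$, and likewise for $g$. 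Substituting, $\tfrac{1}{nd}\,\abs{f^{\top} M g} \le \tfrac{\lambda d}{nd}\cdot n\,\norm{f}_2\,\norm{g}_2 = \lambda\,\norm{f}_2\,\norm{g}_2$, which is the claimed bound.

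The only content beyond routine bookkeeping is the spectral identification $\norm{M}_{\mathrm{op}} = \sigma_2(A_G)$: one genuinely needs two-sided $d$-regularity to know that the top singular triple of $A_G$ is the all-ones one, so that subtracting the rank-one ``mean'' matrix drops the operator norm all the way down to $\sigma_2(A_G)$ rather than to something larger. Everything else amounts to translating between edge counting, the Euclidean inner product, and the expectation inner product and norm used throughout the paper.
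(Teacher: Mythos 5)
Your proof is correct. The paper itself does not give a proof of this lemma—it simply asserts it as a well-known consequence of the expander definition—so there is no in-paper argument to compare against. What you wrote is the standard spectral proof: rewrite the two expectations as $\tfrac{1}{nd}\,f^{\top}A_G\,g$ and $\tfrac{1}{nd}\,f^{\top}\bigl(\tfrac{d}{n}\one_L\one_R^{\top}\bigr)g$, observe that two-sided $d$-regularity makes $(\one_L/\sqrt n,\one_R/\sqrt n)$ a top singular pair of $A_G$ with $\sigma_1=d$ so that subtracting the rank-one mean matrix leaves an operator of norm exactly $\sigma_2(A_G)\le\lambda d$, and finish with Cauchy--Schwarz plus the conversion between the Euclidean and expectation norms. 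The accounting $\norm{f}_{\ell_2}=\sqrt n\,\norm{f}_2$ and the factor $\tfrac{1}{nd}\cdot\lambda d\cdot n=\lambda$ all check out, and you correctly flag that deleting the rank-one term drops the norm all the way to $\sigma_2$ only because the regularity on both sides pins down the top singular triple.
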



%

%
%
%

%
%


\section{Concatenated Codes and AEL Amplification}
\label{sec:ael}

In this section, we formalize the quantum CSS generalizations of two operations on classical codes -- \textit{code concatenation} and AEL \textit{distance amplification} (\cite{AEL95}). We define it purely in linear algebraic terms but the recent work~\cite{BGG22} also formalizes this using the stabilizer code framework.

\subsection{Concatenation of CSS Codes}

To define a concatenation of CSS codes, one needs a pair of outer code and inner code that are compatible with respect to some parameters. Additionally in the quantum case, we need a pair of duality-preserving maps as defined in~\cref{lem:compl}.


\begin{definition}[Concatenated CSS Codes]
Given the following objects,	
\begin{itemize}
    \item \textbf{Outer Code} --- Let $\cD = (\dx, \dz)$ be a $[[n, k_{\out}, \delta_{\out}]]_{q,b_\out}$ vector space CSS code.
    \item \textbf{Inner Code} --- Let $\cC = (\cx, \cz)$ be a $[[d, k_\inn, \delta_{\inn}]]_{q,b_\inn}$ vector space CSS code that $b_\inn k_\inn = b_\out$. Let $\cx = W_X \oplus \cz^{\perp}$ and $\cz = W_Z \oplus \cx^{\perp}$ respectively (as $\F_q^{db_\inn}$-subspaces). 
    \item \textbf{Duality preserving maps} --- Let $(\phix,\phiz)$ be duality preserving maps as in~\cref{lem:compl} from $\F_{q}^{k}$ to $\subs{W}{X}$ and $\subs{W}{Z}$ respectively. Extend these maps to $(\F_{q}^k)^n$ by applying it to each coordinate {of the folded code}, and call the extended map $(\wphix, \wphiz)$.
\end{itemize}
\vspace{-5pt}One defines the concatenated CSS Code, $\cE = \cD \circ_\varphi \cC$ as $(\ex, \ez)$ with
    \begin{align*}
        \ex &=\wphix( \dx) + \mathbb{F}_{q}^n \otimes \cz^{\perp} \subseteq \F_q^{ndb_\inn}\\
        \ez &= \wphiz(\dz) + \mathbb{F}_{q}^n \otimes \cx^{\perp}\subseteq \F_q^{ndb_\inn}
    \end{align*}
\end{definition}

We now give an explicit description of the dual spaces that will prove that the concatenation operation defines a CSS code and also be useful in proving the distance of the final code. 

%
%

\begin{proposition}\label{prop:concat_dual}
    For the above definition of $(\ex, \ez)$, the dual spaces can be computed as follows,
    \begin{align*}
        \ex^{\perp} &= \wphiz(\dx^{\perp}) + \mathbb{F}_{q}^n \otimes \cx^{\perp} \\
        \ez^{\perp} &= \wphix(\dz^{\perp}) + \mathbb{F}_{q}^n \otimes \cz^{\perp}
    \end{align*}
    Therefore, $\cE$ is an $[[nd, {k_{\out}\cdot k_\inn}]]_{q, b_\inn}$ vector space CSS code.
    {Moreover, if $\cD$ is an LDPC code, and $d b_\inn$ is a constant, then $\cE$ is an LDPC code.}
\end{proposition}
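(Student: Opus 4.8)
The plan is to prove the dual-space identity $\ex^{\perp} = \wphiz(\dx^{\perp}) + \F_q^n \otimes \cx^{\perp}$ and then read off every remaining assertion; the identity for $\ez^{\perp}$ follows by the $X \leftrightarrow Z$ symmetry (swap $\dx \leftrightarrow \dz$, $\cz^\perp \leftrightarrow \cx^\perp$, $\phix \leftrightarrow \phiz$, $W_X \leftrightarrow W_Z$, and use the two inner CSS inclusions $\cz^\perp \subseteq \cx$ and $\cx^\perp \subseteq \cz$ interchangeably). For the containment ``$\supseteq$'', fix $u = \wphix(a) + p \in \ex$ with $a \in \dx$ and $p=(p_1,\dots,p_n)$, $p_i \in \cz^\perp$, and $u' = \wphiz(b) + p'$ with $b \in \dx^\perp$ and $p'_i \in \cx^\perp$. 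The bilinear form $\langle u,u'\rangle$ splits into four terms: $\langle \wphix(a),\wphiz(b)\rangle = \langle a,b\rangle = 0$ since $(\phix,\phiz)$ is duality-preserving (\cref{def:dual_maps}) block-by-block and $a\perp b$; $\langle \wphix(a),p'\rangle = 0$ block-by-block because $\phix(a_i)\in W_X\subseteq\cx$ is orthogonal to $p'_i\in\cx^\perp$; symmetrically $\langle p,\wphiz(b)\rangle = 0$ using $\phiz(b_i)\in W_Z\subseteq\cz$; and $\langle p,p'\rangle = 0$ block-by-block because $p_i\in\cz^\perp\subseteq\cx$ — here the CSS condition of the \emph{inner} code is exactly what is used — is orthogonal to $p'_i\in\cx^\perp$. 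Hence $\wphiz(\dx^{\perp}) + \F_q^n \otimes \cx^{\perp} \subseteq \ex^{\perp}$.

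To upgrade this to equality I would match $\F_q$-dimensions. Both sums in sight are direct: $\wphix(\dx)\subseteq W_X^{\,n}$ and $\F_q^n\otimes\cz^\perp = (\cz^\perp)^n$ meet trivially because $W_X\cap\cz^\perp = \{0\}$ (from $\cx = W_X\oplus\cz^\perp$), and likewise $\wphiz(\dx^\perp)\subseteq W_Z^{\,n}$ meets $(\cx^\perp)^n$ trivially; since $\wphix,\wphiz$ are injective, $\dim_{\F_q}\ex = \dim_{\F_q}\dx + n\dim_{\F_q}\cz^\perp$ and $\dim_{\F_q}(\wphiz(\dx^\perp)+\F_q^n\otimes\cx^\perp) = \dim_{\F_q}\dx^\perp + n\dim_{\F_q}\cx^\perp$. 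Adding, the outer part gives $\dim_{\F_q}\dx + \dim_{\F_q}\dx^\perp = nb_\out$ (the ambient $\F_q$-dimension of the unfolded $\cD$), and the inner part gives $n(\dim_{\F_q}\cz^\perp + \dim_{\F_q}\cx^\perp) = n(db_\inn - b_\inn k_\inn)$, using the standard relations $\dim\cx+\dim\cx^\perp=\dim\cz+\dim\cz^\perp=db_\inn$ and $\dim\cx-\dim\cz^\perp=b_\inn k_\inn$ for the unfolded inner code in $\F_q^{db_\inn}$. The compatibility hypothesis $b_\out = b_\inn k_\inn$ makes the total collapse to $n d b_\inn$, the full ambient dimension; combined with $\dim_{\F_q}\ex + \dim_{\F_q}\ex^\perp = n d b_\inn$, the containment is forced to be an equality.

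Everything else then follows quickly. The CSS condition for $\cE$ holds since $\ez^\perp = \wphix(\dz^\perp) + \F_q^n\otimes\cz^\perp \subseteq \wphix(\dx) + \F_q^n\otimes\cz^\perp = \ex$, using $\dz^\perp\subseteq\dx$. Its dimension as a CSS code, $\dim_{\F_q}\ex - \dim_{\F_q}\ez^\perp$, equals $\dim_{\F_q}\dx - \dim_{\F_q}\dz^\perp = k_\out b_\out$, which becomes $k_\out b_\out / b_\inn = k_\out k_\inn$ after folding in blocks of size $b_\inn$, and the blocklength is $ndb_\inn / b_\inn = nd$, yielding $[[nd, k_\out k_\inn]]_{q,b_\inn}$. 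For the LDPC claim, a parity-check matrix for $\ex$ (equivalently, a generating set of $\ex^\perp$) is obtained by stacking (i) $\wphiz$ applied block-by-block to the rows of a sparse parity-check matrix of $\cD$, and (ii) the vectors $e_i \otimes w$ for $i \in [n]$ and $w$ in a fixed basis of $\cx^\perp$. Each vector of type (i) touches only constantly many blocks of size $db_\inn$, hence has constant weight, and each output coordinate is hit by at most a constant number of them since $b_\out \le db_\inn$; each vector of type (ii) is supported in a single block of size $db_\inn$. So when $db_\inn$ is constant this matrix is row- and column-sparse, and the argument for $\ez$ is identical, whence $\cE$ is LDPC.

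I expect the only delicate point to be the dimension bookkeeping of the second paragraph — one has to be careful about folded versus unfolded $\F_q$-dimensions and to invoke $b_\out = b_\inn k_\inn$ at precisely the step where the outer and inner contributions cancel — while the four-term bilinear computation and the sparsity estimates are routine once the block decomposition is spelled out.
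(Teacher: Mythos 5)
Your proof is correct and follows essentially the same route as the paper's: prove the containment $\wphiz(\dx^{\perp}) + \F_q^n \otimes \cx^{\perp} \subseteq \ex^{\perp}$ via the four-term bilinear computation using the duality-preserving property, then upgrade to equality by matching $\F_q$-dimensions using $b_\out = b_\inn k_\inn$, and finally read off the code dimension and LDPC property. If anything you are a bit more careful than the paper, which has a small slip writing $\dim(\cz^\perp) = \dim(\cx) - k_\inn$ where it should be $\dim(\cx) - k_\inn b_\inn$, and you make explicit the CSS containment $\ez^\perp \subseteq \ex$ that the paper leaves implicit.
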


\begin{proof} We prove the first equation as the proofs are symmetric.  Define, \[\mathcal{U}_X := \wphiz(\dx^{\perp}) + \mathbb{F}_{q}^n \otimes \cx^{\perp}.\] 

Since $\wphix$ is an $\F_q$-linear isomorphism $\dim(\wphix(U)) = \dim(U)$. Moreover, $\mathrm{im}(\wphiz)$ and $\cx^\perp$ are disjoint and thus their dimension add up.    Thus,
 \begin{align*}
 	\dim(\mathcal{U}_X) ~&=~ \dim(\dx^\perp) + n\, \dim(\cx^\perp) = nb_\out-\dim(\dx) + n (db_\inn-\dim(\cx)), \\
 	\dim(\ex) ~&=~ \dim(\dx) + n(\dim(\cz^\perp)) = \dim(\dx) + n(\dim(\cx) -k_\inn ) .
 \end{align*}
 
 Therefore, $\dim(\mathcal{U}_X) = ndb_\inn - \dim(\ex) =  \dim(\ex^\perp)$ and it suffices to show that $\,\mathcal{U}_X\subseteq \ex^\perp$.
 
Let $\alpha \in \mathcal{U}_X$ and $\beta \in \ex$. Using the definition of the spaces, we express them as,
\begin{align*}
 \alpha ~&=~ (\phiz(u^{(1)}) + x_1,\,\cdots\,, \phiz(u^{(n)}) + x_n ) \in \mathcal{U}_X \; \text{ where } x_i \in \cx^\perp,\, u =(u^{(1)},\cdots, u^{(n)}) \in\dx^\perp,\\   
 \beta ~&=~ (\phix(v^{(n)}) + z_1,\,\cdots\,, \phix(v^{(n)}) + z_n ) \in \ex  \;\;\, \text{ where } z_i \in \cz^\perp,\,  v = (v^{(1)},\cdots, v^{(n)}) \in\dx.
\end{align*}

Computing the inner product we get four kinds of terms,
\[
\ip{\alpha}{\beta} = \sum_{i=1}^n \parens[\Big]{\ip{\phiz\parens[\big]{u^{(i)}}}{\phix\parens[\big]{v^{(i)}}} + \ip{\phiz\parens[\big]{u^{(i)}}}{z_i} + \ip{x_i}{\phix\parens[\big]{v^{(i)}}} + \ip{x_i}{z_i} }  \]

Each of the last three terms is zero as, by definition, they belong to orthogonal spaces.
We are then left with the first term which can be calculated using the duality-preserving property,
\vspace{-0.5em}
\begin{align*}
   \ip{\alpha}{\beta} ~&=~ \sum_{i=1}^n \angles[\Big]{\phiz\parens[\big]{u^{(i)}}\;, \;\phix\parens[\big]{v^{(i)}}}_{\F_q^{db_\inn}} \\
   ~&=~ \sum_{i=1}^n  \ip{u^{(i)}}{v^{(i)}}_{\F_q^{b_\out}} &&\text{(Duality Preserving)}\\
   ~&=~  \ip{u}{v}_{\F_q^{nb_\out}} = 0 && (u \in \dx^\perp, v \in \dx). 
\end{align*}

This proves that $\mathcal{U}_X = \ex^\perp$. Moreover, if $\dx = V_X \oplus \dz^\perp$, the proof implies that $\ex = \phi_X(V_X) \oplus \ez^\perp$. Recall that for a $[[n,k_{\out}]]_{q,b_\out}$ vector space code, the dimension of $V_X$ as a $\F_q$-subspace is $k_{\out}\cdot b_\out$.
   Thus, the dimension of the CSS code $\cE$ is, \[\dim(\ex) -\dim(\ez^\perp) =  \dim(\phi_X(V_X)) = \dim(V_X) = k_{\out}\cdot b_\out = k_{\out}\cdot k_\inn b_\inn \quad .\]
Folding this $\cE$ into blocks of size $b_\inn$, we get a $[[nd, k_\out \cdot k_\inn]]_{q, b_\inn}$ vector space CSS code.    
The LDPC property follows since the generators of $\ex^\perp$ are comprised of generators of $\dz^\perp$ mapped by $\wphix$, and generators corresponding to $\cz^{\perp}$. The former are sparse if $\cD$ is LDPC, and the latter has weight at most $db_\inn$ which is a constant.
\end{proof}


\subsection{AEL Amplification and Folding CSS Codes}

The distance of the concatenated code $\cE$  can be amplified by the AEL procedure using a $d$-regular bipartite expander, $G = (L,R, E)$.
The graph $G$ is chosen such that the size of $L$ and $R$ match the blocklength of the outer code, and the degree matches the blocklength of the inner code. 

 The AEL procedure is a three-step process --- (i) concatenate the outer code $\cD$ with inner code $\cC$ to obtain $\cE$, (ii) shuffle the symbols of concatenated code via edges on a bipartite expander graph $G$, and (iii) collect $d$-symbols on the right vertices and fold them back to produce the final code, $\cF$. 

In this subsection, we will formally state the AEL procedure and set up some useful notation. We start by restating the definitions of the concatenated codes (and their duals) in a manner that will be convenient when working with AEL, and later, with sum-of-squares proofs.



%
\paragraph{Concatenated Codes and AEL} To simplify notation, will use $\Sigma$ to denote $\F_q^{b_\inn}$ as the concatenated code lies inside the space $(\F_q^{b_{\inn}})^{nd}  = \Sigma^E$.  We view the codewords, $z \in \ex$ (or $\ez$), as an assignment of $\Sigma$-values to the edges. Denote by $z_\li \in \Sigma^d$, the restriction of the vector $z$ to the neighborhood, $N(\li)$, of vertex $\li \in L$. We will use similarly use $z_\ri$, to denote restriction to the neighborhood of vertex $\ri \in R$.  
	The concatenated code $\cE = \cD\circ_\varphi \cC$ and its duals can be explicitly described as,
\begin{align*}
 	\ex ~&=~ \Set*{ x \mid x_{\li} =  \phix(u_\li) + z_\li, \text{ for a unique  } z_\li \in \cz^\perp \text{ and } u \in \dx },\\[2.5pt]
 		\ex^\perp ~&=~ \Set*{ x \mid x_{\li} =  \phix(u_\li) + z_\li, \text{ for a unique  } z_\li \in \cx^\perp \text{ and } u \in \dx^\perp },\\[2.5pt]
 		\ez ~&=~ \Set*{ x \mid x_{\li} =  \phix(u_\li) + z_\li, \text{ for a unique  } z_\li \in \cx^\perp \text{ and } v \in \dz },\\[2.5pt]
 		\ez^\perp ~&=~ \Set*{ x \mid x_{\li} =  \phix(u_\li) + z_\li, \text{ for a unique  } z_\li \in \cz^\perp \text{ and } u \in \dx }.
 \end{align*}

 Uniqueness follows as $\mathrm{im}(\phix) = \subs{W}{X}$ is disjoint from $\cz^\perp$ (and similarly for $\phiz$). The concatenated code $\cE$ is folded using the partitions induced by the neighborhoods of the right vertices. Explicitly, the folded code $\cF = \fold(\cE) =  (\fold(\ex), \fold(\ez))$ is given by,
 \begin{align*}
 	\fx:=~ \fold(\ex) ~&=~ \Set[\Big]{\fold(z_{\ri_1}, \cdots , z_{\ri_n}) \big\vert \, z \in \ex,\; \ri_i \in R} \subseteq \parens*{\Sigma^d}^n  \\
 	\fz:=~ \fold(\ez) ~&=~ \Set[\Big]{\fold(z_{\ri_1}, \cdots , z_{\ri_n})  \big\vert\, z \in \ez,\; \ri_i \in R} \subseteq \parens*{\Sigma^d}^n 
 	\end{align*} 

\begin{proposition}[AEL Procedure]\label{cor:ael_rate}
Let $\cD$ be a $[[n, k_\out]]_{q, b_\out}$ CSS code and $\cC$ be a $[[d, k_\inn]]_{q, b_\inn}$ CSS code.\\[2.5pt]
Then, the AEL code, $\cF = (\fx, \fz)$ defines a $[[n, \frac{k_\out \cdot k_\inn}{d}, \delta_R\cdot n]]_{q, db_\inn}$ CSS code where \[\delta_R\cdot n = \min \braces[\big]{\, \abs[\big]{\braces{i \mid z_{r_i}\neq 0}} \;\big\vert\;  z \in (\ex\setminus \ez^\perp) \cup (\ez\setminus \ex^\perp)}. \] 
\end{proposition}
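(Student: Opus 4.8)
The plan is to derive the whole statement from \cref{prop:concat_dual} by noting that $\cF = \fold(\cE)$ is obtained from the concatenated code $\cE \subseteq \Sigma^E$ purely by regrouping its $\F_q$-coordinates: steps (ii)--(iii) of the AEL procedure re-index the $nd$ symbols of $\cE$ from the left-vertex partition of $E$ to the right-vertex partition, and then merge each block of $d$ symbols of $\Sigma = \F_q^{b_\inn}$ into a single symbol of $\Sigma^d = \F_q^{db_\inn}$. Since folding along any fixed partition of the coordinates into equal-size blocks is just a relabeling --- and since the canonical bilinear form on $\Sigma^E$ decomposes as a sum over the blocks of that partition --- the first step is to record that, as an operation on $\F_q$-spaces, $\fold$ is an $\F_q$-linear bijection ($\ex \to \fx$ and $\ez \to \fz$), it preserves $\F_q$-dimension, and it commutes with taking duals, \ie $\fold(V^\perp) = \fold(V)^\perp$ for every $\F_q$-subspace $V \subseteq \Sigma^E$. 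These facts were already observed for the sequential fold right after its definition; the only thing to check is that they hold equally for the fold along the right-vertex partition, which is immediate since that fold too is a coordinate relabeling.

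Granting this, the CSS property of $\cF$ is immediate: \cref{prop:concat_dual} gives $\ez^\perp \subseteq \ex$, hence $\fz^\perp = \fold(\ez)^\perp = \fold(\ez^\perp) \subseteq \fold(\ex) = \fx$ (and symmetrically $\fx^\perp \subseteq \fz$). The blocklength of $\cF$ is $n$ (one symbol per right vertex) over the alphabet $\Sigma^d = \F_q^{db_\inn}$, which fixes the $[[n,\,\cdot\,]]_{q,db_\inn}$ shape. For the dimension, since $\fold$ preserves $\F_q$-dimension we have $\dim_{\F_q}\fx - \dim_{\F_q}\fz^\perp = \dim_{\F_q}\ex - \dim_{\F_q}\ez^\perp$, which \cref{prop:concat_dual} already evaluated to $k_\out k_\inn b_\inn$; written in the folded notation with block size $db_\inn$ this is $\tfrac{k_\out k_\inn b_\inn}{d b_\inn} = \tfrac{k_\out k_\inn}{d}$, as claimed.

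For the distance, I would simply unwind the definition of the distance of a folded code. Folding carries $(\ex \setminus \ez^\perp) \cup (\ez \setminus \ex^\perp)$ bijectively onto $(\fx \setminus \fz^\perp) \cup (\fz \setminus \fx^\perp)$, and for $v \in \Sigma^E$ the $i$-th symbol of $\fold(v)$ is the restriction $v_{\ri_i}$ of $v$ to the neighborhood $N(\ri_i)$, so its Hamming weight over $q^{db_\inn}$ is $\abs{\fold(v)} = \abs{\Set{i \mid v_{\ri_i} \neq 0}}$. Therefore the minimum block-weight of a nontrivial logical vector of $\cF$ is exactly the quantity $\delta_R \cdot n$ named in the statement; at this point $\delta_R$ is merely a label, and the actual lower bound on $\delta_R$ (via the expander mixing lemma for the graph $G$, together with the distances of $\cC$ and $\cD$) is established later, in the distance section.

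I do not anticipate a genuine obstacle: this proposition is bookkeeping on top of \cref{prop:concat_dual}. The single point deserving a sentence of care is that the fold used in AEL is along the edge-partition induced by the right vertices --- the ``shuffle, then collect $d$ symbols'' of steps (ii)--(iii) --- rather than the sequential fold used when $\fold$ was first defined; one simply notes that every equal-block fold is a coordinate relabeling and hence inherits linearity, dimension-preservation, and commutation with duals.
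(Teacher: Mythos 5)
Your proposal is correct, and it spells out exactly the bookkeeping the paper leaves implicit: the paper states \cref{cor:ael_rate} without proof, treating it as an immediate consequence of \cref{prop:concat_dual} together with the observation (made right after the definition of folding) that any equal-block fold is a coordinate relabeling preserving $\F_q$-dimension and orthogonality, so the CSS condition and dimension count transfer directly to $\cF$ and the distance formula is read off from the definition of weight after folding. The one point you flag for care --- that the AEL fold is along the right-vertex partition rather than the sequential one --- is indeed the only thing a careful reader might pause on, and your observation that a relabeling is a relabeling handles it.
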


The key property of the amplified code, $\cF$, is that its distance is significantly better than $\cE$. We will prove a lower bound on $\delta_R$ in the next section (\cref{thm:ael_final}). Before that, we define two notions that we use later: \textit{local inversion maps}, and a couple different distance metrics over $\Sigma^{dn}$.

\paragraph{Local Inversion} The uniqueness of decomposition of the local codeword, $x_\li$, let us define an inverse to the maps $\phix$ and $\phiz$.
\begin{definition}[Local Inversion Maps]\label{def:local_inversion}
Let $\cE = \cD\circ_\varphi \cC$ be the concatenated code as above. Then one defines local inversion maps,
\begin{align*}
 \unconx :\cx \rightarrow \F_q^{b_{\out}},\;\; x_\li ~&=~ \phix(u_\li) + z_\li \mapsto u_\li, \\  
 \unconz :\cz \rightarrow \F_q^{b_{\out}},\;\; x_\li ~&=~ \,\phiz(v_\li) + z_\li \mapsto v_\li.
\end{align*}
	
\end{definition}

\paragraph{Distance metrics for $\cE$} Using the graph structure, we can fold the code $\cE$ using the left or right vertices. Moreover, we can define a define a metric for the set of $\cz^\perp$ cosets which is needed for the quantum notion of distance. 
 \begin{align*}
 	\Delta_L(z,h) ~&=~ \Ex{\ell\sim L}{\indi{ z_\li \neq h_{\li}}},  \\
 	\Delta_{L, \cz^\perp}(z,h) ~&=~ \Ex{\ell\sim L}{\indi{ z_\li \not\in h_{\li} + \cz^\perp }}, \\
 	\Delta_{R}(z,h) ~&=~\Ex{r\sim R}{\indi{ z_{r} \neq h_{r}}}.
 \end{align*}
  Now we can reinterpret AEL procedure as changing the metric from the initial $\Delta_L(\cdot, \cdot)$ on the concatenated code $\cE$ to  $\Delta_R(\cdot, \cdot)$. This change is crucial as this is where the pseudorandom properties of the graph (expansion) come in, and imply that the distance between codewords under the $\Delta_R$ metric is much larger than the initial distance under the $\Delta_L$ metric.
  
  \paragraph{Decoders for AEL}
  As defined in \cref{sum:list_dec_vector_space}, a list decoder up to radius $\tau$ for the code $\cF$ should take as input a string $g\in (\Sigma^d)^R$, and output a list of cosets of $\fz^\perp$ containing 
  \begin{align*}
      \calL(g,\tau) ~&=~ \braces[\big]{ \fold(h)+\fz^\perp \suchthat h\in \ex,\, \Delta_R(g,h) <\tau}\\
      ~&\cong~ \braces[\big]{ h+\ez^\perp \suchthat h\in \ex,\, \Delta_R(g,h) <\tau} .
  \end{align*}

  The second set is merely the unfolded version of the first and it is equivalent to work with either. We will work with the latter to simplify notation.

\section{Distance Proofs}\label{sec:distance}

In this section, we prove that the folded AEL code, $\fold(\cD \circ_\varphi \cC)$ has large (fractional) distance, $\delta_R$, that can be made arbitrarily close to that of the inner code $\delta_\inn$ by picking a good enough expander.

\subsection{Partial Minimizer}
As mentioned before, one of the key objects we will use in the distance proof is a partial minimizer, which does not change the coset, but gets closer to the codeword we are measuring distance from. For a fixed codeword $h \in \ex$, and any vector $z \in \F_q^E$ we define the partial minimizer to be a new vector $\partmin(z,h)$    
\[	\partmin(z,h)_{\li} = \begin{cases}
		z_{\li} \;\; \text { if } \;\;  z_{\li} \not\in\,  h_{\li} + \cz^\perp \\[6pt]
		h_{\li} \;\; \text { if } \;\; z_{\li} \in\,  h_{\li} + \cz^\perp 
	\end{cases}
		\]
We observe that the partial minimizer satisfies two key properties,
\begin{align} 
&\text{ (\small{Coset-preserving}) } \;\;\; &&\partmin(z,h)_{\li} \in\, z_{\li} + \cz^\perp \;\; \forall\, \li \in L  \label{eqn:coset}\\[4pt]
	&\text{ (\small{Monotone}) } \;\;\;	&&\Delta(\partmin(z,h)_\ri, h_\ri) \leq \Delta(z_\ri, h_\ri) \;\; \forall \, \ri \in\, R \label{eqn:monotone}
\end{align}
 
\subsection{Distance proof}

We next prove for any two codewords in $\ex$ that do not share the same coset of $\ez^\perp$, their distance in the $\Delta_R(\cdot, \cdot)$ metric is almost as large as $\delta_{\inn}$.

\begin{lemma}[Distance proof of AEL]\label{lem:zfc_dist}
Let $z$ and $h$ be two non-equivalent codewords in $\ex$, \ie $z \not\in h + \ez^\perp$. Then,
\[\Delta_{R}(z,h) \geq \delta := \delta_\inn - \frac{\lambda}{\Delta_{L, \cz^\perp}(z,h)}  \] 	
 where $\Delta_{L,\cz^\perp} (z,h) = \Ex{\li \in L}{\one \braces[\big]{z_{\li} \not\in  h_{\li} + \cz^\perp}} $. 
\end{lemma}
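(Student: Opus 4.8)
The plan is to use the partial minimizer $\partmin(z,h)$ to replace $z$ by a coset-equivalent vector that behaves well under the right-folding, and then apply the Expander Mixing Lemma to transfer the left-distance to the right-distance. First I would set $z' := \partmin(z,h)$ and record its two properties: it is coset-preserving, so $z' \in z + \cF_{\inn} = z + (\F_q^E \otimes \cz^\perp)$, meaning $z'_\li \in z_\li + \cz^\perp$ for every $\li$; and it is monotone, so $\Delta(z'_\ri, h_\ri) \le \Delta(z_\ri, h_\ri)$ for every $\ri \in R$, hence $\Delta_R(z', h) \le \Delta_R(z,h)$. Since adding an element of $\F_q^E \otimes \cz^\perp$ does not change the coset modulo $\ez^\perp$ (as $\F_q^E \otimes \cz^\perp \subseteq \ez^\perp$), the hypothesis $z \not\in h + \ez^\perp$ gives $z' \not\in h + \ez^\perp$ as well. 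Thus it suffices to prove the claimed lower bound with $z'$ in place of $z$, and on the left side we still have $\Delta_{L,\cz^\perp}(z',h) = \Delta_{L,\cz^\perp}(z,h)$ since the coset-level left distance only sees $z_\li \bmod \cz^\perp$.

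Next I would apply expander mixing. Define $f : L \to \R$ by $f(\li) = \one\{z'_\li \ne h_\li\}$ and $g : R \to \R$ by $g(\ri) = \one\{z'_\ri \ne h_\ri\}$. The key combinatorial point is that if an edge $(\li,\ri)$ has $z'_\li \ne h_\li$ restricted to that edge's coordinate — wait, more carefully: I want to relate the event "$z'$ and $h$ differ somewhere in $N(\li)$" to "they differ on edge $(\li,\ri)$". The cleaner route is: for the inner code $\cC$ of distance $\delta_\inn$, whenever $z'_\li \not\in h_\li + \cz^\perp$ we have (by definition of $\partmin$) $z'_\li = z_\li$, and moreover $z'_\li - h_\li \in \cx \setminus \cz^\perp$ or at least is a nonzero-coset vector, so it has Hamming weight at least $\delta_\inn d$ within the $d$ edges at $\li$; hence at least $\delta_\inn$ fraction of edges $(\li,\ri)$ incident to such $\li$ satisfy $z'$ restricted to $(\li,\ri)$ differs from $h$. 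Each such disagreeing edge forces $z'_\ri \ne h_\ri$. Counting: the number of edges $(\li,\ri)$ with $z'_\ri \ne h_\ri$ is at least the number with "$\li$ bad and the edge itself disagreeing", which is at least $\delta_\inn d \cdot |\{\li : z'_\li \not\in h_\li + \cz^\perp\}| = \delta_\inn d n \cdot \Delta_{L,\cz^\perp}(z,h)$. Rewriting as an expectation over edges, $\Ex{(\li,\ri)\sim E}{\one\{\li \text{ bad}\} \cdot \one\{(\li,\ri) \text{ disagrees}\}} \ge \delta_\inn \Delta_{L,\cz^\perp}(z,h)$, and every such edge contributes to $g(\ri)=1$.

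Now I would invoke \cref{lem:eml}: with $f(\li) = \one\{z'_\li \not\in h_\li + \cz^\perp\}$ (the "bad left vertex" indicator) and $g(\ri) = \one\{z'_\ri \ne h_\ri\}$, the mixing lemma gives
\[
\Ex{(\li,\ri)\sim E}{f(\li) g(\ri)} \le \Ex{\li}{f(\li)}\Ex{\ri}{g(\ri)} + \lambda \norm{f}_2\norm{g}_2.
\]
The left side is at least $\delta_\inn \Delta_{L,\cz^\perp}(z,h)$ by the edge-counting above (every bad-left edge that disagrees forces $g(\ri)=1$, and a bad left vertex has at least $\delta_\inn d$ disagreeing incident edges). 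On the right, $\Ex{\li}{f(\li)} = \Delta_{L,\cz^\perp}(z,h)$, $\Ex{\ri}{g(\ri)} = \Delta_R(z',h) \le \Delta_R(z,h)$, $\norm{f}_2 = \sqrt{\Delta_{L,\cz^\perp}(z,h)} \le 1$, and $\norm{g}_2 = \sqrt{\Delta_R(z',h)} \le 1$, so the error term is at most $\lambda$ (or more precisely $\lambda\sqrt{\Delta_{L,\cz^\perp}(z,h)}$, which already suffices). Rearranging,
\[
\Delta_R(z,h) \ge \Delta_R(z',h) \ge \delta_\inn - \frac{\lambda}{\Delta_{L,\cz^\perp}(z,h)},
\]
as claimed. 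The main obstacle — and the place requiring genuine care rather than routine calculation — is the combinatorial lower bound on $\Ex{(\li,\ri)\sim E}{f(\li)g(\ri)}$: one must argue that for a left vertex $\li$ with $z'_\li \not\in h_\li + \cz^\perp$, the local difference $z'_\li - h_\li$, viewed in the inner code, is a nonzero coset representative of $\cz^\perp$ inside $\cx$, hence (by the inner code's distance being measured relative to $\cz^\perp$ cosets, \ie $\delta_\inn = \delta(\cC)$) has Hamming weight $\ge \delta_\inn d$; this is exactly why the partial minimizer's "monotone" property is set up in terms of $\Delta_R$ and why we pass to $z'$ before folding. One should double-check whether $\delta_\inn$ here refers to the quantum distance of $\cC$ (distance to $\cz^\perp$-cosets), since that is what makes "$z'_\li - h_\li \notin \cz^\perp \Rightarrow |z'_\li - h_\li| \ge \delta_\inn d$" valid.
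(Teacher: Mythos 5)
Your proof is correct and follows essentially the same route as the paper: pass to the partial minimizer, lower-bound an edge-level quantity by $\delta_\inn \Delta_{L,\cz^\perp}(z,h)$ using the inner code's quantum distance, upper-bound the same quantity via the expander mixing lemma, and then transfer back to $z$ via monotonicity and coset-preservation. The only cosmetic difference is that the paper bounds $\Ex{e\in E}{\one\{\partmin(z,h)_e \neq h_e\}}$ from both sides whereas you bound $\Ex{(\li,\ri)\sim E}{f(\li)g(\ri)}$ directly, but since $\partmin(z,h)_\li \neq h_\li$ iff $\partmin(z,h)_\li \notin h_\li + \cz^\perp$ these are the same argument; and your closing worry is resolved exactly as you suspect, since $z'_\li - h_\li \in \cx \setminus \cz^\perp$ and $\delta_\inn$ is the CSS distance of $\cC$.
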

\begin{proof}
	We will lower bound and upper bound the same quantity $\Delta_E(\partmin(z,h), h)$. Note that by \cref{eqn:coset}, we have $\Delta_{L,\cz^\perp} (z,h) = \Delta_{L,\cz^\perp} (\partmin(z,h),h)$.
	
	\begin{align*}
			\Ex{e\in E}{\partmin(z,h)_e \neq h_e} ~&=~ \Ex{\li\in L}{{\dist{\partmin(z,h)_{\li} , h_{\li} }}} \\
			~&\geq~ \Ex{\li\in L}{\indi{\partmin(z,h)_{\li} \not\in  h_{\li} + \cz^\perp} \cdot \dist{\partmin(z,h)_{\li},h_{\li} }} \\
			~&\geq~ \Ex{\li\in L}{\indi{\partmin(z,h)_{\li} \not\in  h_{\li} + \cz^\perp} \cdot \delta_\inn} \\
			~&=~ \delta_\inn \cdot \Delta_{L,\cz^\perp} (\partmin(z,h),h) \\
			~&=~ \delta_\inn \cdot \Delta_{L,\cz^\perp} (z,h)&& (\text{Using \cref{eqn:coset}}) .
	\end{align*}

	For the upper bound, we will use the expander mixing lemma~(\cref{lem:eml}).
	\begin{align*}
		\Ex{e\in E}{{\indi{\partmin(z,h)_e \neq h_e}}} ~&\leq~ \Ex{(\li, \ri)\, \in\, E}{{ \indi{\partmin(z,h)_{\li} \neq h_{\li} } \cdot \indi{\partmin(z,h)_{\ri} \neq h_{\ri} }}} \\[4pt]
		~&\leq~ \Ex{\li, \ri}{{ \indi{\partmin(z,h)_{\li} \neq h_{\li} } \cdot \indi{\partmin(z,h)_{\ri} \neq h_{\ri} }}} + \lambda &&\text{(\cref{lem:eml})}\\[4pt]
		~&=~\Delta_{L} (\partmin(z,h),h)\cdot \Delta_R(\partmin(z,h),h) + \lambda  \\[4pt]
		~&=~ \Delta_{L,\cz^\perp} (\partmin(z,h),h)\cdot \Delta_R(\partmin(z,h),h) + \lambda &&\text{(Definition of $\partmin$)} \\[4pt]
		~&=~ \Delta_{L,\cz^\perp} (z,h)\cdot \Delta_R(\partmin(z,h),h) + \lambda &&\text{(\cref{eqn:coset})} \\[4pt]
		~&\leq~ \Delta_{L,\cz^\perp} (z,h)\cdot \Delta_R(z,h) + \lambda &&\text{(\cref{eqn:monotone})}
\end{align*}
	 Comparing the two sides, we get  
	\begin{align*}
		\Delta_{R}(z,h)\cdot \Delta_{L, \cz^\perp}(z,h) + \lambda ~\geq~ \delta_\inn \cdot \Delta_{L, \cz^\perp}(z,h)
	\end{align*}
	Since $z\not\in h+ \ez^\perp$, there exists at least one vertex $\li \in L$ such that $z_\li \not \in h_\li + \cz^\perp$ and thus, $\Delta_{L,\cz^\perp} (z,h) > 0$.  Dividing by it gives the result.	
\end{proof}

We now deduce that using AEL machinery amplifies the distance of the base outer code. 

\begin{theorem}[AEL distance]\label{thm:ael_final}
Let $\cD$ be a $[[n, k_\out, \delta_\out\cdot n]]_{q, b_\out}$ vector space CSS code and $\cC$ be a $[[d, k_\inn, \delta_{\inn}\cdot d]]_{q, b_\inn}$ vector space CSS code.
Let $\cF$ be the AEL code obtained by using an $(n,d,\lambda)$-expander. Then, $\cF$ is a $[[n, \frac{k_\out \cdot k_\inn}{d}, \delta_R\cdot n]]_{q, db_\inn}$ vector space CSS code where $\delta_R \geq \delta_{\inn} - \frac{\lambda}{\delta_{\out}}$.
\end{theorem}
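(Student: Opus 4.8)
The plan is to derive \cref{thm:ael_final} as a fairly direct consequence of the distance lemma \cref{lem:zfc_dist} together with the structural facts about $\cF$ already established in \cref{cor:ael_rate} and \cref{prop:concat_dual}. The rate and blocklength claims, i.e. that $\cF$ is a $[[n, k_\out k_\inn/d]]_{q,db_\inn}$ vector space CSS code, are immediate from \cref{prop:concat_dual} (which computes $\dim(\ex)-\dim(\ez^\perp)=k_\out k_\inn b_\inn$) and the folding into blocks of size $b_\inn$ described in \cref{cor:ael_rate}. So the only real content is the distance bound $\delta_R \geq \delta_\inn - \lambda/\delta_\out$.

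For the distance bound, recall from \cref{cor:ael_rate} that $\delta_R\cdot n$ is the minimum of $\abs{\braces{i\mid z_{r_i}\neq 0}}$ over $z\in(\ex\setminus\ez^\perp)\cup(\ez\setminus\ex^\perp)$; since the decoder and the whole setup are symmetric in $X$ and $Z$, I would argue the $X$-side and note the $Z$-side is identical. So fix $z\in\ex\setminus\ez^\perp$ and apply \cref{lem:zfc_dist} with $h=0$ (which lies in $\ex$), giving
\[
\Delta_R(z,0) \;\geq\; \delta_\inn - \frac{\lambda}{\Delta_{L,\cz^\perp}(z,0)}\mper
\]
The term $\Delta_R(z,0)$ is exactly $\tfrac1n\abs{\braces{i\mid z_{r_i}\neq 0}}$, so it suffices to lower bound $\Delta_{L,\cz^\perp}(z,0)$. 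Here is where the distance $\delta_\out$ of the outer code enters: by \cref{eqn:coset}-style reasoning, $z_\li\notin \cz^\perp$ precisely when the local inversion $\unconx(z)_\li = u_\li$ is nonzero, where $u\in\dx$ is the outer-code word associated to $z$. Thus $\Delta_{L,\cz^\perp}(z,0) = \tfrac1n\abs{\braces{\li\mid u_\li\neq 0}}$, which is the (folded) relative Hamming weight of $u$ in the outer code $\dx$. Since $z\notin\ez^\perp = \wphix(\dz^\perp)+\F_q^n\otimes\cz^\perp$, the associated $u$ cannot lie in $\dz^\perp$, hence $u\in\dx\setminus\dz^\perp$ and by the definition of the distance of $\cD$ we get $\Delta_{L,\cz^\perp}(z,0)\geq \delta_\out$. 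Plugging this in yields $\Delta_R(z,0)\geq \delta_\inn - \lambda/\delta_\out$, and taking the minimum over all such $z$ (and the symmetric $Z$-side) gives $\delta_R\geq \delta_\inn-\lambda/\delta_\out$, as claimed.

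The step I expect to require the most care is the identification $\Delta_{L,\cz^\perp}(z,0)=\tfrac1n\abs{\braces{\li\mid u_\li\neq 0}}$ together with the claim that $z\notin\ez^\perp \implies u\notin\dz^\perp$. This uses the uniqueness of the local decomposition $x_\li=\phix(u_\li)+z_\li$ (valid because $\mathrm{im}(\phix)=W_X$ is disjoint from $\cz^\perp$), the explicit description of $\ez^\perp$ from \cref{prop:concat_dual}, and the fact that $\phix$ is injective so $\phix(u_\li)+z_\li\in\cz^\perp$ forces $u_\li=0$. One must also be slightly careful that applying \cref{lem:zfc_dist} with $h=0$ is legitimate: $0\in\ex$ and $z\notin 0+\ez^\perp$ by assumption, so the hypothesis is met and in particular $\Delta_{L,\cz^\perp}(z,0)>0$, so the division is valid. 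Everything else—the rate computation, the folding bookkeeping, and the $X$/$Z$ symmetry—is routine given the earlier propositions.
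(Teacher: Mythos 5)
Your proposal is correct and follows essentially the same route as the paper's (very terse) proof: apply \cref{lem:zfc_dist} with $h=0$ and combine it with \cref{cor:ael_rate} for the rate bookkeeping. You go a bit further than the paper by explicitly justifying the step $\Delta_{L,\cz^\perp}(z,0)\geq\delta_\out$ via the uniqueness of the local decomposition, the injectivity of $\phix$, and the description of $\ez^\perp$ from \cref{prop:concat_dual}, which the paper leaves implicit.
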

\begin{proof}
	\cref{cor:ael_rate} gives the dimension of the code. To compute the distance, we observe that  $\delta_R = \min_v \disR{v}{0}$ where the min is over non-trivial codewords. The result now follows from the above distance bound~\cref{lem:zfc_dist}. 
\end{proof}

As mentioned before, we will need an SoS version of the above distance proof for our decoding algorithm. 
Before describing this SoS distance proof and how the above defined notion of partial minimizer assists it, we will need some additional preliminaries about the Sum-of-Squares hierarchy. We discuss these preliminaries in \cref{sec:sos_prelims}, and return to the distance proof in \cref{sec:sos_distance}.

\section{Additional Preliminaries: Sum-of-Squares Hierarchy}\label{sec:sos_prelims}

The sum-of-squares hierarchy of semidefinite programs (SDPs) provides a family of increasingly
powerful convex relaxations for several optimization problems. 
Each ``level" $t$ of the hierarchy is parametrized by a set of constraints corresponding to
polynomials of degree at most $t$ in the optimization variables. While the relaxations in the
hierarchy can be viewed as  semidefinite programs of size $n^{O(t)}$ \cite{BS14, FKP19}, 
it is often convenient to view the solution as a linear operator, called the ``pseudoexpectation" operator.
%

%
%
%
\paragraph{Pseudoexpectations}
Let $t$ be a positive even integer and fix an alphabet $\Sigma$ of size $s$. Let $\zee = \{Z_{i,j}\}_{i\in[m],j\in[s]}$ be a collection of variables and $\R[\zee]^{\leq t}$ be the vector space of polynomials of degree at most $t$ in the variables $\zee$ (including the constants).

\begin{definition}[Constrained Pseudoexpectations]\label{def:constraints_on_sos}
Let $\calS = \inbraces{f_1 = 0, \ldots, f_m = 0, g_1 \geq 0, \ldots, g_r \geq 0}$ be a system of
polynomial constraints in $\zee$, with each polynomial in $\calS$ of degree at most $t$. We say $\tildeEx{\cdot}$ is a pseudoexpectation operator of SoS-degree $t$, over the variables $\zee$  respecting $\calS$, if it is a linear operator $ \tildeEx{\cdot}: \R[\zee]^{\leq t} \rightarrow \R$ such that:
	\begin{enumerate}
	\item $\tildeEx{1} = 1$.
    \item $\tildeEx{p^2} \geq 0$ if $p$ is a polynomial in $\zee = \{Z_{i,j}\}_{i\in [m],j\in [s]}$ of degree $\leq t/2$.
	\item $\tildeEx{p \cdot f_i} = 0$,  $\forall\, i \in [m]$ and $\forall\, p$ such that $\deg(p \cdot f_i) \leq t$.
	\item $\tildeEx{p^2 \cdot \prod_{i \in S} g_i} \geq 0$, $\forall\, S \subseteq [r]$ and $\forall\, p$ such that $\deg(p^2\cdot \prod_{i \in S} g_i) \leq t$.
	\end{enumerate}
\end{definition}
~


%


Let $\mu$ be a distribution over the set of assignments, $\Sigma^m$.  Define the following collection of random variables,  \[\zee = \braces[\big]{ \, Z_{i,j}  = \indi{ i \mapsto j}\, \mid \, i\in[m],\, j\in[s] } .\] Then, setting $\tildeEx{p(\zee)} = \Ex{\mu}{p(\zee)} $ for any polynomial $p(\cdot)$ defines an (unconstrained) pseudoexpectation operator. However, the reverse is not true when $t < m$, and there can be degree-$t$ pseudoexpectations that do not correspond to any genuine distribution, $\mu$. Therefore, the set of all pseudoexpectations should be seen as a relaxation for the set of all possible distributions over such assignments. The upshot of this relaxation is that it is possible to optimize over the set. Under certain conditions on the bit-complexity of solutions~\cite{OD16, RW17:sos}, one can optimize over the set of degree-$t$ pseudoexpectations in time $m^{O(t)}$ via SDPs.

\paragraph{Local constraints and local functions.}
Any constraint that involves at most $k$ variables from $\zee$, with $k\leq t$, can be written as a degree-$k$ polynomial, and such constraints may be enforced into the SoS solution.
%
%
In particular, we will always consider the following canonical constraints on the variables $\zee$.
\ifnum\confversion=1
\begin{align*}
&Z_{i,j}^2 = Z_{i,j},\ \forall i\in[m],j\in[s] \\
\text{and} \quad &\sum_j Z_{i,j} = 1,\ \forall i\in[m] \mper
\end{align*}
\else
\[
Z_{i,j}^2 = Z_{i,j},\ \forall \,i\in[m],j\in[s] 
\quad \text{and} \quad 
\sum_j Z_{i,j} = 1,\ \forall\, i\in[m] \mper
\]
\fi
%
%
%
We will also consider additional constraints and corresponding polynomials, defined by ``local" functions. For any $f\in \Sigma^m$ and $M\sub [m]$, we use $f_M$ to denote the restriction $f|_M$, and $f_i$ to denote $f_{\{i\}}$ for convenience.
\begin{definition}[$k$-local function]
	A function $\mu: \Sigma^m \rightarrow \R$ is called $k$-local if there is a set $M\subseteq [m]$ of size $k$ such that $\mu(f)$ only depends on $\inbraces{f(i)}_{i\in M}$, or equivalently, $\mu(f)$ only depends on $f|_M$.
	
	If $\mu$ is $k$-local, we abuse notation to also use $\mu: \Sigma^M \rightarrow \R$ with $\mu(\alpha) = \mu(f)$ for any $f$ such that $f|_M=\alpha$. It will be clear from the input to the function $\mu$ whether we are using $\mu$ as a function on $\Sigma^m$ or $\Sigma^M$.
\end{definition}

Let $\mu:\Sigma^m\rightarrow \R$ be a $k$-local function that depends on coordinates $M\subseteq [m]$ with $|M|=k$. Then $\mu$ can be written as a degree-$k$ polynomial $p_{\mu}$ in $\zee$:
\[
	p_{\mu}(\zee) = \sum_{\alpha \in \Sigma^M} \parens[\Big]{\mu(\alpha) \cdot\prod_{i\in M} Z_{i,\alpha_i}}
\]


With some abuse of notation, we let $\mu(\zee)$ denote $p_{\mu}(\zee)$. We will use such $k$-local
functions inside $\tildeEx{\cdot}$ freely without worrying about their polynomial
representation. For example, $\tildeEx{ \indi{\zee_{i} \neq j}}$ denotes $\tildeEx{ 1- Z_{i,j}}$. 
%
The notion of $k$-local functions can also be extended from real valued functions to vector valued functions in a straightforward way.

\begin{definition}[Vector-valued Local functions]
A function $\mu: \Sigma^m \rightarrow \R^N$ is $k$-local if the $N$ real valued functions corresponding to the $N$ coordinates are all $k$-local. Note that these different coordinate functions may depend on different sets of variables, as long as the number is at most $k$ for each of the functions.
\end{definition}
\paragraph{Local distribution view of SoS}

It will be convenient to use a shorthand for the function $\indi{\zee_A = \alpha}$, and we will use $\zee_{A,\alpha}$. Likewise, we use $\zee_{i,j}$ as a shorthand for the function $\indi{\zee_i = j}$. That is, henceforth,
\ifnum\confversion=1
\begin{align*}
	&\tildeEx{\zee_{A,\alpha}} = \tildeEx{\indi{\zee_A = \alpha}} = \tildeEx{ \prod_{a\in A}
                             Z_{a,\alpha_a}}
                             \\
	\text{and } \quad & \tildeEx{\zee_{i,j}} = \tildeEx{\indi{\zee_i = j}} = \tildeEx{ Z_{i,j}}.
\end{align*}
\else
\begin{align*}
	\tildeEx{\zee_{A,\alpha}} ~=~ \tildeEx{\indi{\zee_A = \alpha}} ~=~ \tildeE\brackets[\Big]{\prod_{a\in A}
                      Z_{a,\alpha_a}
                                    }
\qquad \text{and} \qquad
	\tildeEx{\zee_{i,j}} ~=~ \tildeEx{\indi{\zee_i = j}} = \tildeEx{ Z_{i,j}}
\end{align*}
\fi


Note that for any $A \subseteq [m]$ with $\abs*{A} = k \leq t/2$,
\ifnum\confversion=1
\begin{gather*}
	\sum_{ \alpha \in \Sigma^{k}} \tildeEx{\zee_{A,\alpha}} = 
 \tildeEx{ \prod_{a\in A} \inparen{ \sum_{j\in \Sigma} Z_{a,j}} } = 1
\\
	\tildeEx{\zee_{A,\alpha}} = \tildeEx{ \prod_{a\in A} Z_{a,\alpha_a}} = \tildeEx{ \prod_{a\in
            A} Z^2_{a,\alpha_a}} \geq 0 \mper
\end{gather*}
\else
\[
	\sum_{ \alpha \in \Sigma^{k}} \tildeEx{\zee_{A,\alpha}} = 
 \tildeE\brackets[\bigg]{ \prod_{a\in A} \parens[\bigg]{\sum_{j\in \Sigma} Z_{a,j} } } = 1
\qquad \text{and} \qquad
	\tildeEx{\zee_{A,\alpha}} = \tildeE\brackets[\bigg]{ \prod_{a\in A} Z_{a,\alpha_a}} = \tildeE\brackets[\bigg]{ \prod_{a\in
            A} Z^2_{a,\alpha_a}} \geq 0 \mper
\]
\fi
Thus, the values $\inbraces{\tildeEx{\zee_{A, \alpha}}}_{\alpha\in \Sigma^A}$ define a distribution
over $\Sigma^k$. We call this the local distribution for $\zee_A$, or simply for $A$.
%
%
Let $\mu: \Sigma^m \rightarrow\R$ be a $k$-local function for $k\leq t/2$, depending on $M \subseteq
[m]$. Then,
\ifnum\confversion=1
\begin{align*}
	\tildeEx{\mu(\zee)} 
~=~& \tildeEx{\sum_{\alpha\in \Sigma^M} \inparen{\mu(\alpha) \cdot\prod_{i\in M} Z_{i,\alpha_i}}}\\
~=~& \sum_{\alpha\in \Sigma^M} \mu(\alpha) \cdot \tildeEx{\prod_{i\in M} Z_{i,\alpha_i}}\\
~=~& \sum_{\alpha\in \Sigma^M} \mu(\alpha) \cdot \tildeEx{\zee_{M,\alpha}}
\end{align*}
\else
\begin{align*}
	\tildeEx{\mu(\zee)} 
~=~ \tildeE\brackets[\bigg]{\sum_{\alpha\in \Sigma^M} \parens[\bigg]{\mu(\alpha) \cdot\prod_{i\in M} Z_{i,\alpha_i}}}
~=~ \sum_{\alpha\in \Sigma^M} \mu(\alpha) \cdot \tildeE\brackets[\Big]{\prod_{i\in M} Z_{i,\alpha_i}}
~=~ \sum_{\alpha\in \Sigma^M} \mu(\alpha) \cdot \tildeEx{\zee_{M,\alpha}}
\end{align*}
\fi
That is, $\tildeEx{\mu(\zee)}$ can be seen as the expected value of the function $\mu$ under the local distribution for $M$.

\begin{claim}\label{claim:sos_domination}
	Let $\tildeEx{\cdot}$ be a degree-$t$ pseudoexpectation. For $k \leq t/2$, let $\mu_1,\mu_2$
        be two $k$-local functions on $\Sigma^m$, depending on the same set of coordinates $M$, and
        $\mu_1(\alpha) \leq \mu_2(\alpha) ~~\forall \alpha \in \Sigma^M$. Then $\tildeEx{\mu_1(\zee)} \leq \tildeEx{\mu_2(\zee)}$.
%
\end{claim}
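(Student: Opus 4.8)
The plan is to reduce the claim to the nonnegativity of the local distribution values $\tildeEx{\zee_{M,\alpha}}$, which was already observed in the discussion immediately preceding the claim. First I would pass to polynomial representations: since $\mu_1$ and $\mu_2$ are $k$-local and depend on the \emph{same} coordinate set $M$ with $|M| = k \leq t/2 \leq t$, linearity of $\tildeEx{\cdot}$ gives $\tildeEx{\mu_i(\zee)} = \sum_{\alpha \in \Sigma^M} \mu_i(\alpha)\cdot \tildeEx{\zee_{M,\alpha}}$ for $i \in \{1,2\}$, which is exactly the computation carried out just above for a single $k$-local function.

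Next I would subtract the two expressions. Because both sums range over the same index set $\Sigma^M$, we obtain $\tildeEx{\mu_2(\zee)} - \tildeEx{\mu_1(\zee)} = \sum_{\alpha \in \Sigma^M} \bigl(\mu_2(\alpha) - \mu_1(\alpha)\bigr)\cdot \tildeEx{\zee_{M,\alpha}}$. By hypothesis each coefficient $\mu_2(\alpha) - \mu_1(\alpha)$ is nonnegative, and each factor $\tildeEx{\zee_{M,\alpha}} = \tildeEx{\bigl(\prod_{i\in M} Z_{i,\alpha_i}\bigr)^2} \geq 0$, using that $\prod_{i\in M} Z_{i,\alpha_i}$ has degree $k \leq t/2$ and is idempotent via the canonical constraints $Z_{i,j}^2 = Z_{i,j}$, so the square-nonnegativity axiom of Definition~\ref{def:constraints_on_sos} applies termwise. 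Hence the right-hand side is a sum of nonnegative terms, and the claimed inequality $\tildeEx{\mu_1(\zee)} \leq \tildeEx{\mu_2(\zee)}$ follows.

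The argument is essentially immediate once the local-distribution viewpoint is set up; the only point that needs care — and the closest thing to an obstacle — is confirming that all monomials $\prod_{i\in M} Z_{i,\alpha_i}$ stay within the pseudoexpectation's degree budget so that the axioms may be invoked term by term, which is guaranteed precisely by the hypothesis $k \leq t/2$. Notably, no inequality constraints $g_i \geq 0$ and no equality constraints beyond the canonical ones are required, so the same proof works for any constrained (or unconstrained) degree-$t$ pseudoexpectation.
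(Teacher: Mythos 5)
Your proof is correct and takes essentially the same route as the paper's: the paper invokes the local distribution $\calD_M$ induced on $\Sigma^M$ and writes each $\tildeEx{\mu_i(\zee)}$ as $\Ex{\alpha\sim \calD_M}{\mu_i(\alpha)}$, while you unpack that abstraction into the explicit sum $\sum_{\alpha}\mu_i(\alpha)\,\tildeEx{\zee_{M,\alpha}}$ and appeal directly to $\tildeEx{\zee_{M,\alpha}}\geq 0$. Both arguments rest on precisely the nonnegativity and normalization facts established in the paragraph preceding the claim, and your degree-budget check ($k\leq t/2$) is exactly the hypothesis that makes those facts available.
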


\begin{proof}
Let $\calD_M$ be the local distribution induced by $\tildeEx{\cdot}$ for $\zee_M$. Then
$\tildeEx{\mu_1(\zee)} = \Ex{\alpha \sim \calD_M}{\mu_1(\alpha)}$, and $\tildeEx{\mu_2(\zee)} =
\Ex{\alpha\sim \calD_M}{\mu_2(\alpha)}$, which implies $\tildeEx{\mu_1(\zee)} \leq \tildeEx{\mu_2(\zee)}$.
%
\end{proof}
The previous claim allows us to replace any local function inside $\tildeEx{\cdot}$ by another local function that dominates it. We will make extensive use of this fact.
\vspace{-5 pt}
\paragraph{Covariance for SoS solutions}
Given two sets $S,T \sub [m]$ with $|S|,|T|\leq k/4$, we can define the covariance between indicator random variables of $\zee_S$ and $\zee_T$ taking values $\alpha$ and $\beta$ respectively, according to the local distribution over $S \cup T$. This is formalized in the next definition.
\begin{definition}
Let $\tildeEx{\cdot}$ be a pseudodistribution operator of SoS-degree-$t$, and $S,T$ are two sets of
size at most $t/4$, and $\alpha\in \Sigma^S$, $\beta\in \Sigma^T$, we define the pseudo-covariance and
pseudo-variance,
\ifnum\confversion=1
\small
\begin{gather*}
\tildecov(\zee_{S,\alpha},\zee_{T,\beta}) 
= \tildeEx{ \zee_{S,\alpha} \cdot \zee_{T,\beta} } - \tildeEx{\zee_{S,\alpha}} \tildeEx{\zee_{T,\beta}} \\	
\tildeVar{\zee_{S,\alpha}} ~=~ \tildecov(\zee_{S,\alpha},\zee_{S,\alpha})
\end{gather*}
\normalsize
\else
\begin{align*}
\tildecov(\zee_{S,\alpha},\zee_{T,\beta}) 
~&=~ \tildeEx{ \zee_{S,\alpha} \cdot \zee_{T,\beta} } - \tildeEx{\zee_{S,\alpha}} \,\tildeEx{\zee_{T,\beta}}\\ 		
\tildeVar{\zee_{S,\alpha}} ~&=~ \tildecov(\zee_{S,\alpha},\zee_{S,\alpha})
\end{align*}
\fi
The above definition is extended to pseudo-covariance and pseudo-variance for pairs of sets $S,T$, 
as the sum of absolute value of pseudo-covariance for all pairs $\alpha,\beta$ :
\ifnum\confversion=1
\begin{gather*}
\tildecov(\zee_S,\zee_T) 
~=~ \sum_{\alpha\in \Sigma^S \atop \beta\in \Sigma^T} \abs*{ \tildecov(\zee_{S,\alpha},\zee_{T,\beta}) } \\
\tildeVar{\zee_S} ~=~ \sum_{\alpha\in \Sigma^S} \abs*{ \tildeVar{\zee_{S,\alpha} } }
\end{gather*}
\else
\begin{align*}
\tildecov(\zee_S,\zee_T) 
~&=~ \sum_{\alpha\in \Sigma^S, \beta\in \Sigma^T} \abs*{ \tildecov(\zee_{S,\alpha},\zee_{T,\beta}) }\\[3pt]
\tildeVar{\zee_S} ~&=~ \sum_{\alpha\in \Sigma^S} \abs*{ \tildeVar{\zee_{S,\alpha} } }
\end{align*}
\fi
\end{definition}

%
We will need the fact that $\tildeVar{\zee_S}$ is bounded above by 1, since,
\ifnum\confversion=1
\begin{align*}
\tildeVar{\zee_S} 
&~=~ \sum_{\alpha} \abs*{\tildeVar{\zee_{S,\alpha}}} \\
&~=~ \sum_{\alpha}\inparen{
          \tildeEx{\zee_{S,\alpha}^2} - \tildeEx{\zee_{S,\alpha}}^2} \\
&~\leq~ \sum_{\alpha} \tildeEx{\zee_{S,\alpha}^2} \\
&~=~ \sum_{\alpha} \tildeEx{\zee_{S,\alpha}} 
~=~ 1.
\end{align*}
\else
\[
\tildeVar{\zee_S} 
~=~ \sum_{\alpha} \abs*{\tildeVar{\zee_{S,\alpha}}} 
~=~ \sum_{\alpha}\inparen{
          \tildeEx{\zee_{S,\alpha}^2} - \tildeEx{\zee_{S,\alpha}}^2} 
~\leq~ \sum_{\alpha} \tildeEx{\zee_{S,\alpha}^2} 
~=~ \sum_{\alpha} \tildeEx{\zee_{S,\alpha}} 
~=~ 1.
\]
\fi

%
\vspace{-5 pt}
\paragraph{Conditioning SoS solutions.}
We will also make use of conditioned pseudoexpectation operators, which may be defined in a way
similar to usual conditioning for true expectation operators, as long as the event we condition on
is local. 
The conditioned SoS solution is of a smaller degree but continues to respect the constraints that the original solution respects.

\begin{definition}[Conditioned SoS Solution] Let $F \subseteq \Sigma^m$ be subset (to be thought of as an event) such that $\one_F:\Sigma^m \rightarrow \{0,1\}$ is a $k$-local function. Then for every $t>2k$, we can condition a pseudoexpectation operator of SoS-degree $t$ on $F$ to obtain a new conditioned pseudoexpectation operator $\condPE{\cdot}{F}$ of SoS-degree $t-2k$, as long as $\tildeEx{\one^2_F(\zee)}>0$. The conditioned SoS solution is given by
\[
	\condPE{ p(\zee)}{F(\zee) } \defeq \frac{\tildeEx{p(\zee) \cdot \one^2_{F}(\zee)}}{\tildeEx{\one^2_{F}(\zee)}}
\]
where $p$ is any polynomial of degree at most $t-2k$.
\end{definition}

We can also define pseudocovariances and pseudo-variances for the conditioned SoS solutions.
\begin{definition}[Pseudocovariance]
	Let $F\sub \Sigma^m$ be an event such that $\one_F$ is $k$-local, and let $\tildeEx{\cdot}$ be a pseudoexpectation operator of degree $t$, with $t>2k$. Let $S,T$ be two sets of size at most $\frac{t-2k}{2}$ each. Then the pseudocovariance between $\zee_{S,\alpha}$ and $\zee_{T,\beta}$ for the solution conditioned on event $F$ is defined as,
\ifnum\confversion=1
\begin{multline*}
\tildecov(\zee_{S,\alpha},\zee_{T,\beta} \vert F) = \\
\tildeEx{ \zee_{S,\alpha} \zee_{T,\beta} \vert F} - \tildeEx{\zee_{S,\alpha} \vert F} \tildeEx{\zee_{T,\beta} \vert F}
\end{multline*}
\else
\begin{align*}
\tildecov(\zee_{S,\alpha},\zee_{T,\beta} \vert F) 
~=~ \tildeEx{ \zee_{S,\alpha} \zee_{T,\beta} \vert F} - \tildeEx{\zee_{S,\alpha} \vert F} ~ \tildeEx{\zee_{T,\beta} \vert F}
\end{align*}
\fi
\end{definition}

We also define the pseudocovariance between $\zee_{S,\alpha}$ and $\zee_{T,\beta}$ after
conditioning on a random assignment for some $\zee_V$ with $V\sub [m]$. 
Note that here the random assignment for $\zee_V$ is chosen according to the local distribution for
the set $V$.

\begin{definition}[Pseudocovariance for conditioned pseudoexpectation operators]
\ifnum\confversion=1
\begin{multline*}
\tildecov(\zee_{S,\alpha},\zee_{T,\beta} \vert \zee_V) 
= \\ \sum_{\gamma \in \Sigma^V}   \tildecov(\zee_{S,\alpha},\zee_{T,\beta} \vert \zee_V = \gamma) \cdot \tildeEx{\zee_{V,\gamma}}
	\end{multline*}
\else
\begin{align*}
\tildecov(\zee_{S,\alpha},\zee_{T,\beta} \vert \zee_V) 
~=~ \sum_{\gamma \in \Sigma^V}   \tildecov(\zee_{S,\alpha},\zee_{T,\beta} \vert \zee_V = \gamma) \cdot \tildeEx{\zee_{V,\gamma}}
\end{align*}
\fi
\end{definition}

And we likewise define $\tildeVar{\zee_{S,\alpha} \vert \zee_V}$, $\tildecov(\zee_S, \zee_T \vert \zee_V)$ and $\tildeVar{\zee_S \vert \zee_V}$.

\subsection{SoS relaxations for codes}

\paragraph{Pseudocodewords for AEL}

Let $G(L,R,E)$ be the bipartite $(n,d,\lambda)$-expander on which the AEL code is defined, and let $\cC = (\cx,\cz)$ be the inner code. The SoS variables will be $\zee = \{Z_{e,j}\}_{e\in E,j\in \Sigma}$, where $\Sigma = \F_q^{b_{\inn}}$. 

The subsets of $E$ we consider will all be vertex neighborhoods from the left or right. To simplify notation, we use the convention that if $\li \in L$, then $\zee_{\li}$ denotes $\zee_{N(\li)}$, where $N(\li)$ is the set of all edges incident on the left vertex $\li$. Likewise, if $\ri \in R$, $\zee_{\ri}$ denotes $\zee_{N(\ri)}$, where $N(\ri)$ denotes the set of all edges incident on the right vertex $\ri$.

We also extend this shorthand to include subsets of $L$ and $R$ so that if $S \sub L$ ($R$ respectively), then $\zee_S$ denotes $\zee_{N(S)}$ where $N(S)$ is the set of all edges in $E$ incident on left (right respectively) vertices in $S$.

\begin{definition}[$X$-Pseudocodewords]
	For $t\geq 2d$, we define a degree-$t$ AEL pseuocodeword to be a degree-$t$ pseudoexpectation operator $\tildeEx{\cdot}$ on $\zee$ respecting the following constraints:
	\begin{align*}
		\forall\, \li \in L,\quad \zee_{\li} \in \cx
	\end{align*}
\end{definition}
We now define a generalized version of the notion of distance between a pseudocodeword $\tildeEx{\cdot}$, and a true codeword $h$.  

\begin{definition}[Distance from a pseudocodeword]
We define three types of distances between a pseudocodeword $\tildeEx{\cdot}$ of SoS-degree $t\geq 2d$ and a (unfolded) codeword $h \in \Sigma^E$ are defined as
\begin{align*}
 	\distL{\tildeEx{\cdot}} ~&=~ \Ex{\li\sim L}{\tildeEx{\indi{ \zee_\li \neq h_{\li}}}}\\
\distLperp{\tildeEx{\cdot}} ~&=~ \Ex{\li \sim L}{\tildeEx{\indi{ \zee_\li \not\in h_{\li} + \cz^\perp }}}\\
 \distR{\tildeEx{\cdot}} ~&=~\Ex{r\sim R}{\tildeEx{\indi{ \zee_{r} \neq h_{r}}}}
 \end{align*}
 \end{definition}
 
Next, we define a non-convex property for a pseudocodeword that allows us to prove a distance property similar to \cref{lem:zfc_dist} for pseudocodewords.

\begin{definition}\label{def:eta_good}
	A pseudocodeword is $\eta$-good if,
	\[
		\Ex{\li,\ri}{ \tildeCov{ \zee_{\li}}{\zee_{\ri} } } \leq \eta
	\]
\end{definition}
%
%
The following lemma from \cite{JST23} captures the main property of $\eta$-good pseudocodoewords that we will be using. The proof of this lemma can be found in \cref{app:correlation_rounding} where it is restated as \cref{lem:using_eta_good}.
\begin{lemma}\label{lem:eta_good}
	Let $\{X_{\li}\}_{\li\in L}$ and $\{Y_{\ri}\}_{\ri\in R}$ be two collections of $d$-local functions on $\Sigma^E$ such that for every $\li\in L$, $X_{\li}(f)$ only depends on $f_{\li}$ and for every $\ri\in R$, $Y_{\ri}(f)$ only depends on $f_{\ri}$. Then, for an $\eta$-good pseudocodeword $\tildeEx{\cdot}$,
	\[
		\Ex{\li,\ri}{\tildeEx{X_{\li}(\zee) \cdot Y_{\ri}(\zee)}} ~\leq~ \Ex{\li,\ri}{\tildeEx{X_{\li}(\zee)} \cdot \tildeEx{Y_{\ri}(\zee)}} + \eta\, \parens[\Big]{ \max_{\li} \norm{X_{\li}}_{\infty} } \parens[\Big]{ \max_{\ri} \norm{Y_{\ri}}_{\infty}}
	\]
\end{lemma}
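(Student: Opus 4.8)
The plan is to prove \cref{lem:eta_good} by reducing the bilinear correlation bound to the definition of $\eta$-goodness via the local distribution interpretation of the pseudoexpectation. First I would expand each $d$-local function in its Fourier-like (indicator) basis: since $X_{\li}(f)$ depends only on $f_{\li}$, write $X_{\li}(\zee) = \sum_{\alpha \in \Sigma^{N(\li)}} X_{\li}(\alpha)\, \zee_{\li,\alpha}$, and similarly $Y_{\ri}(\zee) = \sum_{\beta \in \Sigma^{N(\ri)}} Y_{\ri}(\beta)\, \zee_{\ri,\beta}$. Then for each edge $(\li,\ri)$, linearity of $\tildeEx{\cdot}$ gives
\[
\tildeEx{X_{\li}(\zee)\cdot Y_{\ri}(\zee)} - \tildeEx{X_{\li}(\zee)}\,\tildeEx{Y_{\ri}(\zee)} = \sum_{\alpha,\beta} X_{\li}(\alpha) Y_{\ri}(\beta)\, \tildeCov{\zee_{\li,\alpha}}{\zee_{\ri,\beta}}.
\]
This requires $t \geq 2d$ so that the products $\zee_{\li,\alpha}\zee_{\ri,\beta}$ (degree $2d$) are valid inside $\tildeEx{\cdot}$, which is exactly the standing assumption on pseudocodewords.

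Next I would bound the right-hand side in absolute value by pulling out the sup-norms: $|X_{\li}(\alpha)| \leq \|X_{\li}\|_\infty \leq \max_{\li}\|X_{\li}\|_\infty$ and likewise for $Y$, so the per-edge error is at most $(\max_{\li}\|X_{\li}\|_\infty)(\max_{\ri}\|Y_{\ri}\|_\infty)\sum_{\alpha,\beta}|\tildeCov{\zee_{\li,\alpha}}{\zee_{\ri,\beta}}| = (\max_{\li}\|X_{\li}\|_\infty)(\max_{\ri}\|Y_{\ri}\|_\infty)\cdot \tildeCov{\zee_{\li}}{\zee_{\ri}}$, using the definition of $\tildeCov{\zee_S}{\zee_T}$ as the sum of absolute covariances over all pairs of assignments. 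Averaging over edges $(\li,\ri)\sim E$ and applying the $\eta$-goodness hypothesis $\Ex{\li,\ri}{\tildeCov{\zee_\li}{\zee_\ri}} \leq \eta$ then yields
\[
\Ex{\li,\ri}{\tildeEx{X_{\li}(\zee)Y_{\ri}(\zee)}} - \Ex{\li,\ri}{\tildeEx{X_{\li}(\zee)}\tildeEx{Y_{\ri}(\zee)}} \leq \eta\,\Big(\max_{\li}\|X_{\li}\|_\infty\Big)\Big(\max_{\ri}\|Y_{\ri}\|_\infty\Big),
\]
which is the claimed inequality.

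One subtlety I would be careful about: the statement as written is the version where the covariance bound is in expectation over edges, whereas the underlying result from \cite{JST23} (namely \cref{lem:using_eta_good} in \cref{app:correlation_rounding}) is typically obtained by \emph{correlation rounding} — conditioning the pseudoexpectation on a small random subset of coordinates to force the average covariance down to $\eta$, at the cost of a slight degree loss. Here, however, $\eta$-goodness is \emph{assumed} as a hypothesis (\cref{def:eta_good}), so no conditioning step is needed in this lemma itself; the work of producing an $\eta$-good pseudocodeword is deferred to the appendix. Thus the main obstacle is not analytic but bookkeeping: making sure the indicator-basis expansion, the degree accounting ($2d \leq t$), and the definition of the aggregated pseudo-covariance $\tildeCov{\zee_\li}{\zee_\ri}$ line up exactly, and that the sup-norm factors are attached to the correct collection ($X$ to the left, $Y$ to the right). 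Since all of these are routine given the preliminaries already set up, I expect the proof to be short, with the only genuine content being the observation that the per-edge covariance sum telescopes into $\tildeCov{\zee_\li}{\zee_\ri}$ and that $\eta$-goodness is precisely the averaged form of this quantity.
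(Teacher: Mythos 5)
Your proof is correct and is essentially the same argument the paper gives in \cref{app:correlation_rounding} (where the lemma is restated as \cref{lem:using_eta_good}, with the sup-norms normalized to $1$): expand $X_\li$ and $Y_\ri$ in the indicator basis, use linearity of $\tildeEx{\cdot}$ to rewrite the per-pair difference as $\sum_{\alpha,\beta} X_\li(\alpha)Y_\ri(\beta)\,\tildecov(\zee_{\li,\alpha},\zee_{\ri,\beta})$, pull out the sup-norms to leave $\tildecov(\zee_\li,\zee_\ri)$, and apply $\eta$-goodness. One small wording slip worth flagging: you say "averaging over edges $(\li,\ri)\sim E$", but the expectation in both the lemma statement and in \cref{def:eta_good} is $\Ex{\li,\ri}{\cdot}$, taken over \emph{independent uniform} $\li\in L$, $\ri\in R$ — not over edges (the paper reserves $\Ex{\li\sim\ri}{\cdot}$ for that). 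Your displayed formulas do average uniformly, so the mathematics is fine; just be careful not to conflate the two measures, since the whole point of passing from $\Ex{\li\sim\ri}{\cdot}$ to $\Ex{\li,\ri}{\cdot}$ is handled separately by the expander mixing lemma (\cref{lem:eml_pseudoexpectation}), and this lemma kicks in only after that step.
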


Next, we show that a variant of the expander mixing lemma can be adapted to work inside the SoS proof system by exploiting the non-negativity of sum-of-squares of low-degree polynomials. The next lemma is again borrowed from \cite{JST23}, although it has appeared in different forms before in the literature, and we include a proof in \cref{app:eml_sos} where it is restated as \cref{lem:eml_pexp_appendix}.

\begin{lemma}[EML for pseudoexpectations]\label{lem:eml_pseudoexpectation}
	Let $\{X_{\li}\}_{\li\in L}$ and $\{Y_{\ri}\}_{\ri\in R}$ be two collections of $d$-local functions on $\Sigma^E$ such that for every $\li\in L$, $X_{\li}(f)$ only depends on $f_\li$ and for every $\ri\in R$, $Y_{\ri}(f)$ only depends on $f_\ri$.
Then for a $\lambda$-spectral expander, we have
	\[
		\abs*{\Ex{\li\sim \ri}{\tildeEx{X_{\li}(\zee) \cdot Y_{\ri}(\zee) }} -
                  \Ex{\li,\ri}{\tildeEx{X_{\li}(\zee) \cdot Y_{\ri}(\zee)}}} \leq \lambda
                \sqrt{\Ex{\li}{\tildeEx{X_{\li}(\zee)^2}}} \cdot \sqrt{\Ex{\ri}
                  {\tildeEx{Y_{\ri}(\zee)^2}}} \mper
	\]
\end{lemma}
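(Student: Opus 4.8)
The plan is to lift the classical expander mixing lemma (\cref{lem:eml}) into the SoS proof system by packaging the underlying spectral inequality as a single positive semidefinite quadratic form and then substituting the low-degree polynomials $X_\li(\zee)$ and $Y_\ri(\zee)$ into it. First I would work on the spaces $\R^L$ and $\R^R$ equipped with the expectation inner products: let $M \colon \R^R \to \R^L$ be the bipartite random-walk operator $(Mx)_\li = \Ex{\ri \sim \li}{x_\ri}$, so that $\angles{x, My}_L = \Ex{(\li,\ri)\sim E}{x_\li y_\ri}$, and let $J$ be the rank-one operator with $\angles{x, Jy}_L = \Ex{\li}{x_\li}\cdot\Ex{\ri}{y_\ri}$. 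The hypothesis $\sigma_2(A_G) \le \lambda d$ is precisely $\norm{M-J}_{\mathrm{op}} \le \lambda$ (the top singular direction of $M$ is the constant vector, which is exactly what $J$ removes), and by a Schur complement computation this is equivalent to
\[
P ~\defeq~ \begin{pmatrix} \lambda\,\mathrm{Id}_L & -(M-J) \\ -(M-J)^{*} & \lambda\,\mathrm{Id}_R \end{pmatrix} ~\succeq~ 0 .
\]

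The crucial observation is then that, since $P \succeq 0$, its quadratic form on a vector $(x,y) \in \R^L \oplus \R^R$ is a sum of squares of \emph{linear} forms in the coordinates $(x_\li)_{\li}$ and $(y_\ri)_{\ri}$; that is,
\[
\lambda \Ex{\li}{x_\li^2} + \lambda \Ex{\ri}{y_\ri^2} - 2\parens[\Big]{\Ex{(\li,\ri)\sim E}{x_\li y_\ri} - \Ex{\li}{x_\li}\Ex{\ri}{y_\ri}}
\]
is identically a sum of squares of linear forms in $x$ and $y$. Now I would substitute $x_\li \mapsto X_\li(\zee)$ and $y_\ri \mapsto Y_\ri(\zee)$; since $X_\li$ is $d$-local and depends only on $f_\li$, and likewise $Y_\ri$ only on $f_\ri$, each is a polynomial of degree at most $d$ in $\zee$, so the displayed expression turns into a sum of squares of polynomials of degree at most $d$. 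Hence any pseudoexpectation $\tildeEx{\cdot}$ of SoS-degree $t \ge 2d$ assigns it a nonnegative value, and, using linearity of $\tildeEx{\cdot}$, this rearranges into
\[
\Ex{\li\sim\ri}{\tildeEx{X_\li(\zee) Y_\ri(\zee)}} - \Ex{\li,\ri}{\tildeEx{X_\li(\zee) Y_\ri(\zee)}} ~\le~ \frac{\lambda}{2}\parens[\Big]{\Ex{\li}{\tildeEx{X_\li(\zee)^2}} + \Ex{\ri}{\tildeEx{Y_\ri(\zee)^2}}} .
\]

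To turn the arithmetic mean on the right into the geometric mean in the statement, I would apply this inequality after rescaling $X_\li \mapsto cX_\li$ and $Y_\ri \mapsto c^{-1}Y_\ri$ for a parameter $c > 0$: the left-hand side is bilinear and hence unchanged, whereas optimizing the right-hand side over $c$ produces $\lambda\,\sqrt{\Ex{\li}{\tildeEx{X_\li(\zee)^2}}}\cdot\sqrt{\Ex{\ri}{\tildeEx{Y_\ri(\zee)^2}}}$. Replacing $Y_\ri$ by $-Y_\ri$ and rerunning the argument flips the sign of the left-hand side while leaving the right-hand side untouched, so together the two applications give the claimed bound with absolute values. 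The step I expect to be the main obstacle is the substitution: one has to verify that a positive semidefinite quadratic form stays a sum of squares once degree-$\le d$ polynomials are plugged in for its variables — which works precisely because PSD-ness yields squares of \emph{linear} forms \emph{before} substitution, so the degree only climbs to $d$ — and that SoS-degree $t \ge 2d$ is enough for $\tildeEx{\cdot}$ to certify nonnegativity of the resulting degree-$2d$ polynomial. A smaller loose end is the degenerate case $\Ex{\li}{\tildeEx{X_\li(\zee)^2}} = 0$ (or its mirror image), where the rescaling degenerates; there I would instead use that $\tildeEx{X_\li(\zee)^2} = 0$ for every $\li$ together with the pseudoexpectation Cauchy--Schwarz inequality $\tildeEx{X_\li Y_\ri}^2 \le \tildeEx{X_\li^2}\,\tildeEx{Y_\ri^2}$ to conclude that both sides of the desired inequality vanish.
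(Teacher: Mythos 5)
Your proof is correct, and it takes a genuinely different route than the paper's. The paper (in \cref{lem:eml_pexp_appendix}) works in the opposite direction: it forms the $2n\times 2n$ pseudomoment matrix $M$ with entries $\tildeEx{X_iX_j}$, $\tildeEx{X_iY_j}$, $\tildeEx{Y_iY_j}$, proves $M\succeq 0$ by writing $v^\top M v = \tildeEx{\Ex{i}{(x_i X_i(\zee)+y_i Y_i(\zee))}^2}$, extracts Cholesky vectors $v_\li, u_\ri$, and then invokes a separate vector-valued version of the expander mixing lemma (\cref{high_dimensional_eml}) on those vectors. You instead certify that the \emph{classical} EML is itself an SoS inequality: the Schur-complement characterization turns $\sigma_2\le\lambda d$ into a PSD block matrix $P$, whose quadratic form is a sum of squares of \emph{linear} forms in $(x,y)$; substituting the degree-$d$ polynomials $X_\li(\zee)$ and $Y_\ri(\zee)$ keeps it SOS of degree $2d$, so any degree-$\ge 2d$ pseudoexpectation respects it. The AM bound then gives the GM bound by the standard $c,c^{-1}$ rescaling, and flipping $Y\mapsto -Y$ gives the absolute value. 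The two arguments are essentially dual: the paper pushes pseudoexpectations into vectors and uses the classical EML as a black box; you push the classical EML's certificate into a low-degree SOS identity and apply $\tildeEx{\cdot}$ directly. Yours is somewhat more in the spirit of the ``proofs to algorithms'' paradigm the paper invokes elsewhere, since it exhibits the EML explicitly as an SoS proof; the paper's is shorter given that it has already stated the vector EML as a lemma. One small note: the Schur-complement step silently assumes $\lambda>0$, but $\lambda=0$ forces $M-J=0$ and the inequality is vacuous there, so this is harmless. Your handling of the degenerate case $\Ex{\li}{\tildeEx{X_\li^2}}=0$ via pseudo-Cauchy--Schwarz is also fine.
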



\section{SoS Proof of Distance}\label{sec:sos_distance}



We start by extending the definition of partial minimizer to be a vector-valued local function so that can be used inside pseudoexpectations. For a fixed codeword $h$, we define
\[
	\partmin(\zee,h)_{\li} = \begin{cases}
		\zee_{\li} \;\; \text { if } \;\;  \zee_{\li} \not\in\,  h_{\li} + \cz^\perp \\[6pt]
		h_{\li} \;\; \text { if } \;\; \zee_{\li} \in\,  h_{\li} + \cz^\perp 
	\end{cases}
\]
More explicitly, $\partmin(\zee,h)_{\li}$ is shorthand for $\indi{\zee_{\li} \not\in h_{\li} + \cz^{\perp}} \cdot \zee_{\li} + \indi{\zee_{\li} \in h_{\li} + \cz^{\perp}} \cdot h_{\li}$.
%
\begin{align} 
	&\text{ (\small{Coset-preserving}) } \;\;\; &&\partmin(\zee,h)_{\li} \in\, \zee_{\li} + \cz^\perp \iff \unconx(\partmin(\zee,h)_{\li}) = \unconx(\zee_{\li}) \;\; \forall\, l \in L  \label{eqn:coset_sos}\\[4pt]
	&\text{ (\small{Monotone}) } \;\;\;	&&\indi{ \partmin(\zee, h)_r \neq h_r} ~\leq~ \indi{\zee_r \neq h_r} \;\; \forall \, r \in\, R \label{eqn:monotone_sos}
\end{align}
Recall that $\unconx$ is the local inversion map defined in \cref{def:local_inversion}. We emphasize that the partial minimizer is importantly defined in a way that the monotonicity property is true \emph{locally}.

Next, we prove a distance bound analogous to \cref{lem:zfc_dist} for pseudocodewords that satisfy the $\eta$-good property from \cref{def:eta_good}. This is the key statement we need to make the framework from \cite{JST23} applicable to the quantum AEL setting.

\begin{lemma}[Distance proof]\label{lem:sos_ael_distance}
Let $\tildeEx{\cdot}$ be an $\eta$-good pseudocodeword and $h \in \Sigma^E$ be such that $\fold(h)$ is a codeword in $ \fx$ and $\distLperp{\tildeEx{\cdot}} > 0$. Then,
\[\distR{\tildeEx{\cdot}} ~\geq~ \delta_\inn - \frac{\lambda + 
\eta}{\distLperp{\tildeEx{\cdot}}}  \mper\] 	
\end{lemma}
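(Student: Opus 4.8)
The plan is to mirror the proof of \cref{lem:zfc_dist} exactly, replacing the true minimizer argument with its SoS counterpart, using the $\eta$-good property and the SoS expander mixing lemma in place of the classical ones. As in the classical proof, I would estimate the quantity $\Ex{e \in E}{\tildeEx{\indi{\partmin(\zee,h)_e \neq h_e}}}$ from below and from above, and then compare the two bounds.

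For the lower bound, I would write $\Ex{e\in E}{\tildeEx{\indi{\partmin(\zee,h)_e \neq h_e}}} = \Ex{\li \in L}{\tildeEx{\dist{\partmin(\zee,h)_{\li}, h_{\li}}}}$, then dominate the local function $\dist{\partmin(\zee,h)_{\li}, h_{\li}}$ pointwise by $\indi{\partmin(\zee,h)_{\li} \not\in h_{\li}+\cz^\perp}\cdot \dist{\partmin(\zee,h)_{\li}, h_{\li}}$, which in turn is dominated by $\delta_\inn \cdot \indi{\partmin(\zee,h)_{\li}\not\in h_{\li}+\cz^\perp}$ since whenever $\partmin(\zee,h)_{\li}$ differs from $h_{\li}$ mod $\cz^\perp$ the two lie in different cosets and the inner code distance applies. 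Using \cref{claim:sos_domination} to push these domination inequalities through $\tildeEx{\cdot}$, and then using the coset-preserving property \cref{eqn:coset_sos} (which gives $\distLperp{\tildeEx{\cdot}}$ computed at $\partmin$ equals that at $\zee$), I get the lower bound $\delta_\inn \cdot \distLperp{\tildeEx{\cdot}}$. These are all $d$-local functions so the degree requirement $t \geq 2d$ suffices.

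For the upper bound, I would dominate $\indi{\partmin(\zee,h)_e \neq h_e}$ pointwise (over a single edge $e=(\li,\ri)$) by $\indi{\partmin(\zee,h)_{\li}\neq h_{\li}}\cdot \indi{\partmin(\zee,h)_{\ri}\neq h_{\ri}}$, take $\tildeEx{\cdot}$ and the expectation over edges, and apply \cref{lem:eml_pseudoexpectation} with $X_{\li} = \indi{\partmin(\zee,h)_{\li}\neq h_{\li}}$ and $Y_{\ri} = \indi{\partmin(\zee,h)_{\ri}\neq h_{\ri}}$; since these are $\{0,1\}$-valued, the square terms are bounded by their expectations, so the EML error is at most $\lambda$. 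Then I apply \cref{lem:eta_good} (the $\eta$-good property) to the same pair $X_{\li}, Y_{\ri}$, again with $\norm{\cdot}_\infty \leq 1$, to replace $\Ex{\li,\ri}{\tildeEx{X_{\li} Y_{\ri}}}$ by $\Ex{\li,\ri}{\tildeEx{X_{\li}}\tildeEx{Y_{\ri}}} + \eta = \distL[\text{at }\partmin]\cdot\distR[\text{at }\partmin] + \eta$. Using the definition of $\partmin$ the left distance at $\partmin$ equals $\distLperp{}$ at $\partmin$, which by \cref{eqn:coset_sos} equals $\distLperp{\tildeEx{\cdot}}$; and using the monotonicity property \cref{eqn:monotone_sos} together with \cref{claim:sos_domination}, $\distR{}$ at $\partmin$ is at most $\distR{\tildeEx{\cdot}}$. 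So the upper bound is $\distLperp{\tildeEx{\cdot}}\cdot \distR{\tildeEx{\cdot}} + \lambda + \eta$. Comparing the two bounds gives $\distR{\tildeEx{\cdot}}\cdot \distLperp{\tildeEx{\cdot}} + \lambda + \eta \geq \delta_\inn \cdot \distLperp{\tildeEx{\cdot}}$, and dividing by $\distLperp{\tildeEx{\cdot}} > 0$ finishes the proof.

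The main obstacle I anticipate is bookkeeping the locality and degree constraints carefully: every function appearing inside $\tildeEx{\cdot}$ (the various indicators, the partial minimizer composed with $\unconx$, products over a single edge) must be verified to be $d$-local or $2d$-local as appropriate so that \cref{claim:sos_domination}, \cref{lem:eta_good}, and \cref{lem:eml_pseudoexpectation} all apply at SoS-degree $t \geq 2d$. The other subtle point, exactly as in the classical case, is that monotonicity and the coset-preservation identity were deliberately engineered (in the definition of $\partmin$) to hold \emph{pointwise/locally} — I should emphasize that it is precisely this local structure that lets \cref{claim:sos_domination} convert the pointwise inequalities $\indi{\partmin(\zee,h)_r\neq h_r}\leq \indi{\zee_r\neq h_r}$ and the coset identity into valid statements about the pseudoexpectation, since SoS cannot in general reason about the global minimizer $u^*$.
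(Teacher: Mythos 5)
Your proposal reproduces the paper's proof essentially line by line: the same lower bound via pointwise domination through $\partmin$ plus the coset-preserving identity, and the same upper bound via the SoS expander mixing lemma, the $\eta$-good product decoupling, the definition of $\partmin$ on the left factor, and monotonicity on the right factor, followed by the same comparison and division by $\distLperp{\tildeEx{\cdot}}$. One small caution: in the lower-bound paragraph you say the inner-code-distance step is ``dominated by $\delta_\inn \cdot \indi{\cdots}$,'' but the inequality actually runs the other way ($\indi{\cdots}\cdot\dist{\cdots} \geq \delta_\inn\cdot\indi{\cdots}$, needed for a \emph{lower} bound); the chain you describe is otherwise the paper's argument exactly.
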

\begin{proof}
	We will closely follow the proof of~\cref{lem:zfc_dist} and similarly, will lower bound and upper bound the quantity $\Ex{e\in E}{~\tildeEx{~\indi{\partmin(\zee, h)_e \neq h_e}}}$. 
\begin{align*}
			\Ex{e\in E}{~\tildeEx{~\indi{\partmin(\zee, h)_e \neq h_e}}} ~&=~ \Ex{\li\in L}{\tildeEx{\dist{\partmin(\zee, h)_{\li} , h_{\li} }}} \\
			~&\geq~ \Ex{\li\in L}{\tildeEx{\indi{\partmin(\zee, h)_{\li} \not\in h_{\li} + \cz^\perp} \cdot \dist{\partmin(\zee, h)_{\li} , h_{\li} }}} \\
			~&\geq~ \Ex{\li\in L}{\tildeEx{\indi{\partmin(\zee, h)_{\li} \not\in h_{\li} + \cz^\perp} \cdot \delta_{\inn} }} \\
			~&=~ \delta_{\inn} \cdot \Ex{\li\in L}{\tildeEx{\indi{\partmin(\zee, h)_{\li} \not\in h_{\li} + \cz^\perp} }} \\
			~&=~ \delta_{\inn} \cdot \Ex{\li\in L}{\tildeEx{\indi{\zee_{\li} \not\in h_{\li} + \cz^\perp} }} && (\text{Using \cref{eqn:coset_sos}}) \\
			~&=~ \delta_{\inn} \cdot \distLperp{\tildeEx{{\cdot}}}
	\end{align*}

For the upper bound, we will use the expander mixing lemma~(\cref{lem:eml_pseudoexpectation}).
	\begin{align*}
		\Ex{e\in E}{~\tildeEx{~\indi{\partmin(\zee, h)_e \neq h_e}}} ~&\leq~ \Ex{\li \sim \ri}{~\tildeEx{~\indi{\partmin(\zee,h)_{\li} \neq h_{\li} } \cdot \indi{\partmin(\zee,h)_{\ri} \neq h_{\ri} }}} \\[4pt]
		~&\leq~ \Ex{\li, \ri}{\tildeEx{ \indi{\partmin(\zee, h)_{\li} \neq h_{\li} } \cdot \indi{\partmin(\zee, h)_{\ri} \neq h_{\ri} }}} + \lambda &&\text{( \cref{lem:eml_pseudoexpectation} )}\\[4pt]
		~&\leq~ \Ex{\li, \ri}{\tildeEx{ \indi{\partmin(\zee, h)_{\li} \neq h_{\li} }} \cdot \tildeEx{\indi{\partmin(\zee, h)_{\ri} \neq h_{\ri} }}} + \lambda +\eta &&\text{( \cref{lem:eta_good} )}\\[4pt]
		~&=~ \Ex{\li}{\tildeEx{ \indi{\partmin(\zee, h)_{\li} \neq h_{\li} }}} \cdot \Ex{\ri}{\tildeEx{\indi{\partmin(\zee, h)_{\ri} \neq h_{\ri} }}} + \lambda +\eta \\[4pt]
		~&=~ \Ex{\li}{\tildeEx{ \indi{\zee_{\li} \not\in h_{\li} + \cz^{\perp}}}} \cdot \Ex{\ri}{\tildeEx{\indi{\partmin(\zee, h)_{\ri} \neq h_{\ri} }}} + \lambda +\eta &&\text{ (Definition of $\partmin$)} \\[4pt]
		~&\leq~ \Ex{\li}{\tildeEx{ \indi{\zee_{\li} \not\in h_{\li} + \cz^{\perp}}}} \cdot \Ex{\ri}{\tildeEx{\indi{\zee_{\ri} \neq h_{\ri} }}} + \lambda +\eta &&\text{( \cref{eqn:monotone_sos} )} \\[4pt]
		~&=~\distLperp{\tildeEx{\cdot}} \cdot \distR{\tildeEx{\cdot}}  + \lambda + \eta\;.
\end{align*}

Comparing the two sides, we get  
	\begin{align*}
		\distLperp{\tildeEx{\cdot}} \cdot \distR{\tildeEx{\cdot}} + \lambda + \eta  ~\geq~ \delta_{\inn} \cdot \distLperp{\tildeEx{{\cdot}}} \; .
	\end{align*}
	Dividing by $\distLperp{\tildeEx{{\cdot}}}$ gives us the result.
	\end{proof}

\section{List Decoding Algorithm}\label{sec:decoding}

In this section, we show how to use the machinery of \cite{JST23} to develop a list decoding algorithm for the quantum codes obtained by AEL amplification. The two main components of this machinery are the (algorithmic) covering lemma and the (SoS) distance proof. The former lets us discover an object "containing" the list, while the latter allows the list elements to be extracted from this object.

We start with a subsection on covering lemmas (\cref{sec:covering}), which we then combine with the SoS distance proof from \cref{sec:sos_distance} in \cref{sec:combine}. We show how to derandomize the algorithm in \cref{sec:derandomization}, followed by the application to QLDPC codes in \cref{sec:near_mds}.

We stress that all the results in this section follow the outline from \cite{JST23}, except with minor changes to accommodate the differences in the quantum setting, such as the change from $\Delta_{L}(\cdot , \cdot)$ to $\Delta_{L,\cz^{\perp}}(\cdot , \cdot)$. We choose to write down the formal details due to these minor changes, but defer some proofs to the appendix if they are exactly same as in the classical case.

\subsection{Covering Lemma}\label{sec:covering}

As we saw in \cref{sec:sos_prelims}, pseudocodewords can be seen as a relaxation of the set of distributions over codewords. For list decoding, we would like to ensure that a (pseudo-)distribution does not completely ignore some of the codewords in the list. In particular, we would like a way to avoid point distributions over single codewords. 

The key idea in \cite{JST23}, going back further to \cite{AJQST20}, is that maximizing certain entropy measure indeed allows us to obtain a (pseudo-)distribution that is close to \emph{every} codeword in the list. We slightly modify the statements from \cite{JST23} to focus only on the large alphabet case. This part of the algorithm is very analytic in that it works on embeddings of codewords into Euclidean spaces, and does not need to care about the underlying code structure. 

In particular, there is no difference between classical and quantum codes in the embedding and covering lemma, but the proof of Johnson bound using the covering lemma takes into account the different notion of distance, so that we have distance only between pairs of codewords in $\fx$ that belong to different cosets of $\fz^\perp$.

For any $f\in \Sigma^E$, recall that $f_r \in  \Sigma^d$ is the restriction of $f$ to the edge-neighborhood $N(\ri)$ of a vertex $r \in R$. Let us denote the size of alphabet $\Sigma = \F_q^{b_{\inn}}$ by $s \defeq q^{b_{\inn}}$.
	
	\begin{definition}[Embedding]
	Let $\R^{s^d}$ be spanned by formal orthonormal basis vectors $\Set{e_w\,\mid \,w\in \Sigma^d}$. Define an embedding $\embed : \Sigma^E \to \parens{\R^{s^d}}^R \cong \R^{s^d\cdot n}$, as $\embed (f)_r = e_{f_r}$ for each $r\in R$. We denote the image of a subspace, $U$ under this map $\embed(U)$. The Euclidean inner product for this embedding captures the $\disR{\cdot}{\cdot}$ distance as,
	\[ \disR{f_1}{f_2} = 1 - \ip{\embed(f_1)}{\embed(f_2)}.
	\]
	\end{definition}
	We will use this embedding to prove the following lemma, as well as its algorithmic version \cref{lem:algo_covering}. Since the next lemma is only used to prove the existential Johnson bound, and is completely same as the classical case in \cite{JST23}, we defer its proof to \cref{app:covering}.
\begin{lemma}[Covering Lemma]\label{lem:covering}
	Let $g\in \Sigma^E$ and $\alpha \in (0,1)$.
	There exists a distribution $\calD$ over $\ex$ with support size at most $s^d \cdot n + 1$ such that for any $h \in \ex$ such that $\distR{g} < 1-\alpha$, the distribution $\calD$ satisfies
	\[
		\Ex{f\sim \calD}{\distR{f}} < 1-\alpha^2 \mper
	\]
\end{lemma}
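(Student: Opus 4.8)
The plan is to run the standard Euclidean ``min-enclosing-ball'' argument on the embedding $\embed$ defined just above. Recall that $\embed$ sends every $f\in\Sigma^E$ to a unit vector ($\norm{\embed(f)}^2=1$) in $\R^{s^d\cdot n}$, and that $\disR{f_1}{f_2}=1-\ip{\embed(f_1)}{\embed(f_2)}$. Set $L_g\defeq\braces{h\in\ex \mid \disR{g}{h}<1-\alpha}$, equivalently $h\in L_g \iff \ip{\embed(g)}{\embed(h)}>\alpha$. Since the code $\ex$ is finite, $L_g$ is finite; if $L_g=\emptyset$ the statement is vacuous and we may take $\calD$ to be the point mass at $0\in\ex$, so assume $L_g\neq\emptyset$. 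The target is a distribution $\calD$ over $\ex$ with support at most $s^d\cdot n+1$ whose barycenter $c\defeq\Ex{f\sim\calD}{\embed(f)}$ satisfies $\ip{c}{\embed(h)}>\alpha^2$ for every $h\in L_g$, since $\Ex{f\sim\calD}{\disR{f}{h}}=1-\ip{c}{\embed(h)}$ by linearity.

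The crux is choosing the right object to optimize: let $\calD^*$ be a distribution \emph{supported on $L_g$} that minimizes the collision second moment $\Psi(\calD)\defeq\norm{\Ex{f\sim\calD}{\embed(f)}}^2$. A minimizer exists because $\Psi$ is a continuous (in fact convex) function on the compact simplex of distributions over the finite set $L_g$; write $c=\Ex{f\sim\calD^*}{\embed(f)}$. For any fixed $h\in L_g$, the segment from $\calD^*$ to the point mass $\delta_h$ stays inside this simplex, so the function $t\mapsto\norm{(1-t)c+t\,\embed(h)}^2$ on $[0,1]$ is minimized at $t=0$, hence its right-derivative there, namely $2\bigl(\ip{c}{\embed(h)}-\norm{c}^2\bigr)$, is nonnegative; thus $\ip{c}{\embed(h)}\geq\norm{c}^2$ for \emph{all} $h\in L_g$. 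It remains to show $\norm{c}^2>\alpha^2$: the quantity $\ip{\embed(g)}{c}=\Ex{f\sim\calD^*}{\ip{\embed(g)}{\embed(f)}}$ is an average over a nonempty finite support of numbers each strictly exceeding $\alpha$, so $\ip{\embed(g)}{c}>\alpha$, and Cauchy--Schwarz together with $\norm{\embed(g)}=1$ gives $\norm{c}\geq\ip{\embed(g)}{c}>\alpha$. Combining, $\ip{c}{\embed(h)}\geq\norm{c}^2>\alpha^2$ for every $h\in L_g$, i.e.\ $\Ex{f\sim\calD^*}{\disR{f}{h}}<1-\alpha^2$.

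Finally, to get the claimed support bound I would invoke Carath\'eodory's theorem: $c$ lies in the convex hull of $\braces{\embed(h)\mid h\in L_g}\subseteq\R^{s^d\cdot n}$, hence is a convex combination of at most $s^d\cdot n+1$ of these points; letting $\calD$ be the corresponding distribution (supported on at most $s^d\cdot n+1$ codewords of $L_g\subseteq\ex$) does not change the barycenter $c$, so all the inequalities above are preserved. The genuine content is the one observation that minimizing the collision second moment forces $c$ to be ``centrally located'' relative to the whole list, $\ip{c}{\embed(h)}\geq\norm{c}^2$; after that the quadratic loss $\alpha\mapsto\alpha^2$ is a single line of Cauchy--Schwarz. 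I do not expect a real obstacle here --- the remaining points (the embedding identities, finiteness of $L_g$ so that the minimizer and the strict averaging inequality are legitimate, and the ambient dimension $s^d\cdot n$ for Carath\'eodory) are routine bookkeeping, which is why the paper defers this to \cref{app:covering} as identical to the classical case.
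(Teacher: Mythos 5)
Your proposal is correct and takes essentially the same approach as the paper's Appendix C proof: both perform a min-norm argument on the barycenter in $\R^{s^d \cdot n}$, derive the cosine-type inequality $\ip{\text{minimizer}}{\embed(h)}>\alpha^2$ from optimality along line segments towards each $\embed(h)$, and finish with Carath\'eodory. The two minor differences are that you minimize over the compact simplex $\conv\{\embed(h) : h\in L_g\}$ rather than the paper's $T=\{v\in\conv(\embed(\ex)) : \ip{v}{\embed(g)}>\alpha\}$ (a tidier choice that sidesteps having to argue the argmin over an open halfspace is attained), and you read off the key inequality via the sign of the $t=0$ derivative rather than the paper's explicit quadratic-in-$\tee$ contradiction --- these are cosmetic and do not change the substance.
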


Using the covering lemma above, we now prove the Johnson bound for quantum CSS codes obtained by AEL amplification. While this proof can be easily adapted for general CSS codes, we choose to write it for the specific case of quantum AEL codes of \cite{BGG22} in preparation for its algorithmic version in \cref{lem:algo_covering}.
\begin{lemma}[Alphabet-free Johnson bound]
	Let $\cF = (\fx,\fz)$ be a CSS code of distance $\delta$ obtained by AEL amplification, so that $\fx = \fold(\ex)$ and $\fz=\fold(\ez)$. Then for any $g\in \Sigma^E$, 
	\begin{equation*}
		\abs{\calL( g, 1-\sqrt{1-\delta} ) } ~\leq~ s^d \cdot n +1
	\end{equation*}
\end{lemma}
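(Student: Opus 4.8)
The plan is to mimic the classical Johnson bound argument, but carefully tracking that "distance" here means distance between cosets of $\fz^\perp$ (equivalently, between cosets of $\ez^\perp$ in the unfolded picture). Suppose for contradiction that $\calL(g, 1-\sqrt{1-\delta})$ has more than $s^d \cdot n + 1$ elements; that is, there are more than $s^d\cdot n+1$ distinct cosets $h + \ez^\perp$ with $h \in \ex$ and $\Delta_R(g,h) < 1-\sqrt{1-\delta}$. Set $\alpha := \sqrt{1-\delta}$, so $\tau = 1-\alpha$ and $\alpha^2 = 1-\delta$. First I would invoke the Covering Lemma (\cref{lem:covering}): for this value of $\alpha$, it produces a distribution $\calD$ over $\ex$ with support size at most $s^d\cdot n+1$ such that every $h$ with $\Delta_R(g,h) < 1-\alpha$ satisfies $\Ex{f\sim\calD}{\Delta_R(f,h)} < 1-\alpha^2 = \delta$.

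The core step is then a pigeonhole/averaging argument exploiting that $\calD$ is supported on at most $s^d\cdot n+1$ points while the list has strictly more cosets. First I would pick one representative $h_0$ from each coset in the list; this gives more than $s^d\cdot n+1$ codewords $h_0 \in \ex$, pairwise inequivalent (i.e. no two lie in the same coset of $\ez^\perp$). For each such $h_0$ the Covering Lemma gives $\Ex{f\sim\calD}{\Delta_R(f, h_0)} < \delta$, hence there exists $f_{h_0} \in \supp(\calD) \subseteq \ex$ with $\Delta_R(f_{h_0}, h_0) < \delta$. Since there are more list representatives than support points of $\calD$, two distinct (and hence inequivalent) representatives $h_0 \neq h_1$ must be assigned the \emph{same} $f \in \supp(\calD)$, so that $\Delta_R(f, h_0) < \delta$ and $\Delta_R(f, h_1) < \delta$.

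Now I would derive a contradiction with the code distance. Since $f, h_0, h_1 \in \ex$ and $\ex$ is $\F_q$-linear, $h_0 - h_1 \in \ex$; moreover $h_0 \not\equiv h_1$ means $h_0 - h_1 \notin \ez^\perp$, so $h_0 - h_1 \in \ex \setminus \ez^\perp$ is a nontrivial codeword of the quantum code $\cF$ in the sense used to define $\delta = \delta_R$ (via \cref{cor:ael_rate} and \cref{thm:ael_final}). Therefore $\Delta_R(h_0, h_1) \geq \delta$ — here I must be slightly careful: $\Delta_R(h_0,h_1)$ is the fraction of right vertices $\ri$ on which $(h_0)_\ri \neq (h_1)_\ri$, and this is exactly $\frac1n|\{\ri : (h_0 - h_1)_\ri \neq 0\}| \geq \delta_R = \delta$ by definition of $\delta_R$. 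But the triangle inequality for the metric $\Delta_R$ gives $\Delta_R(h_0,h_1) \leq \Delta_R(h_0,f) + \Delta_R(f,h_1) < \delta + \delta$... which is not yet a contradiction, so I need the sharper Johnson-type inequality rather than the triangle inequality.

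The sharper step, and the main obstacle, is the geometric (Cauchy–Schwarz) argument in the Euclidean embedding $\embed$. Using $\Delta_R(f_1,f_2) = 1 - \langle \embed(f_1), \embed(f_2)\rangle$ and $\|\embed(f)\|^2 = 1$ for all $f$, the bounds $\Delta_R(f,h_0) < 1-\alpha^2$ and the analogous statement applied directly with $\tau$: actually I would run the argument at the level of the covering distribution. The right way is: apply the Covering Lemma, get $\Ex{f\sim\calD}{\Delta_R(f,h_0)} < \delta$ for \emph{each} list representative $h_0$; translate to inner products, $\Ex{f\sim\calD}{\langle \embed(f), \embed(h_0)\rangle} > 1-\delta = \alpha^2$; let $w := \Ex{f\sim\calD}{\embed(f)}$, so $\langle w, \embed(h_0)\rangle > \alpha^2$ for every representative while $\|w\| \leq 1$ (Jensen) and in fact $\|w\|^2 = \Ex{f,f'}{\langle\embed(f),\embed(f')\rangle} \leq 1$. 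Now the vectors $\embed(h_0)$ for distinct inequivalent representatives satisfy $\langle \embed(h_0),\embed(h_1)\rangle = 1-\Delta_R(h_0,h_1) \leq 1-\delta = \alpha^2$ (using the code distance). A standard linear-algebra bound — if unit vectors $v_1,\dots,v_N$ all satisfy $\langle v_i, v_j\rangle \leq \alpha^2$ for $i\neq j$ and $\langle v_i, w\rangle > \alpha^2$ for a vector $w$ of norm $\leq 1$ living in an $(s^d\cdot n)$-dimensional space — forces $N \leq s^d\cdot n$ (via positive-semidefiniteness of the Gram matrix of the $v_i - w/\langle\cdots\rangle$, or a rank argument on $\sum_i (v_i - cw)(v_i-cw)^\top$). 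This contradicts $N > s^d\cdot n + 1 > s^d \cdot n$. The delicate point to get right is that $\embed(h_0)$ lies in the $n\cdot s^d$-dimensional ambient space $(\R^{s^d})^R$, so the dimension bound yields exactly $s^d\cdot n$, giving $|\calL(g,1-\sqrt{1-\delta})| \leq s^d\cdot n + 1$ as claimed (the $+1$ absorbing the extra vector $w$ in the rank count). Since this entire argument is identical to the classical one in \cite{JST23} once the substitution $\Delta_R \leftrightarrow$ coset-distance is made, I would state it concisely and reference \cref{lem:covering} and \cref{app:covering} for the routine parts.
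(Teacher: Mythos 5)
Your approach diverges from the paper's in a way that introduces a gap. The paper's proof is a two-line observation on top of \cref{lem:covering}: for each coset $h+\ez^\perp$ in the list, if \emph{no} member of the coset lies in $\supp(\calD)$, then every $f\in\supp(\calD)$ satisfies $f-h\in\ex\setminus\ez^\perp$ and hence $\Delta_R(f,h)\ge\delta$, giving $\Ex{f\sim\calD}{\Delta_R(f,h)}\ge\delta$ and contradicting the covering lemma. Thus every list coset has a representative in $\supp(\calD)$, and since distinct cosets are disjoint, $\abs{\calL}\le\abs{\supp(\calD)}\le s^d n+1$. No geometric or Gram-matrix argument is needed.

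Your pigeonhole attempt is actually one step away from this: after finding two inequivalent representatives $h_0\neq h_1$ assigned the same $f\in\supp(\calD)$, you reach for the triangle inequality (which is indeed too weak). The fix is the same coset observation as above: $f,h_0\in\ex$ and $\Delta_R(f,h_0)<\delta$ together force $f-h_0\in\ez^\perp$ by the definition of $\delta=\delta_R$, so $f$ lies in $h_0$'s coset; similarly $f$ lies in $h_1$'s coset; but these cosets are distinct. That closes the argument cleanly without needing two colliding representatives at all — one representative per support point already gives the count.

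The geometric Gram-matrix route you switch to instead has a concrete gap. You invoke a ``standard linear-algebra bound'' that unit vectors $v_1,\dots,v_N$ with $\ip{v_i}{v_j}\le\alpha^2$ and a vector $w$ with $\norm{w}\le 1$, $\ip{v_i}{w}>\alpha^2$, in ambient dimension $s^d n$, force $N\le s^d n$. The usual negative-inner-product / rank argument requires shifting to $u_i=v_i-cw$ and showing $\ip{u_i}{u_j}<0$; this needs $\norm{w}^2<\alpha^2$ (after optimizing $c$ one gets $\ip{u_i}{u_j}<\alpha^2(1-\alpha^2/\norm{w}^2)$). But the $w=g_0$ produced in the covering lemma satisfies the \emph{opposite} inequality $\norm{g_0}>\alpha$ (this is part of \cref{claim:covering}), so the shifted inner products are not forced negative and the rank bound does not follow as stated. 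The dimension-dependent bound $s^d n+1$ in the lemma actually comes from \Caratheodory applied to $\supp(\calD)$ inside the covering lemma itself, not from a separate spherical-code packing argument on the list vectors. Your writeup should be reorganized to use the covering lemma's support bound directly rather than re-deriving a packing bound.
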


\begin{proof}
	We use \cref{lem:covering} for $g$ to obtain a distribution $\calD$ over $\ex$ of support size at most $s^d \cdot n + 1$. This distribution has the property that for any $h \in \ex$ such that $\distR{g} < 1-\sqrt{1-\delta}$, it holds that
	\begin{equation}\label{eqn:covering}
		\Ex{f \sim \calD}{\distR{f}} ~<~ \delta
	\end{equation}
	Fix a coset (of $\ez^\perp$) in the list $\calL(g , 1-\sqrt{1-\delta} )$ and let $h$ be the codeword from this coset that is nearest to $g$ in $\Delta_R(\cdot ,\cdot)$ metric, so that $\distR{g} < 1-\sqrt{1-\delta}$. We claim that the support of $\calD$ must contain a member of the coset $h+\ez^{\perp}$. If not, every $f \in \supp(\calD)$ satisfies $\distR{f} \geq \delta$ due to the distance of the code $\cF$, and so,
	\[
		\Ex{f\sim \calD}{\distR{f}} ~\geq~ \delta
	\]
which contradicts \cref{eqn:covering}. Since the total support of $\calD$ is bounded by $s^d \cdot n + 1$, the number of cosets in the list is also bounded by $s^d\cdot n +1$.
\end{proof}

While the covering lemma is sufficient to conclude an upper bound on the list size up to the Johnson bound, it is not clear how to \emph{efficiently} find the distribution $\calD$ promised in the covering lemma. However, that distribution is shown to exist as the optimizer of certain convex function over the set of distributions over codewords. The next lemma shows that optimizing for the same convex function over \emph{pseudocodewords} also gives similar covering properties, but has the advantage that such an optimal degree-$t$ pseudocodeword can be found via SDPs in time $n^{\calO(t)}$.

\begin{lemma}[Algorithmic Covering Lemma]\label{lem:algo_covering}
	Let $g\in \Sigma^E$, $\alpha \in (0,1)$ and $\eps \in (0,1-\alpha)$. 
	There exists a pseudocodeword $\tildeEx{\cdot}$ of degree $t\geq d$ such that for every $h \in \ex$ such that $\distR{g} < 1-\alpha - \eps$, it holds that
	\[
		\distR{\tildeEx{\cdot}} < 1-\alpha^2 - 2 \alpha \eps \mper
	\]
	Moreover, such a pseudocodeword can be found in time $n^{\calO(t)}$.
\end{lemma}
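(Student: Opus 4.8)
\begin{proofsketch}
The plan is to run, over the larger set of degree‑$t$ pseudocodewords, exactly the convex program that produces the distribution $\calD$ in \cref{lem:covering}. To a degree‑$t$ pseudocodeword $\tildeEx{\cdot}$ I associate its \emph{moment vector} $v = \tildeEx{\embed(\zee)}\in\parens*{\R^{s^d}}^R$, i.e.\ $v_r = \sum_{w\in\Sigma^d}\tildeEx{\zee_{r,w}}\,e_w$; since $t\ge d$ this uses only degree‑$\le d$ pseudomoments, and, exactly as for honest codewords, $\distR{\tildeEx{\cdot}} = 1-\ip{v}{\embed(h)}$ while $\Ex{r}{\tildeEx{\indi{\zee_r\ne g_r}}} = 1-\ip{v}{\embed(g)}$. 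The algorithm then solves
\[
	\min\ \norm{v}^2 \ \text{ over degree-}t\text{ pseudocodewords } \tildeEx{\cdot} \text{ with moment vector } v, \quad\text{subject to}\quad \ip{v}{\embed(g)}\ \geq\ \alpha+\eps \mcom
\]
which is the same objective and same proximity constraint underlying \cref{lem:covering}. If the program is infeasible it outputs the point‑mass pseudocodeword at $0\in\ex$; in that case no $h\in\ex$ satisfies $\distR{g}<1-\alpha-\eps$ (such an $h$ would make the constraint feasible via the point‑mass at $h$, whose moment vector is $\embed(h)$), so the implication is vacuous. Otherwise it outputs an optimal solution $\tildeEx{\cdot}^{*}$ with moment vector $v^{*}$. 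It is essential to keep proximity to $g$ as a hard constraint: folding it into the objective, i.e.\ projecting $\embed(g)$ onto the feasible set, only gives $\distR{\tildeEx{\cdot}}\le 1-\alpha+\tfrac14$, which never beats $1-\alpha^2$.

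For the covering guarantee, fix $h\in\ex$ with $\distR{g}<1-\alpha-\eps$, equivalently $\ip{\embed(g)}{\embed(h)}>\alpha+\eps$. The point‑mass pseudoexpectation at $h$ is a genuine degree‑$t$ pseudocodeword (as $h$ is an honest codeword of $\ex$, so $h_{\li}\in\cx$ for all $\li$), its moment vector equals $\embed(h)$, and it satisfies the proximity constraint, so it lies in the feasible set $\mathcal K$. Since $\mathcal K$ is convex and compact and $\norm{v}^2$ is convex in the pseudomoments, first‑order optimality of $\tildeEx{\cdot}^{*}$ gives $\ip{v^{*}}{u - v^{*}}\ge 0$ for the moment vector $u$ of every pseudocodeword in $\mathcal K$; applied to the point‑mass at $h$ this yields $\ip{v^{*}}{\embed(h)}\ge\norm{v^{*}}^2$. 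Finally $\norm{\embed(g)}=1$, so Cauchy--Schwarz together with feasibility of $\tildeEx{\cdot}^{*}$ gives $\norm{v^{*}}^2\ge\ip{v^{*}}{\embed(g)}^2\ge(\alpha+\eps)^2$. Combining, $\distR{\tildeEx{\cdot}^{*}} = 1-\ip{v^{*}}{\embed(h)} \le 1-(\alpha+\eps)^2 = 1-\alpha^2-2\alpha\eps-\eps^2 < 1-\alpha^2-2\alpha\eps$, as required.

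For efficiency, the defining constraints of a degree‑$t$ pseudocodeword are the standard semidefinite and linear constraints on the degree‑$t$ moment matrix (positive‑semidefiniteness, the Boolean and normalization constraints on the $Z_{e,j}$, and the linear constraints $\zee_{\li}\in\cx$), the proximity constraint is linear in the moments, and the objective $\norm{v}^2 = \Ex{r}{\sum_{w\in\Sigma^d}\tildeEx{\zee_{r,w}}^2}$ is a convex quadratic in them; via a standard epigraph reformulation the program is a semidefinite program of size $n^{\calO(t)}$, solvable to any additive accuracy below $\eps^2$ — which the $-\eps^2$ slack above absorbs — in time $n^{\calO(t)}$ under the bit‑complexity conditions of \cite{OD16,RW17:sos}. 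The one ingredient beyond the classical argument of \cite{JST23} (itself building on \cite{AJQST20}) is the comparison of the relaxed optimum against the \emph{honest} point‑mass pseudocodewords at true near‑codewords; this is precisely the step where passing from genuine distributions to pseudocodewords costs nothing, and I expect it to be the only point requiring thought — compactness of $\mathcal K$, the first‑order condition, and the Cauchy--Schwarz estimate being routine.
\end{proofsketch}
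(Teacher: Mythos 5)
Your proposal is correct and uses the same underlying convex program as the paper: minimize $\norm{\tildeEx{\embed(\zee)}}^2$ over degree-$t$ pseudocodewords constrained to be close to $g$, and then argue that the optimizer has large inner product with $\embed(h)$ for every near-codeword $h$. The only real difference is in how the key property of the optimizer is extracted. The paper proceeds by contradiction, explicitly perturbing the putative optimum toward the point mass at $h$ with a parameter $\tee$, expanding the quadratic in $\tee$, and showing the optimal $\tee^*>0$ strictly decreases the objective. You replace this with the variational inequality $\ip{v^*}{u-v^*}\ge 0$ for all feasible $u$, which is the same fact but states the optimality condition up front rather than rediscovering it via the $\tee$-quadratic; this yields $\ip{v^*}{\embed(h)}\ge\norm{v^*}^2\ge(\alpha+\eps)^2$ in one line and is arguably cleaner. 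Two small things you do that the paper glosses over, and which are genuine improvements in rigor: you close the feasible region (non-strict $\ip{v}{\embed(g)}\ge\alpha+\eps$ rather than the paper's strict $\Delta_R<1-\alpha-\eps$), so compactness and attainability of the minimum are immediate; and you explicitly handle the infeasible case, noting the conclusion is vacuous. You also observe that the $\eps^2$ slack between $(\alpha+\eps)^2$ and $\alpha^2+2\alpha\eps$ absorbs SDP-solver error, a point the paper does not make.
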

\begin{proof}
	We claim that a pseudocodeword of degree $t$ that minimizes $\norm{\tildeEx{\embed(\zee)}}^2 = \ip{\tildeEx{\embed(\zee)}}{\tildeEx{\embed(\zee)}}$ among all degree-$t$ pseudocodewords that satisfy $\Delta_R(\tildeEx{\cdot},g) < 1-\alpha - \eps$ has the required property above.
%
	We need to argue that for every $h \in \ex$ such that $\distR{g} < 1-\alpha - \eps$, it holds that
	\begin{equation}\label{eqn:pexp_agreement}
		\ip{\tildeEx{\embed(\zee)}}{\embed(h)} > \alpha^2 + 2 \alpha \eps \mper
	\end{equation}

	This suffices to prove the lemma as 
	\begin{align*}
		\disR{\tildeEx{\cdot}}{h} &= \Ex{r\in R}{\tildeEx{\indi{\zee_r \neq h_r}}} \\
		&= \Ex{r\in R}{\tildeEx{1 - \indi{\zee_r = h_r}}} \\
		&= 1 - \Ex{r\in R}{\tildeEx{\indi{\zee_r = h_r}}} \\
		&= 1 - \Ex{r\in R}{\tildeEx{\ip{\embed(\zee_r)}{\embed(h_r)}}} \\
		&= 1- \ip{\tildeEx{\embed(\zee)}}{\embed(h)}
	\end{align*}
	
	We also note the following lower bound on $\Psi\parens[\Big]{\tildeEx{\cdot}}$:
	\begin{align}
		& \angles[\big]{\tildeEx{\embed(\zee)}\, , \,\tildeEx{\embed(\zee)}}\cdot \ip{\embed(g)}{\embed(g)} ~\geq~ \angles[\big]{\tildeEx{\embed(\zee)} \, , \, \embed(g)}^2 ~>~ (\alpha + \eps)^2 \\
		\implies \quad & \Psi\parens[\Big]{\tildeEx{\cdot}} ~=~ \angles[\big]{\tildeEx{\embed(\zee)}\, , \,\tildeEx{\embed(\zee)}} ~>~ (\alpha + \eps)^2 ~>~ \alpha^2 + 2\alpha \eps \label{eqn:lower_bound_entropy}
	\end{align}
	
	Suppose \cref{eqn:pexp_agreement} does not hold. Then there exists an $h\in \ex$ such that $\distR{g} < 1-\alpha - \eps$ (which is equivalent to $\ip{\embed(g)}{\embed(h)} > \alpha + \eps$), but
	\begin{align*}
		\ip{\tildeEx{\embed(\zee)}}{\embed(h)} \leq \alpha^2 + 2 \alpha \eps \mper
	\end{align*}
	
	For a $\tee \in [0,1]$ to be chosen later, consider the degree-$t$ pseudocodeword $\dupPE{\cdot}$ which can be evaluated for any (vector-valued) $t$-local function $\mu$ as
	\[
		\dupPE{\mu(\zee)} = (1-\tee)\cdot \tildeEx{\mu(\zee)} + \tee \cdot \mu(h) \mper
	\]
	
	We will reach a contradiction to the optimality of $\tildeEx{\cdot}$ by showing that $\Psi\parens*{\dupPE{\cdot}} < \Psi\parens*{\tildeEx{\cdot}}$.
	
	Since every integral codeword like $h \in \ex$ is also a degree-$t$ pseucodeword, this convex combination is also a degree-$t$ pseudocodeword, and it is also easy to see that $\ip{\dupPE{\embed(\zee)}}{\embed(g)} ~>~ \alpha + \eps$. 

	\begin{align*}
		\Psi\parens*{\dupPE{\cdot}}  &= \ip{\dupPE{\chiTan(\zee)}}{\dupPE{\chiTan(\zee)}} \\
		~&=~ \ip{(1-\tee) \cdot \tildeEx{\chiTan(\zee)} + \tee\cdot \chiTan(h)}{(1-\tee)\cdot \tildeEx{\chiTan(\zee)}+\tee \cdot \chiTan(h)} \\
		~&=~ (1-\tee)^2 \cdot \ip{\tildeEx{\chiTan(\zee)}}{\tildeEx{\chiTan(\zee)}} +2\cdot \tee(1-\tee)\cdot \ip{\tildeEx{\chiTan(\zee)}}{\chiTan(h)} + \tee^2\cdot \ip{\chiTan(h)}{\chiTan(h)}\\
		~&\leq~ (1-\tee)^2\cdot \Psi\parens*{\tildeEx{\cdot}} +2\cdot \tee(1-\tee)\cdot \parens{\alpha^2 + 2\alpha \eps} +\tee^2 \;.
	\end{align*}

The optimal $\tee$ is given by
\[
	\tee^* ~=~ \frac{\Psi\parens*{\tildeEx{\cdot}} - \parens{\alpha^2 + 2\alpha \eps}}{\Psi\parens*{\tildeEx{\cdot}}-2\parens{\alpha^2 + 2\alpha \eps}+1}
\]

Since $\Psi\parens*{\tildeEx{\cdot}} > \alpha^2 + 2\alpha \eps$ from \cref{eqn:lower_bound_entropy}, we get optimal $\tee^* >0$, which implies $\Psi\parens*{\dupPE{\cdot}} < \Psi\parens*{\tildeEx{\cdot}}$.
\end{proof}


\subsection{Putting things together}\label{sec:combine}
Next, we combine the algorithmic covering lemma \cref{lem:algo_covering} and the distance proof \cref{lem:sos_ael_distance} to give a list decoding algorithm. However, note that the distance proof is only applicable to $\eta$-good pseudocodewords, and the pseudocodeword found by \cref{lem:algo_covering} need not have this property.

To remedy this, we start with a lemma that says that the $\eta$-good property can be obtained by random conditioning. This technique is common in algorithmic applications of Sum-of-Squares, and first appeared in \cite{BRS11}. It was adapted for decoding in \cite{JST23}. Since the proof is exactly the same as in \cite{JST23}, we defer the proof to \cref{app:correlation_rounding}.

\begin{lemma}[Correlation Rounding]\label{lem:conditioning_eta_good}
	Let $\eta>0$, and $\tildeEx{\cdot}$ be a pseudocodeword of degree at least $2d\left(\frac{s^{3d}}{\eta^2}+1\right)$. Then there exists an integer $u^* \leq s^{3d}/\eta^2 $ such that 
	\[
		\Ex{r_1,r_2,\cdots,r_{u^*}}{\Ex{\li,\ri}{\tildecov\brackets*{\zee_{\li},\zee_{\ri} \vert \zee_{r_1},\zee_{r_2},\cdots,\zee_{r_{u^*}}}}} \leq \eta \mper
	\]
	where $r_1,r_2,\cdots r_{u^*}$ are independent randomly chosen vertices of $R$.
\end{lemma}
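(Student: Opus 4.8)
The plan is to carry out the \emph{global correlation rounding} argument of \cite{BRS11} (used in the coding context by \cite{JST23}) on the local distributions of the pseudocodeword $\tildeEx{\cdot}$, tracking progress with a single scalar potential. First I would set, for $u=0,1,2,\cdots$,
\[
	\Phi_u ~\defeq~ \Ex{r_1,\cdots,r_u\sim R}{\Ex{\li\sim L}{\tildeVar{\zee_{\li} \vert \zee_{r_1},\cdots,\zee_{r_u}}}} \mcom
\]
where $r_1,\cdots,r_u$ are independent uniform vertices of $R$. Because each neighborhood $N(\li)$ has size $d\le t/2$, the numbers $\{\tildeEx{\zee_{\li,\alpha}}\}_\alpha$ — and those obtained after conditioning on $\zee_{r_1},\cdots,\zee_{r_u}$ — form \emph{genuine} probability distributions, so the computation that gave $\tildeVar{\zee_S}\le 1$ in \cref{sec:sos_prelims} shows $0\le\Phi_u\le 1$. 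Moreover $\Phi_{u+1}\le\Phi_u$, which is just the law of total variance for the (genuine) local distribution on $N(\li)\cup N(r_1)\cup\cdots\cup N(r_{u+1})$, valid as long as this set has size at most $t/2$. Telescoping over $T\defeq\ceil{s^{3d}/\eta^2}$ steps gives $\sum_{u=0}^{T-1}(\Phi_u-\Phi_{u+1})=\Phi_0-\Phi_T\le 1$, so some $u^*\in\{0,1,\cdots,T-1\}$, in particular with $u^*\le s^{3d}/\eta^2$, satisfies $\Phi_{u^*}-\Phi_{u^*+1}\le 1/T\le\eta^2/s^{3d}$.

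The heart of the matter is converting this small drop in variance into a small covariance. Fixing any admissible conditioning, for a pair $\li\in L,\ \ri\in R$ I would use the elementary identity on the local distribution over $N(\li)\cup N(\ri)$,
\[
	\tildecov(\zee_{\li,\alpha},\zee_{\ri,\beta}) ~=~ \tildeEx{\zee_{\ri,\beta}}\cdot\parens[\big]{\tildeEx{\zee_{\li,\alpha}\vert\zee_\ri=\beta}-\tildeEx{\zee_{\li,\alpha}}} \mcom
\]
which, since $\tildeEx{\zee_{\ri,\beta}}\le 1$, gives $\sum_\beta\tildecov(\zee_{\li,\alpha},\zee_{\ri,\beta})^2\le\sum_\beta\tildeEx{\zee_{\ri,\beta}}\parens[\big]{\tildeEx{\zee_{\li,\alpha}\vert\zee_\ri=\beta}-\tildeEx{\zee_{\li,\alpha}}}^2$, and the right-hand side equals $\tildeVar{\zee_{\li,\alpha}}-\Ex{\zee_\ri}{\tildeVar{\zee_{\li,\alpha}\vert\zee_\ri}}$ by the law of total variance. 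Summing over $\alpha$ and applying Cauchy--Schwarz over the at most $s^d\cdot s^d$ pairs $(\alpha,\beta)$,
\[
	\tildecov(\zee_\li,\zee_\ri) ~=~ \sum_{\alpha,\beta}\abs{\tildecov(\zee_{\li,\alpha},\zee_{\ri,\beta})} ~\le~ s^{d}\cdot\sqrt{\tildeVar{\zee_\li}-\Ex{\zee_\ri}{\tildeVar{\zee_\li\vert\zee_\ri}}} \mcom
\]
and the same bound holds verbatim for the pseudoexpectation conditioned on $\zee_{r_1},\cdots,\zee_{r_u}$.

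Finally I would combine the two pieces at $u=u^*$: averaging the last display over $\li,\ri$ and over $r_1,\cdots,r_{u^*}$, pulling the outer expectation inside the square root by concavity (Jensen), and observing that averaging the conditional variance over a fresh uniform $\ri\in R$ is exactly one further conditioning step, the right-hand side becomes $s^d\sqrt{\Phi_{u^*}-\Phi_{u^*+1}}\le s^d\sqrt{\eta^2/s^{3d}}=\eta\,s^{-d/2}\le\eta$, which is the assertion of the lemma. The degree hypothesis $t\ge 2d\parens[\big]{s^{3d}/\eta^2+1}$ is there precisely so that every set of the form $N(\li)\cup N(\ri)\cup\bigcup_i N(r_i)$ arising above — of size $O(d\cdot s^{3d}/\eta^2)$ — still supports a genuine local distribution, making every conditioning and every appeal to the law of total variance legitimate. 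I expect the only real subtlety to be exactly this bookkeeping: checking that the conditioned local distributions are honest probability distributions (not merely pseudo-distributions), since that is what licenses the two uses of the law of total variance; the remaining ingredients are Cauchy--Schwarz, Jensen's inequality, and pigeonhole.
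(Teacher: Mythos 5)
Your argument is correct and is, at its core, the same global correlation-rounding strategy the paper uses (via its \cref{lem:conditioning_reduces_variance} and \cref{lem:low_covariance_solution}): bound total variance by 1, show that large average covariance forces a quantifiable drop in average variance upon one more conditioning, and conclude by pigeonhole. The only substantive divergence is that you derive the variance-drop-to-covariance inequality from scratch via the law of total variance together with the crude bound $\tildeEx{\zee_{\ri,\beta}}^2 \le \tildeEx{\zee_{\ri,\beta}}$, obtaining $\tildecov(\zee_\li,\zee_\ri) \le s^d\sqrt{\tildeVar{\zee_\li} - \tildeVar{\zee_\li \vert \zee_\ri}}$, whereas the paper invokes a BRS11-style lemma with an extra $\tildeVar{\zee_{T,\beta}}$ in the denominator and ends up with the slightly weaker factor $s^{3d/2}$; you also present the counting as a direct telescoping/pigeonhole where the paper phrases it as a contradiction. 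Your final bookkeeping remark is the right thing to flag (the conditioned local sets must stay within the degree budget), and both your account and the paper's share the same harmless off-by-one looseness there.
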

That is, conditioning on the neighborhoods of $u^*$ many randomly chosen vertices in $R$ renders the pseudocodeword $\eta$-good in expectation. This can be seen as a form of "sampling" from the pseudodistribution. Sampling from a true distribution of codewords can be seen as conditioning on all $n$ vertex neighborhoods in $R$, and a sampled codeword is always integral, which implies it is $\eta$-good with $\eta=0$. \cref{lem:conditioning_eta_good} shows that there is some non-convex property we can derive even with just constantly many conditionings, and that this argument is also valid for pseudodistributions that do not have high enough moments to support sampling.

We now get to the main technical theorem of this work that gives an algorithm to list decode codes constructed via quantum AEL amplification of unique decodable base codes. We first present a randomized version that outputs each codeword in the list $\calL$ with some small but constant probability. We can repeat this algorithm to output a list of cosets that contains the true list of cosets with probability arbitrarily close to 1. In the next subsection, we will derandomize this algorithm, but will end up with much bigger list sizes since pruning the list can be hard for quantum codes.

\begin{theorem}[List Decoding AEL amplification]\label{thm:list_decoding_ael}
	Let $(\fx, \fz)$ be the code obtained by applying quantum AEL amplification to the outer code $(\dx,\dz)$ and inner code $(\cx,\cz)$ of distance $\delta_{\inn}$ using an $(n,d,\lambda)$--expander graph. The inner code $(\cx,\cz)$ is over an alphabet $\Sigma$ of size $s$, so that $(\fx,\fz)$ is over alphabet $\Sigma^d$ of size $s^d$.

Suppose the code $(\dx, \dz)$ can be unique-decoded from radius $\delta_{\dec}$ in time $\calT(n)$. Assume that $\lambda < \delta_{\dec}$. 
	
	Then for any $\gamma > 0$, there exists an algorithm based on $s^{\calO(d)}/\eps^4$ levels of the SoS hierarchy that given $g\in \Sigma^E$, runs in time $\ln(1/\gamma) \cdot \brackets*{n^{s^{\calO(d)}/\eps^4} + \calT(n)}$ and produces a list $\calL'$ of cosets that contains the list of cosets $\calL \defeq \calL \parens*{g, \calJ\parens*{\delta_{\inn} - \frac{\lambda}{\delta_{\dec}}} - \eps }$ with probability at least $1-\gamma$. The size of $\calL'$ is at most $\widetilde{\Omega}_{s,d} \parens*{ \frac{\ln(1/\gamma)}{\eps^6} }$.
\end{theorem}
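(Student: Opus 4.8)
The plan is to instantiate the list-decoding machinery of \cite{JST23} with the SoS distance proof \cref{lem:sos_ael_distance} in place of its classical counterpart, and to augment it with the outer unique decoder to handle the minimization over the outer component of $\ez^\perp$. Write $\delta \defeq \delta_{\inn} - \lambda/\delta_{\dec}$, fix $\alpha$ with $1-\alpha-\eps = \calJ(\delta)-\eps$, and fix a correlation parameter $\eta$ polynomially small in $\eps$ (and small enough relative to $\lambda$, which is a free design parameter) so that the degree demanded by \cref{lem:conditioning_eta_good} is $t = s^{\calO(d)}/\eps^4$. On input $g\in\Sigma^E$ the algorithm is: \textbf{(i)} solve the SDP of \cref{lem:algo_covering} at degree $t$ to get a pseudocodeword $\tildeEx{\cdot}$ with $\distR{\tildeEx{\cdot}} < 1-\alpha^2-2\alpha\eps$ for every $h\in\ex$ with $\distR{g} < 1-\alpha-\eps$; \textbf{(ii)} pick $u^\ast \le s^{3d}/\eta^2$ random right vertices together with a random assignment to their neighborhoods drawn from the local distribution and condition on it (the one randomized step), keeping the name $\tildeEx{\cdot}$ for the result; \textbf{(iii)} for each $\li\in L$ let $\bar u_\li$ be the most likely value of $\unconx(\zee_{\li})$ under the local distribution of $\zee_{\li}$, producing a word $\bar u$ over the alphabet of $\cD$; \textbf{(iv)} run the outer unique decoder on $\bar u$, and if it returns a coset $u^\star + \dz^\perp$ with $u^\star\in\dx$, output $\fold(\wphix(u^\star)) + \fz^\perp$. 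Finally repeat \textbf{(ii)}--\textbf{(iv)} $N = \calO_{s,d,\eps}(\ln(1/\gamma))$ times and return the union of the cosets output.

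For correctness, fix a coset in $\calL$ and let $h\in\ex$ be its representative nearest to $g$ in the $\Delta_R$ metric, so $\distR{g} < 1-\alpha-\eps$; I will show that one pass of \textbf{(ii)}--\textbf{(iv)} outputs $h+\ez^\perp$ with probability at least a positive constant $p = p(s,d,\eps)$ independent of $n$. Since $\distR{\cdot}$ is an average of $\le 2d$-local indicators and conditioning preserves expectations of local functions, after step \textbf{(ii)} the quantity $\mathbb{E}_{\text{(ii)}}[\distR{\tildeEx{\cdot}}]$ equals the value it had after step \textbf{(i)}, namely $< 1-\alpha^2-2\alpha\eps = \delta-2\alpha\eps$, while by \cref{lem:conditioning_eta_good} the conditioned pseudocodeword is $\eta$-good (\cref{def:eta_good}) in expectation over \textbf{(ii)}; two Markov bounds (using the $\Theta(\alpha\eps)$ margin, and taking $\eta$ polynomially below what is strictly needed) give that, with probability $\ge p$, simultaneously $\tildeEx{\cdot}$ is $\calO(\eta)$-good and $\distR{\tildeEx{\cdot}} < \delta_{\inn} - \tfrac{\lambda+\calO(\eta)}{\delta_{\dec}}$. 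On this event, \cref{lem:sos_ael_distance} read contrapositively (the case $\distLperp{\tildeEx{\cdot}} = 0$ being immediate) forces $\distLperp{\tildeEx{\cdot}}$ below $\delta_{\dec}$ by a constant factor; and since $\unconx(\zee_{\li}) = \unconx(h_{\li})$ precisely when $\zee_{\li}\in h_{\li}+\cz^\perp$ (\cref{def:local_inversion}, \cref{eqn:coset_sos}), the fraction of $\li\in L$ on which the plurality value $\bar u_\li$ differs from $\unconx(h_{\li})$ is at most twice $\distLperp{\tildeEx{\cdot}}$, hence still below $\delta_{\dec}$. Thus $\bar u$ lies within the outer unique-decoding radius of the coset $\unconx(h)+\dz^\perp \subseteq \dx$, the outer decoder returns it, and because $\ez^\perp = \wphix(\dz^\perp) + \F_q^n\otimes\cz^\perp$ (\cref{prop:concat_dual}) this coset pins down exactly $h+\ez^\perp$. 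Standard amplification --- $\calO(p^{-1}\ln(\abs{\calL}/\gamma))$ passes and a union bound over $\calL$, whose size is $n$-independent by the alphabet-free Johnson bound applied at radius $\eps$ inside $\calJ(\delta)$ --- then gives $\calL\subseteq\calL'$ with probability $\ge 1-\gamma$, $\abs{\calL'} \le \widetilde{\Omega}_{s,d}(\ln(1/\gamma)/\eps^6)$, and running time $\ln(1/\gamma)\cdot(n^{\calO(t)}+\calT(n))$, with the accounting of $p$, $\abs{\calL}$ and $N$ identical to \cite{JST23}.

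The delicate point, and the only genuine departure from \cite{JST23}, is that \cref{lem:sos_ael_distance} controls the \emph{coset} distance $\distLperp{\tildeEx{\cdot}}$ rather than $\Delta_L$, and has it in a denominator: using it contrapositively requires $\distR{\tildeEx{\cdot}}$ to sit strictly below $\delta_{\inn}-(\lambda+\eta)/\delta_{\dec} \approx \delta$, whereas \cref{lem:algo_covering} only buys an $\calO(\alpha\eps)$ margin and step \textbf{(ii)} controls $\distR{\tildeEx{\cdot}}$ only in expectation --- threading these so that the recovery radius comes out exactly $\calJ(\delta)-\eps$ is what fixes the polynomial relation between $\eta$ and $\eps$ (and the smallness of $\lambda$ that must be assumed), forces the $\poly(1/\eps)$ repetitions, and --- since a list of error cosets for a quantum code cannot be pruned --- inflates $\abs{\calL'}$ to $\widetilde{\Omega}_{s,d}(\ln(1/\gamma)/\eps^6)$. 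A secondary, more routine obligation is to verify that everything handed to the SoS system --- the partial minimizer $\partmin$, the indicator $\indi{\zee_{\li}\notin h_{\li}+\cz^\perp}$, and the covariance of \cref{def:eta_good} --- is genuinely $\calO(d)$-local, so that a single degree $t = s^{\calO(d)}/\eps^4$ simultaneously supports the covering SDP, the correlation rounding of \cref{lem:conditioning_eta_good}, and the local invocations of \cref{lem:sos_ael_distance}; this is precisely the place where one checks that replacing $\Delta_L$ by $\Delta_{L,\cz^\perp}$ and working with cosets instead of codewords changes nothing essential.
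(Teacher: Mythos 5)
Your overall architecture matches the paper (covering SDP, random conditioning to get an $\eta$-good pseudocodeword, SoS distance proof to push $\distLperp{\cdot}$ below the outer unique-decoding radius, then invoke the outer decoder and lift the coset via $\wphix$ and \cref{prop:concat_dual}). The gap is in your step \textbf{(iii)}: you round by taking the \emph{plurality} value $\bar u_\li = \arg\max_w \dupPE{\indi{\unconx(\zee_\li) = w}}$, and you bound the error fraction by $2\,\distLperp{\dupPE{\cdot}}$. That factor of $2$ is fatal. The chain of Markov inequalities in the paper only yields
\[
\distLperp{\dupPE{\cdot}} \;\le\; \delta_{\dec}\Bigl(1 - \tfrac{\eps}{4\delta_\inn}\Bigr),
\]
which for small $\eps$ sits just barely inside $\delta_{\dec}$ — nowhere near $\delta_{\dec}/2$. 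So $2\,\distLperp{\dupPE{\cdot}}$ exceeds $\delta_{\dec}$ and the outer unique decoder is not guaranteed to succeed. Forcing $\distLperp{\dupPE{\cdot}} \le \delta_{\dec}/2$ through the contrapositive of \cref{lem:sos_ael_distance} would require $\distR{\dupPE{\cdot}} < \delta_\inn - 2(\lambda+\eta)/\delta_{\dec}$, which the covering lemma does not deliver at the claimed radius $\calJ(\delta_\inn - \lambda/\delta_{\dec}) - \eps$; you would either shrink the decoding radius to roughly $\calJ(\delta_\inn - 2\lambda/\delta_{\dec})$, or have to impose $\lambda \lesssim \eps\,\delta_{\dec}$, neither of which is what the theorem states. (You hint at treating $\lambda$ as a free design parameter, but in this theorem $\lambda$ is fixed data; only in \cref{thm:near_mds_main} is it chosen.)

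The fix is what the paper actually does: after conditioning, sample $y_\li$ \emph{according to} the local distribution $\dupPE{\indi{\unconx(\zee_\li) = \cdot}}$ (independently per $\li$), so that $\Ex{y}{\Delta(y,\dxh)} = \distLperp{\dupPE{\cdot}}$ exactly, and then one more Markov step gives $\Delta(y,\dxh)\le\delta_{\dec}$ with probability $\Omega(\eps/\delta_\inn)$. This randomized rounding loses no constant factor because it preserves the marginals; the deterministic variant in \cref{lem:derandomized_decoding_from_distributions} is not plurality either but \emph{threshold} rounding with a shared $\theta\in[0,1]$, which also preserves the marginals and only costs a factor $\calO(q^{b_\out}n)$ in running time. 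Everything else in your write-up (locality bookkeeping, use of \cref{eqn:coset_sos} to translate between $\unconx(\zee_\li)\ne\unconx(h_\li)$ and $\zee_\li\notin h_\li+\cz^\perp$, the list-size and repetition accounting) is consistent with the paper.
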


\begin{proof}
	The decoding algorithm is presented as \cref{algo:ael-decoding}. 
	Recall that $\calL = \calL \parens*{g, \calJ\parens*{\delta_{\inn} - \frac{\lambda}{\delta_{\dec}}} - \eps }$ is the list of all cosets in $\ex/\ez^{\perp}$ that intersect the Hamming ball $\calB\parens*{g,\calJ\parens*{\delta_{\inn} - \frac{\lambda}{\delta_{\dec}}} - \eps}$.


\begin{figure}[!ht]
\begin{algorithm}{List Decoding}{$g \in \Sigma^E$, $\gamma\in (0,1)$}{List of cosets $\calL' \subseteq \ex / \ez^{\perp}$ that contains $\calL = \calL \parens*{g, \calJ\parens*{\delta_{\inn} - \frac{\lambda}{\delta_{\dec}}} - \eps }$}\label{algo:ael-decoding}
\begin{itemize}
\item Pick $\eta = \frac{\eps^2 \delta_{\dec}}{16\delta_{\inn}}$ and $M = \frac{\ln(1 / \gamma)}{p \ln p}$, where $p = \frac{ \parens*{\delta_{\dec}}^2 \cdot \eps^6}{2^{12} \cdot s^{3d} \cdot \delta_{\inn}^4}$.
\item Use \cref{lem:algo_covering} to obtain a pseudocodeword of degree $t \geq 2d(\frac{s^{3d}}{\eta^2} + 1)$ such that for every $h\in \ex$ which satisfies $\distR{g} < \calJ\parens*{\delta_{\inn} - \frac{\lambda}{\delta_{\dec}}} - \eps$, it holds that
\[
	\distR{\tildeEx{\cdot}} ~<~ \delta_{\inn} - \frac{\lambda}{\delta_{\dec}} - \eps
\]
%
%
%
\item Initialize $\calL' = \emptyset$.
\item Repeat $M$ times:
\begin{enumerate}[(i)]
	\item Choose $u$ uniformly at random from $\{1,2,\cdots , \frac{s^{3d}}{\eta^2}\}$.
	\item Choose a random subset $U \subseteq R$ of size $u$, and let $N(U)$ denote the edge neighborhood of $U$. Sample a random assignment $\sigma$ for $N(U)$ using the local distribution for this set of edges. That is, $\sigma$ is chosen with probability $\tildeEx{\indi{\zee_U = \sigma}}$.
	\item Condition $\tildeEx{\cdot}$ on $\zee_U = \sigma$, and let $\dupPE{\cdot} = \condPE{\cdot}{\zee_U = \sigma}$ be the conditioned pseudocodeword of degree $2d\parens*{\frac{s^{3d}}{\eta^2}+ 1} - 2\cdot d \cdot u \geq 2d$.
	\item Generate $y \in (\F_{q}^{b_{\out}})^L$ by independently sampling $\dupPE{\phi_X^{-1} (\zee_{\li})}$. That is, for $w \in \F_q^{b_{\out}}$, the probability that $y_{\li} = w$ is $\dupPE{\indi{\phi_X^{-1}(\zee_{\li}) = w}}$.
	\item Call the $X$-decoder of $(\dx,\dz)$ on $y$. If a coset $\dxz + \dz^{\perp}$ is found close to $y$,
	then add to $\calL'$ the coset $\wphix(\dxz) + \ez^{\perp}$.
\end{enumerate}
\item Prune $\calL'$ to only include one representative per coset, via Gaussian elimination.
\item Return $\calL'$.
\end{itemize}
\vspace{5pt}
\end{algorithm}
\end{figure}

We now argue that the probability of all the cosets in $\calL$ being included in $\calL'$ is at least $1 - \gamma$. Fix such a coset in $\calL$ and let $h \in \Sigma^E$ be the nearest codeword to $g$ from this coset $h+\ez^\perp$, so that $\distR{g} < \calJ\parens*{\delta_{\inn} - \frac{\lambda}{\delta_{\dec}}} - \eps$. \cref{lem:algo_covering} implies that $\tildeEx{\cdot}$ satisfies
\begin{align}\label{eqn:agreement_before_conditioning}
	\distR{\tildeEx{\cdot}} ~<~ \delta_{\inn} - \frac{\lambda}{\delta_{\dec}} - \eps\;.
\end{align}
To be able to use the distance proof of \cref{lem:sos_ael_distance}, we need the pseudocodeword to be $\eta$-good. \cref{lem:conditioning_eta_good} shows that this can be obtained by random conditioning. In particular, there exists a $u^*\in \{1,2,\cdots ,\frac{s^{3d}}{\eta^2} \}$ such that,
\begin{equation}\label{eq:conditioning_eta_good}
		\Ex{r_1,r_2,\cdots,r_{u^*}}{\Ex{\li,\ri}{\tildecov\brackets*{\zee_{\li},\zee_{\ri} \vert \zee_{r_1},\zee_{r_2},\cdots,\zee_{r_{u^*}}}}} ~\leq~ \eta\;.
\end{equation}
Note that we picked $\eta = \frac{\eps^2 \delta_{\dec}}{16\delta_{\inn}}$ in \cref{algo:ael-decoding}, so that $u^*$ is bounded by a constant independent of $n$.

Suppose $u$ is chosen in step (i) of \cref{algo:ael-decoding} to be $u^*$, which happens with probability at least $\frac{\eta^2}{s^{3d}}$. Rewriting \cref{eq:conditioning_eta_good} with $U$ to denote the randomly chosen set $\{r_1,r_2,\cdots ,r_u\}$ and $N(U)\sub E$ to denote the set of all edges incident on $U \sub R$, we get
\begin{align*}
	\Ex{\substack{U\subseteq R \\ |U| = u}}{\Ex{\li,\ri}{\tildecov\brackets*{\zee_{\li},\zee_{\ri} \vert \zee_{N(U)}}}} ~\leq~ \eta, \\
\Ex{\substack{U\subseteq R, |U| = u \\ \sigma \sim \zee_{N(U)}}}{\Ex{\li,\ri}{\tildecov\brackets*{\zee_{\li},\zee_{\ri} \vert \zee_{N(U)} = \sigma}}} ~\leq~ \eta,
\end{align*}
where $\sigma \sim \zee_{N(U)}$ is used to denote that $\sigma \in \Sigma^{N(U)}$ is sampled according to the local distribution induced on $N(U)$ by $\tildeEx{\cdot}$.

That is, on average, we end up with an $\eta$-good pseudocodeword. We actually picked $\eta$ to be much smaller than the bound on average covariance we will be needing, so that the probability (over conditionings) of obtaining a weaker low covariance becomes very close to 1. This is needed to be able to take a union bound with some other low-probability events we will see soon. A simple application of Markov's inequality shows that the probability of obtaining an $\frac{\eps\delta_{\dec}}{4}$-good pseudocodeword is at least $1-\frac{\eps}{4\delta_{\inn}}$.
\begin{align}\label{eqn:eta_good}
\Pr{U,\sigma}{\Ex{\li,\ri}{\tildecov\brackets*{\zee_{\li},\zee_{\ri} \vert \zee_U = \sigma}} > \frac{\eps \delta_{\dec}}{4}} ~\leq~ \frac{4\eta}{\eps \delta_{\dec}} ~\leq~ \frac{\eps}{4\delta_{\inn}}\;.
\end{align}

Therefore, we started with a pseudocodeword that is close to $h$ (\cref{eqn:agreement_before_conditioning}), and then condition it to make it $\frac{\eps\delta_{\dec}}{4}$-good. We must also argue that this conditioned pseudocodeword is still close to $h$, at least with some probability. This probability cannot be made too large, and this is why we needed to ensure that the low average correlation property holds with probability close to 1, so that both of these hold simultaneously with some positive probability. To do this, we use the law of total expectation and another application of Markov's inequality,
\begin{align}
	\Ex{U,\sigma}{\distR{\condPE{\cdot}{\zee_U = \sigma}}} ~&=~ \Delta_R(\tildeEx{\cdot},h)\;, \\
	~&<~ \delta_{\inn} - \frac{\lambda}{\delta_{\dec}} - \eps\;, && (\text{Using }\cref{eqn:agreement_before_conditioning}) \\
	\implies \Pr{U,\sigma}{\distR{\condPE{\cdot}{\zee_U = \sigma}} \leq \delta_{\inn} - \frac{\lambda}{\delta_{\dec}} - \frac{\eps}{2}} ~&\geq~ \frac{\eps/2}{\delta_{\inn} - \frac{\lambda}{\delta_{\dec}} - \frac{\eps}{2}} ~\geq~ \frac{\eps}{2\delta_{\inn}}\;.\label{eqn:agreement_after_conditioning}
\end{align}

Using a union bound over \cref{eqn:eta_good} and \cref{eqn:agreement_after_conditioning}, we get
\begin{align}\label{eqn:union_bound}
	\Pr{V,\sigma}{\Ex{\li,\ri}{\tildecov\brackets*{\zee_{\li},\zee_{\ri} \vert \zee_U = \sigma}} ~\leq~ \frac{\eps \delta_{\dec}}{4} \text{ and } \distR{\condPE{\cdot}{\zee_U = \sigma}} \leq \delta_{\inn} - \frac{\lambda}{\delta_{\dec}} - \frac{\eps}{2}} ~\geq~ \frac{\eps}{4\delta_{\inn}}\;.
\end{align}

Suppose such a conditioning pair $(U,\sigma)$ is chosen in \cref{algo:ael-decoding}, and this happens with probability at least $\frac{\eps}{4\delta_{\inn}}$.

We define $\dupPE{\cdot} = \condPE{\cdot}{\zee_U = \sigma}$ as in \cref{algo:ael-decoding}, and let us call the corresponding covariance operator as $\dupCov\brackets*{\cdot}$. Rewriting \cref{eqn:union_bound}, we see that $\dupPE{\cdot}$ satisfies the following two properties:
\begin{gather*}
	\Ex{\li,\ri}{\dupCov\brackets*{\zee_{\li},\zee_{\ri}}} ~\leq~ \frac{\eps \delta_{\dec}}{4} \;, \\
	\distR{\dupPE{\cdot}} ~\leq~ \delta_{\inn} - \frac{\lambda}{\delta_{\dec}} - \frac{\eps}{2}\;.
\end{gather*}

Using the SoS distance proof for AEL from \cref{lem:sos_ael_distance} for $\dupPE{\cdot}$, we can use the above upper bound on $\distR{\dupPE{\cdot}}$ to deduce an upper bound on $\distLperp{\dupPE{\cdot}}$.
\begin{align*}
	&\distR{\dupPE{\cdot}} ~\geq~ \delta_{\inn} - \frac{\lambda + \eps \delta_{\dec} / 4}{\distLperp{\dupPE{\cdot}}} \\
	\implies \quad & \quad \frac{\lambda}{\delta_{\dec}} + \frac{\eps}{2} ~\leq~ \frac{\lambda + \eps \delta_{\dec}/4}{\distLperp{\dupPE{\cdot}}} \\
	\implies \quad & \distLperp{\dupPE{\cdot}} ~\leq~ \delta_{\dec} - \frac{\eps \delta_{\dec}}{4\parens*{\frac{\lambda}{\delta_{\dec}} + \frac{\eps}{2}}} ~\leq~ \delta_{\dec} - \frac{\eps \delta_{\dec}}{4 \delta_{\inn}}\;.
\end{align*}

We next wish to show that $y\in (\F_q^{b_{\out}})^L$ obtained by rounding $\tildeEx{\cdot}$ in step (iv) of \cref{algo:ael-decoding} can be used to find the coset $h+\ez^{\perp}$. Let $\dxh$ be the codeword in $\dx \sub (\F_q^{b_{\out}})^L$ corresponding to $h \in \ex$. In other words, $\dxh_{\li} = \unconx(h_{\li})$. 

Let $\dist{y, \dxh}$ denote the normalized Hamming distance between $y$ and $\dxh$, viewed as strings of length $n$ over the alphabet $\F_q^{b_{\out}}$. On average, $y$ satisfies
\begin{align*}
	\Ex{y}{ \dist{y, \dxh}} &= \Ex{y}{ \Ex{\li \in L}{\indi{y_{\li} \neq \dxh_{\li}}}} \\
	&= \Ex{\li \in L}{~\Ex{y}{ \indi{y_{\li} \neq \dxh_{\li}}}} \\
	&= \Ex{\li \in L}{~\Ex{y_{\li}}{ \indi{y_{\li} \neq \dxh_{\li}}}} \\
	&= \Ex{\li \in L}{~ \sum_{w}{ \dupPE{\indi{\unconx(\zee_{\li}) = w}} \indi{w \neq \dxh_{\li}}}} \\
	&= \Ex{\li \in L}{~ \dupPE{\indi{\unconx(\zee_{\li}) \neq \dxh_{\li}}}} \\
	&= \Ex{\li \in L}{~ \dupPE{\indi{\unconx(\zee_{\li}) \neq \unconx(h_{\li})}}} \\
	&= \Ex{\li \in L}{~ \dupPE{\indi{\zee_{\li} \not\in h_{\li} + \cz^{\perp}}}}\\
	&= \distLperp{\dupPE{\cdot}} ~\leq~ \delta_{\dec} - \frac{\eps \delta_{\dec}}{4 \delta_{\inn}} \mper
\end{align*}

Using Markov's inequality,
\begin{align*}
	\Pr{y}{ \dist{y, \dxh} \leq \delta_{\dec}} \geq \frac{\eps}{4\delta_{\inn}} \mper
\end{align*}

Suppose a $y$ is found in \cref{algo:ael-decoding} such that $\dist{y, \dxh} \leq \delta_{\dec}$, which happens with probability at least $\frac{\eps}{4\delta_{\inn}}$. Then the $X$-decoder of the $(\dx,\dz)$ code must return the coset $\dxh+ \dz^{\perp}$. Let $\dxz$ be a coset representative of $\dxh+\dz^\perp$ returned by the $X$-decoder of $(\dx,\dz)$. Using $\ez^{\perp} = \wphix(\dz^{\perp}) + \mathbb{F}_{q}^n \otimes \cz^{\perp}$ from \cref{prop:concat_dual}, 
\begin{align*}
	\dxh - \dxz ~&\in~ \dz^{\perp}\;, \\
	\wphix(\dxh - \dxz) ~&\in~ \ez^\perp\;, \\
	\wphix(\dxh) - \wphix(\dxz) ~&\in~ \ez^\perp\;, \\
	h - \wphix(\dxz) ~&\in~ \ez^\perp + \F_q^n \otimes \cz^\perp = \ez^\perp\;.
\end{align*}

Therefore, \cref{algo:ael-decoding} adds the coset $\wphix(\dxz)+\ez^\perp = h+\ez^\perp$ to $\calL'$. In conclusion, if the following three events happen, the coset $h + \ez^{\perp}$ is added to the list $\calL'$.
\begin{enumerate}
\item $u=u^*$ is chosen, which happens with probability at least $\frac{\eta^2}{s^{3d}}$.
\item The pair $(U,\sigma)$ to condition on is chosen such that \cref{eqn:union_bound} holds. Conditioned on previous event, this happens with probability at least $\frac{\eps}{4\delta_{\inn}}$.
\item A $y$ is generated so that $\dist{y,\dxh} \leq \delta_{\dec}$. Conditioned on above two events, this happens with probability at least $\frac{\eps}{4\delta_{\inn}}$.
\end{enumerate}

Therefore, in any iteration, the coset $h + \ez^{\perp}$ is added to the list $\calL'$ with probability at least,
\[
	p ~=~ \frac{\eta^2}{s^{3d}} \cdot \frac{\eps}{4\delta_{\inn}} \cdot \frac{\eps}{4\delta_{\inn}} ~=~ \frac{ \parens*{\delta_{\dec}}^2 \cdot \eps^6}{4096 \cdot s^{3d} \cdot \delta_{\inn}^4} ~=~ \Omega_{s,d,\delta_{\dec}} (\eps^6) \mper
\]

Note that this immediately implies an upper bound of $1/p$ on the list size, although we can get better list sizes (combinatorially) by appealing to the covering lemma and approximate \Caratheodory theorem.

Finally, we show that with enough repetitions, $\calL'$ must contain the entire list $\calL$ with high probability. The probability that a coset in $\calL$ does not get added to $\calL'$ in $M$ iterations is
\[
	(1-p)^M ~\leq~ e^{-p \cdot M}
\]

With a union bound over the entire list, which is of size at most $1/p$, the probability that any coset in $\calL$ is not present in $\calL'$ is at most
\[
	\frac{1}{p} \cdot e^{-p \cdot M} ~=~ e^{-p M \ln p} ~\leq~ \gamma \;\text{ if } M = \frac{\ln(1 / \gamma)}{p \ln p}. \qedhere
\]
\end{proof}

\subsection{Derandomization}\label{sec:derandomization}

In this section, we show how to derandomize the list decoding \cref{algo:ael-decoding}. In this process, the size of the list $\calL'$ output by the algorithm may be significantly larger, and in particular may grow polynomially with $n$. In contrast, the list size for the probabilistic version was bounded by a constant independent of $n$. Once again, this derandomization is very similar to the one used by \cite{JST23}, and we only include it for completeness.

Suppose we have the same setting as in the last section, and we borrow notation from there to avoid defining the same objects twice. We wish to show that \cref{algo:ael-decoding} can be modified to deterministically output a list $\calL'$ of polynomial size that contains the true list $\calL$. There are three steps where \cref{algo:ael-decoding} makes random choices, and we will address each of them separately:
\begin{enumerate}
\item We try out all $\frac{s^{3d}}{\eta^2}$ possible values of $u$ instead of picking one randomly.
\item We try all pairs $(U,\sigma)$ to condition on. These are at most $n^{\frac{s^{3d}}{\eta^2}} \cdot s^{d\frac{s^{3d}}{\eta^2}}$ many in number, which causes a multiplicative overhead of $ (s^dn)^{\frac{s^{3d}}{\eta^2}}$ in running time and list size.
\item Finally, for independently rounded $y$, we use a standard technique called threshold rounding, also used in \cite{JST23}, to try out all possible $y$ that could be generated. This is described in the next \cref{lem:derandomized_decoding_from_distributions}.
\end{enumerate}
\begin{lemma}\label{lem:derandomized_decoding_from_distributions}
	Let $(\dx, \dz)$ be the outer code defined over alphabet $\F_q^{b_{\out}}$, which is unique decodable from distance $\delta_{\dec}\leq \delta/2$ in time $\calT(n)$. Given a pseudocodeword $\tildeEx{\cdot}$ with the property
	\[
		\distLperp{\tildeEx{\cdot}} \leq \delta_{\dec} \;\;\text{ for some $h\in \ex$,}
	\]
we can find the coset $h+\ez^{\perp}$ in time $\calO(q^{b_{\out}} n)\cdot\calT(n)$ with a deterministic algorithm.
%
	\begin{proof}
	We will use $\tildeEx{\cdot}$ to obtain $|L|=n$ distributions over $\F_q^{b_{\out}}$, say $\{ \calW_{\li} ~|~ \li \in L\}$. We will not use the vector space structure of $\F_q^{b_{\out}}$ here and only view it as a set of size $t\defeq q^{b_{\out}}$. We identify elements of $\F_q^{b_{\out}}$ with those of $[t]$ under an arbitrary ordering, and henceforth, we will view each $\calW_{\li}$ as a distribution over $[t]$. If $w_{\li j}$ denotes the weight of distribution $\calW_{\li}$ on the symbol $j\in [t]$,
	\[
		w_{\li j} = \tildeEx{\indi{\phi_X^{-1}(\zee_{\li}) = j}}
	\]
	We round $\{ \calW_{\li} ~|~ \li \in L\}$ to an $h'\in (\F_q^{b_{\out}})^L$ vertex by vertex. Choose a random $\theta \in [0,1]$. For each $\li \in L$, $h'_{\li}$ is defined to be the smallest $j\in [t]$ such that  $\theta < w_{\li 1}+w_{\li 2} + \cdots  + w_{\li j}$. 
	
	As before, let $\dxh \in \dx$ be the codeword corresponding to $h\in \ex$, that is, $\dxh_{\li} = \unconx(h_{\li})$. Therefore, the probability of $h'_{\li} = j$ is precisely $w_{\li j}$, as before. These events across $\li \in L$ are no longer independent however. But linearity of expectation can be used to show that the key claim from the previous section, that the distance of $h'$ from $\dxh$ is at most $\delta_{\dec}$, continues to hold under this modified rounding. If this holds, $h'$ can be used to find $h+\ez^{\perp}$ using the unique decoder of $(\dx,\dz)$.
		\begin{align*}
			\Ex{\theta \in [0,1]}{\Delta(h',\dxh)} ~&=~ \Ex{\theta \in [0,1]}{\Ex{\li}{\indi{h'_\li \neq \dxh_\li}}} \\
			~&=~\Ex{\li}{1 - w_{\li \dxh_\li}} \\
			~&=~ \Ex{\li}{\tildeEx{\indi{\unconx(\zee_{\li}) \neq \dxh_{\li}}}} \\
			~&=~ \Ex{\li}{\tildeEx{\indi{\unconx(\zee_{\li}) \neq \unconx(h_{\li})}}} \\
			~&=~ \Ex{\li}{\tildeEx{\indi{\zee_{\li} \not\in h_{\li} + \cz^{\perp} }}} \\
			~&=~ \distLperp{\tildeEx{\cdot}} ~\leq~ \delta_{\dec}\;.
		\end{align*}
		
		The fact that a common threshold $\theta$ shared by all $\li \in L$ is used to generate this rounding allows us to make a single random choice for $\theta$ instead of different random choices for each $\li \in L$. Moreover, the rounded $h'$ is exactly same if the threshold $\theta$ is perturbed without crossing any of the $|L| \cdot t = q^{b_{\out}}\cdot n$ values $\{ w_{\li 1}+w_{\li 2} + \cdots w_{\li j} \}_{\li\in L, j \in [t]}$. Therefore, we only need to try out roundings for $q^{b_{\out}}\cdot n$ many distinct thresholds, which proves the lemma.
	\end{proof}
\end{lemma}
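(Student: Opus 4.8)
The plan is to round the pseudocodeword $\tildeEx{\cdot}$ to an honest word over the outer alphabet $\F_q^{b_\out}$, feed it to the unique $X$-decoder of $(\dx,\dz)$, and derandomize the rounding by observing that only finitely many thresholds produce distinct rounded words.

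First I would extract a family of distributions from $\tildeEx{\cdot}$: for each left vertex $\li$ and each $j\in\F_q^{b_\out}$, set $w_{\li j} = \tildeEx{\indi{\unconx(\zee_{\li}) = j}}$, which is well defined since $\indi{\unconx(\zee_{\li})=j}$ is a $d$-local function (it only looks at the $d$ edges incident to $\li$). These quantities are nonnegative and sum to $1$ over $j$, being the pushforward under $\unconx$ of the local distribution for $N(\li)$, so $\calW_{\li}$ is a genuine distribution over $\F_q^{b_\out}$, which I treat as $[t]$ with $t=q^{b_\out}$ under a fixed ordering. Then I would apply threshold rounding with a \emph{single} shared parameter $\theta\in[0,1]$: let $h'_{\li}$ be the least $j$ with $\theta < w_{\li 1}+\cdots+w_{\li j}$. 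The point of sharing $\theta$ across all $\li$ is that the marginal law of each $h'_{\li}$ is still exactly $\calW_{\li}$, i.e.\ $\Pr_\theta[h'_{\li}=j] = w_{\li j}$, even though the coordinates of $h'$ are now coupled.

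Next, letting $\dxh\in\dx$ be the codeword corresponding to $h$ (so $\dxh_{\li}=\unconx(h_{\li})$), linearity of expectation — which needs no independence across $\li$ — gives
\begin{align*}
\Ex{\theta}{\dist{h',\dxh}} &= \Ex{\li}{\Pr{\theta}{h'_{\li}\neq \dxh_{\li}}} = \Ex{\li}{\tildeEx{\indi{\unconx(\zee_{\li})\neq \unconx(h_{\li})}}} \\
&= \Ex{\li}{\tildeEx{\indi{\zee_{\li}\not\in h_{\li}+\cz^\perp}}} = \distLperp{\tildeEx{\cdot}} ~\leq~ \delta_{\dec},
\end{align*}
where the third equality uses $\unconx(\zee_{\li})=\unconx(h_{\li})\iff \zee_{\li}\in h_{\li}+\cz^\perp$, which is the coset-preserving identity of \cref{eqn:coset_sos} (here just the definition of $\unconx$ on the coset $h_{\li}+\cz^\perp$). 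Hence there is a $\theta^\star$ with $\dist{h',\dxh}\leq\delta_{\dec}$, and for that $\theta^\star$ the unique $X$-decoder of $(\dx,\dz)$, run on $h'$, returns a representative $\dxz$ of the coset $\dxh+\dz^\perp$. Lifting through $\wphix$ and using $\ez^\perp=\wphix(\dz^\perp)+\F_q^n\otimes\cz^\perp$ from \cref{prop:concat_dual} exactly as in the randomized argument, $\wphix(\dxz)+\ez^\perp=\wphix(\dxh)+\ez^\perp=h+\ez^\perp$, which is the desired coset.

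Finally, to make this deterministic and bound the running time: the map $\theta\mapsto h'$ is piecewise constant and changes only when $\theta$ crosses one of the at most $t\cdot\abs{L}=q^{b_\out}\cdot n$ breakpoints $\{w_{\li 1}+\cdots+w_{\li j}\}_{\li\in L,\,j\in[t]}$. So it suffices to enumerate one $\theta$ from each of the $\leq q^{b_\out} n+1$ intervals, run the unique decoder (cost $\calT(n)$) on each resulting $h'$, and output the coset recovered; the total cost is $\calO(q^{b_\out}n)\cdot\calT(n)$. I do not expect a genuine obstacle here: the one point to handle with care is verifying that a shared threshold still reproduces each marginal $\calW_{\li}$ — that is precisely what allows linearity of expectation to push the bound $\distLperp{\tildeEx{\cdot}}\le\delta_{\dec}$ through despite the loss of independence — and everything else is bookkeeping together with reuse of \cref{prop:concat_dual} and the $X$-decoder of the outer code.
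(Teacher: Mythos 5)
Your proof is correct and follows essentially the same route as the paper's: extract local marginals $w_{\li j}$ from the pseudocodeword, round with a single shared threshold $\theta$ (observing this preserves each marginal, so linearity of expectation still gives $\Ex{\theta}{\dist{h',\dxh}} = \distLperp{\tildeEx{\cdot}} \leq \delta_{\dec}$), and enumerate the $\calO(q^{b_{\out}}n)$ breakpoints of the piecewise-constant map $\theta\mapsto h'$. The only cosmetic difference is that you spell out the lift through $\wphix$ and $\cref{prop:concat_dual}$ to recover $h+\ez^\perp$, which the paper's proof of this lemma only references in passing.
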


\subsection{Near-MDS Quantum LDPC Codes list decodable up to Johnson bound}\label{sec:near_mds}

In this section, we show how instantiating the AEL amplification with unique decodable asymptotically good QLDPC codes leads to QLDPC codes near the (quantum) Singleton bound that can be list decoded up to the Johnson bound. 

\begin{theorem}[Near-MDS Codes decodable upto Johnson bound]\label{thm:near_mds_main}
	For any $0<\rho<1$, and for any $\nfrac{1}{2} >\eps_1, \eps_2>0$, there is an infinite family of quantum LDPC codes $(\fx,\fz)$ with the following properties:
	\begin{enumerate}[(i)]
		\item The rate of the code is at least $\rho$ and distance is at least $\frac{1-\rho-\eps_1}{2}$.
		\item The code is over an alphabet of size $2^{\calO(\eps_1^{-6}\log(1/\eps_1))}$.
		\item The code of blocklength $n$ can be list decoded from radius $\calJ(\frac{1-\rho - \eps_1}{2})-\eps_2$ in time $n^{\calO_{\eps_1}(1/\eps_2^4)}$.
	\end{enumerate}
\end{theorem}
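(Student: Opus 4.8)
The plan is to instantiate the AEL list-decoding machinery of \cref{thm:ael_final,thm:list_decoding_ael} with a suitable outer code, inner code, and expander, and then verify (i)--(iii). We may assume $\rho+\eps_1<1$ (bounded away from $1$), since otherwise the claimed distance $\tfrac{1-\rho-\eps_1}{2}$ and list-decoding radius $\calJ(\tfrac{1-\rho-\eps_1}{2})-\eps_2$ are non-positive and the statement is vacuous.

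\textbf{Building blocks and choice of constants.} For the \emph{outer} code $\cD$ take the linear-time unique-decodable good QLDPC codes of Leverrier and \Zemor~\cite{LZ23:decodable}, tuned to rate at least $1-\eps'$ for a small constant $\eps'=\Theta(\eps_1)$ with $\eps'<1-\rho$; this fixes a constant relative distance $\delta_\out=\delta_\out(\eps')>0$, a constant unique-decoding radius $\delta_{\dec}=\Theta(\delta_\out)\le\delta_\out/2$, and the LDPC property. (The raw code may first need to be folded into blocks of a constant size $b_\out$ to match the inner code; folding preserves rate and the LDPC property, decreases relative distance and decoding radius by at most the constant factor $b_\out$, and a unique decoder for the unfolded code still decodes the folded code.) For the \emph{inner} code $\cC$ take a CSS code of constant blocklength $d$ over $\F_q^{b_\inn}$ with $b_\inn k_\inn=b_\out$, of rate $\rho_\inn:=\rho/(1-\eps')<1$ and relative distance $\delta_\inn\ge\tfrac{1-\rho_\inn}{2}-O(1/d)$, i.e.\ within $O(1/d)$ of the quantum Singleton bound; such codes (for instance quantum MDS codes built from generalized Reed--Solomon codes, or their puncturings) exist over any field of size $q^{b_\inn}=\poly(d)$, and being of constant blocklength can alternatively be found by exhaustive search in $O_d(1)$ time, after choosing $b_\inn=\Theta(\log d)$ so that $q^{b_\inn}=\Theta(d)$ and $b_\out=b_\inn k_\inn=\Theta(d\log d)$. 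Finally take an $(n,d,\lambda)$-expander with $\lambda\le 2\sqrt{d-1}/d$ (a double cover of an LPS Ramanujan graph, or an explicit near-Ramanujan graph, so that the graph order can be matched to the blocklength of $\cD$). Now choose $d$ large enough --- which forces $d=\poly(1/\eps_1)$, with the exponent governed by how fast $\delta_{\dec}$ degrades as the outer rate tends to $1$ --- so that $O(1/d)\le\eps_1/6$ and $2/\sqrt d\le\tfrac{\eps_1}{6}\delta_{\dec}$; in particular then $\lambda<\delta_{\dec}$ (as required by \cref{thm:list_decoding_ael}), $\lambda/\delta_{\dec}\le\eps_1/6$, and $\rho_\inn-\rho=\rho\eps'/(1-\eps')\le\eps_1/6$, so the deficits add up to at most $\eps_1/2$ and
\[
	\delta_\inn-\frac{\lambda}{\delta_{\dec}} ~\ge~ \frac{1-\rho_\inn}{2}-O(1/d)-\frac{\lambda}{\delta_{\dec}} ~\ge~ \frac{1-\rho-\eps_1}{2}\,.
\]

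\textbf{Verification.} Let $\cF=(\fx,\fz)$ be the AEL code built from $\cD$, $\cC$ and the chosen graph. By \cref{cor:ael_rate} its rate is at least $(1-\eps')\,\rho_\inn=\rho$, and by \cref{thm:ael_final} its relative distance is $\delta_R\ge\delta_\inn-\lambda/\delta_\out\ge\delta_\inn-\lambda/\delta_{\dec}\ge\tfrac{1-\rho-\eps_1}{2}$ (using $\delta_{\dec}\le\delta_\out$ and the display above); since $\cD$ is LDPC and $d\,b_\inn$ is constant, \cref{prop:concat_dual} gives that $\cF$ is LDPC, proving (i). The alphabet of $\cF$ has size $(q^{b_\inn})^d=\poly(d)^{d}=2^{O(d\log d)}$, which for $d=\Theta(\eps_1^{-6})$ is $2^{O(\eps_1^{-6}\log(1/\eps_1))}$, proving (ii). For (iii) apply \cref{thm:list_decoding_ael} with $\eps=\eps_2$ (and constant failure probability, or the derandomized version of \cref{sec:derandomization} for a deterministic algorithm, at the cost of a polynomial-size list): since $\lambda<\delta_{\dec}$ it list-decodes $\cF$ from radius $\calJ\bigl(\delta_\inn-\tfrac{\lambda}{\delta_{\dec}}\bigr)-\eps_2\ge\calJ\bigl(\tfrac{1-\rho-\eps_1}{2}\bigr)-\eps_2$ by the display and monotonicity of $\calJ$, in time $n^{s^{O(d)}/\eps_2^4}\cdot\calT(n)=n^{O_{\eps_1}(1/\eps_2^4)}$ since $\calT(n)=O(n)$ and $s=q^{b_\inn}$, $d$ are constants depending only on $\eps_1$. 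The outer codes (and the matching expanders) form an infinite family with $n\to\infty$, and all ingredients are explicit or searchable in $O_{\eps_1}(1)$ time, so $\cF$ ranges over an infinite, deterministically constructible family.

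\textbf{Main obstacle.} Essentially all the work is the parameter bookkeeping above, and the two delicate points are: (a) that the good QLDPC outer codes can be taken with rate arbitrarily close to $1$ while retaining a linear-time unique decoder, and pinning down how fast the decoding radius $\delta_{\dec}$ shrinks as the rate approaches $1$ --- this quantitative dependence is exactly what controls the polynomial size of $d$, hence the alphabet size $2^{O(d\log d)}$; and (b) exhibiting a near-Singleton inner CSS code of blocklength $d$ over a field of size $\poly(d)$ that also satisfies the folding relation $b_\inn k_\inn=b_\out$ demanded by the concatenation construction. Once these are in place, the theorem is a direct substitution into \cref{thm:ael_final,thm:list_decoding_ael}.
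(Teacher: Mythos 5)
Your proposal matches the paper's proof in all essentials: the outer code is the \cite{LZ23:decodable} unique-decodable QLDPC family at rate $1-\Theta(\eps_1)$, the inner code is a near-Singleton quantum Reed--Solomon code over a $\poly(d)$-size field viewed as a vector space CSS code via \cref{lem:field_downgrade}, the graph is Ramanujan, $d=\Theta(\eps_1^{-6})$ is chosen so that $\lambda/\delta_{\dec}=O(\eps_1)$, and one then substitutes into \cref{thm:ael_final,thm:list_decoding_ael}. Of the two items you flag as obstacles, (a) is simply a cited fact, $\delta_{\dec}=\Omega(\eps_1^2)$ at rate $1-\eps_1$ from \cite{LZ23:decodable}, which the paper invokes directly; and (b) is not a constraint at all, since $b_\out$ is a free parameter in the folding and one just defines $b_\out := b_\inn k_\inn$.
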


\begin{proof}
We show how to instantiate \cref{thm:list_decoding_ael} to obtain such codes. 
We will use the AEL distance amplification based on a family of $(n,d,\lambda)$-expanders, with $d$ and $\lambda$ to be chosen later.

The inner code $\cC = (\cx,\cz)$ is chosen to be a quantum Reed-Solomon code of rate $\rho_0 \defeq \frac{\rho}{1-\eps_1}$ and distance $\frac{1-\rho_0}{2}$. Since the inner code has blocklength $d$, the alphabet size of $\cC$ must be at least $d$, and we choose $\cC$ to be defined over a finite field $\F_{2^{b_{\inn}}}$, where $b_{\inn} \leq 1+\log_2 d$. Using \cref{lem:field_downgrade}, we will view $\cC$ as a vector space CSS code with alphabet $\F_2^{b_{\inn}}$.

For the outer code $\cD = (\dx,\dz)$, we fold the binary QLDPC codes of \cite{LZ23:decodable} into blocks of size $b_{\out}$, with $b_{\out} = b_{\inn}\cdot \rho_0 d$ so that concatenation is well defined. The starting binary code from \cite{LZ23:decodable} is chosen to have rate $1-\eps_1$ that can be unique decoded from radius $\delta_{\dec}=\Omega(\eps_1^2)$, and these properties are preserved after folding.

Let $\lambda = \eps_1 \delta_{\dec},\;$ so that $\lambda \leq \Omega(\eps_1^3)$ and $d=\calO(1/\eps_1^6)$. The rate of the AEL-amplified code $(\fx,\fz)$ is $(1-\eps_1) \rho_0 = \rho$, and the distance is at least
\begin{align*}
	\frac{1-\rho_0}{2} - \frac{\lambda}{\delta_1} &~\geq~ \frac{1}{2} \parens*{1-\frac{\rho}{1-\eps_1}} - \frac{\lambda}{\delta_{\dec}} \\
	&~\geq~ \frac{1}{2} \parens*{1-\rho -2\eps_1 \rho} -\eps_1 \\
	&~\geq~ \frac{1-\rho}{2} - 2\eps_1
\end{align*}

The final code is defined over alphabet $\F_2^{b_{\inn}\cdot d}$, which is of size at most $ 2^{(1+\log d))d}  = 2^{ \calO\inparen{\eps_1^{-6}\log(1/\eps_1)}}$. Since AEL amplification preserves the LDPC property of the outer code, our final code $(\fx,\fz)$ is also LDPC. For list decodability, we use \cref{thm:list_decoding_ael} to claim that the above code can be list decoded from $\calJ(\frac{1-\rho}{2}-2\eps_1)-\eps_2$ in time $n^{\calO_{q,d,\delta_{\dec}}(1/\eps_2^4)} = n^{\calO_{\eps_1}(1/\eps_2^4)}$. The claimed parameters can be obtained by replacing $\eps_1$ by $\eps_1/4$. We note that the decoder above can be made deterministic as described in \cref{sec:derandomization}.
\end{proof}

\section*{Acknowledgements}
We are grateful to Louis Golowich for
helpful conversations regarding the quantum version of the AEL procedure, and to Mehdi Soleimanifar
for discussions on LP decoding of quantum codes.
Part of this work was completed while the authors were participating in the ``Analysis and TCS''
program at the Simons Institute in Berkeley. 
We thank the program organizers and Simons administration and staff for their kind hospitality
during this visit.

\bibliographystyle{alphaurl}
\bibliography{macros,madhur2}

\appendix

\section{Correlation Rounding}\label{app:correlation_rounding}

In this appendix, we include proofs of some claims that are standard when using correlation rounding for Sum-of-Squares hierarchy. This presentation is largely borrowed from \cite{JST23}.

Recall that we are dealing with pseudocodewords as defined in \cref{sec:sos_prelims}, which can be seen as relaxations of certain distributions over $\Sigma^E$. Here, $\Sigma$ is an alphabet of size $s$, and $E$ is the edge set of a $d$-regular bipartite graph with $|L|=|R|=n$. 

The following lemma says that the decrease in variance of one random variable upon conditioning on the other is proportional to their covariance. This is based on \cite[Lemma 5.2]{BRS11}, and is the basis of by now a standard technique used for rounding convex hierarchies.
\begin{lemma}
	Let $\tildeEx{\cdot}$ be a pseudoexpectation operator of SoS-degree $t$ with associated pseudovariance $\tildeVar{\cdot}$ and pseudocovariance $\tildeCov{\cdot}{\cdot}$. Assume $S,T$ are sets such that $|S|+|T|\leq t/2$, then,
	\[
		\tildeVar{\zee_S | \zee_T} \leq \tildeVar{\zee_S} - \frac{1}{s^{|T|}} \sum_{\alpha \in \Sigma^S, \beta\in \Sigma^T} \frac{(\tildeCov{\zee_{S,\alpha}}{\zee_{T,\beta}})^2}{\tildeVar{\zee_{T,\beta}}}
	\]
\end{lemma}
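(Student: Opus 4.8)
The plan is to reduce to the pseudo-variance of a single indicator variable and then run the pseudoexpectation analogue of the law of total variance. Since $\tildeVar{\zee_S} = \sum_{\alpha\in\Sigma^S}\tildeVar{\zee_{S,\alpha}}$ and, for the conditioned solution, $\tildeVar{\zee_S \mid \zee_T} = \sum_{\alpha\in\Sigma^S}\tildeVar{\zee_{S,\alpha}\mid \zee_T}$ with each summand nonnegative (every $\tildeVar{\zee_{S,\alpha}\mid\zee_T=\gamma}$ is a pseudoexpectation of a square under the degree-$(t-2|T|)$ conditioned solution, which is legitimate precisely because $|S|+|T|\le t/2$), it suffices to establish the inequality with $\zee_S$ replaced throughout by $\zee_{S,\alpha}$ and then sum over $\alpha$. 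Fixing $\alpha$, writing $X=\zee_{S,\alpha}$, $p_\gamma=\tildeEx{\zee_{T,\gamma}}$ and $c_\gamma=\tildeCov{X}{\zee_{T,\gamma}}$, and summing only over $\gamma$ with $p_\gamma>0$ (harmless, since $p_\gamma=0$ forces $c_\gamma=0$), I would use the polynomial identities $X^2=X$ and $\sum_\gamma\zee_{T,\gamma}=1$ (both valid modulo the canonical constraints) to compute
\[
	\tildeVar{X} - \tildeVar{X\mid\zee_T} ~=~ \sum_\gamma p_\gamma\Big(\tildeEx{X\mid\zee_T=\gamma}-\tildeEx{X}\Big)^2 ~=~ \sum_\gamma \frac{c_\gamma^2}{p_\gamma}\,,
\]
the last equality being the elementary identity $\tildeEx{X\mid\zee_T=\gamma}-\tildeEx{X} = c_\gamma/p_\gamma$.

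It then remains to prove $\sum_\gamma c_\gamma^2/p_\gamma \ge \frac{1}{s^{|T|}}\sum_\gamma c_\gamma^2/\tildeVar{\zee_{T,\gamma}}$, recalling $\tildeVar{\zee_{T,\gamma}}=p_\gamma(1-p_\gamma)$. The crucial input is the linear relation $\sum_{\gamma\in\Sigma^T}c_\gamma = \tildeCov{X}{\sum_\gamma\zee_{T,\gamma}} = \tildeCov{X}{1} = 0$, which for any fixed $\beta\in\Sigma^T$ gives, via Cauchy--Schwarz,
\[
	c_\beta^2 ~=~ \Big(\sum_{\gamma\neq\beta}c_\gamma\Big)^2 ~\le~ \Big(\sum_{\gamma\neq\beta}p_\gamma\Big)\Big(\sum_{\gamma\neq\beta}\frac{c_\gamma^2}{p_\gamma}\Big) ~=~ (1-p_\beta)\Big(\sum_\gamma\frac{c_\gamma^2}{p_\gamma} - \frac{c_\beta^2}{p_\beta}\Big)\,,
\]
and rearranging yields $\sum_\gamma c_\gamma^2/p_\gamma \ge c_\beta^2/\big(p_\beta(1-p_\beta)\big) = c_\beta^2/\tildeVar{\zee_{T,\beta}}$. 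Averaging this inequality over the $s^{|T|}$ choices of $\beta$ gives exactly the bound we want; plugging it into the identity from the first paragraph and summing over $\alpha$ finishes the proof.

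I expect the last step to be the only real obstacle. The naive termwise comparison $1/p_\gamma \ge \frac{1}{s^{|T|}}\cdot 1/\big(p_\gamma(1-p_\gamma)\big)$ is false as soon as some $p_\gamma$ exceeds $1-1/s^{|T|}$, so one genuinely needs to exploit that the $\zee_{T,\gamma}$ partition unity (equivalently $\sum_\gamma c_\gamma=0$) to prevent all the covariance from concentrating on a single symbol, and the averaging-over-$\beta$ device is what converts that linear relation into the stated inequality. The remaining ingredients --- the law-of-total-variance manipulation, the degree bookkeeping for the conditionings, and nonnegativity of conditioned pseudo-variances --- are routine and parallel \cite{BRS11, JST23}.
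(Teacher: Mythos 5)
Your proof is correct and complete. The paper itself does not actually supply a proof of this lemma: it cites \cite{BRS11} and appends only the remark that, because $S$ and $T$ are local sets with $|S|+|T|\le t/2$, the local distribution on $\zee_{S\cup T}$ is a genuine distribution, so the classical argument applies verbatim. Your write-up is a self-contained rendering of exactly that reduction. The decomposition into single indicators $\zee_{S,\alpha}$ (using that the pseudo-variances are non-negative and additive over $\alpha$), the law-of-total-variance identity $\tildeVar{X}-\tildeVar{X\mid\zee_T}=\sum_\gamma c_\gamma^2/p_\gamma$ (which, as you say, hinges on $X^2=X$ and $\sum_\gamma\zee_{T,\gamma}=1$, both enforced by the canonical constraints), and the degree bookkeeping ($2|S|\le t-2|T|$) are all in order. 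The genuinely non-routine step is the last one: the naive termwise comparison $1/p_\gamma \ge s^{-|T|}/\bigl(p_\gamma(1-p_\gamma)\bigr)$ does fail once some $p_\gamma>1-s^{-|T|}$, and your device of exploiting $\sum_\gamma c_\gamma=0$ via Cauchy--Schwarz on $\Sigma^T\setminus\{\beta\}$ to obtain $\sum_\gamma c_\gamma^2/p_\gamma \ge c_\beta^2/\tildeVar{\zee_{T,\beta}}$ for every $\beta$, then averaging over the $s^{|T|}$ choices of $\beta$, is the right (and essentially the BRS-style) way to convert the unit-partition relation into the stated $s^{-|T|}$ factor. The only thing left implicit, harmlessly, is the convention $0/0=0$ when some $\tildeVar{\zee_{T,\beta}}=0$ (which forces $p_\beta\in\{0,1\}$ and hence $c_\beta=0$); that degeneracy is already present in the statement of the lemma.
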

Note that since we are only talking about local sets $S$ and $T$ here, the argument follows by working only with the local distributions, which are true distributions.
%

The next lemma gives an averaged out version of the above.
\begin{lemma}\label{lem:conditioning_reduces_variance}
Let $\tildeEx{\cdot}$ be a pseudocodeword of degree at least $4d$, with associated pseudovariance $\tildeVar{\cdot}$ and pseudocovariance $\tildeCov{\cdot}{\cdot}$. Then conditioning on a random $r\in R$ causes a decrease in average pseudovariance in $L$ that is proportional to the average pseudocovariance between $\zee_{\li}$ and $\zee_{r}$ for $\li, \ri$ chosen uniformly from $L\times R$.
    \[
        \Ex{\ri \in R}{\Ex{\li}{ \tildeVar{\zee_{\li} \vert \zee_{\ri}}}} < \Ex{\li}{ \tildeVar{\zee_{\li}}} - \frac{1}{s^{3d}}{\parens*{\Ex{\li,\ri}{\tildeCov{\zee_{\li}}{\zee_{\ri}}}}^2}
    \]
\end{lemma}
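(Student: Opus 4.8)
The plan is to apply the previous lemma with $S = N(\li)$ and $T = N(\ri)$ for a left vertex $\li \in L$ and a right vertex $\ri \in R$, and then average over a uniformly random pair $(\li,\ri) \in L \times R$. Since $G$ is $d$-regular we have $|S| = |T| = d$, so $|S| + |T| = 2d \leq t/2$ because the pseudocodeword has degree at least $4d$; hence the previous lemma applies and, recalling that $\zee_{\li}$ abbreviates $\zee_{N(\li)}$, it yields
\[
	\tildeVar{\zee_{\li} \vert \zee_{\ri}} ~\leq~ \tildeVar{\zee_{\li}} ~-~ \frac{1}{s^{d}} \sum_{\alpha \in \Sigma^{N(\li)},\, \beta \in \Sigma^{N(\ri)}} \frac{\parens[\big]{\tildeCov{\zee_{\li,\alpha}}{\zee_{\ri,\beta}}}^2}{\tildeVar{\zee_{\ri,\beta}}} \mper
\]

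Next I would simplify the error term using only two facts. First, $\tildeVar{\zee_{\ri,\beta}} = \tildeEx{\zee_{\ri,\beta}^2} - \tildeEx{\zee_{\ri,\beta}}^2 \leq \tildeEx{\zee_{\ri,\beta}^2} = \tildeEx{\zee_{\ri,\beta}} \leq 1$ (using $Z_{e,j}^2 = Z_{e,j}$ and $\sum_j Z_{e,j} = 1$), so each summand dominates $\parens[\big]{\tildeCov{\zee_{\li,\alpha}}{\zee_{\ri,\beta}}}^2$ and terms with $\tildeVar{\zee_{\ri,\beta}} = 0$ contribute nothing. Second, since there are at most $s^{2d}$ pairs $(\alpha,\beta)$, Cauchy--Schwarz gives
\[
	\sum_{\alpha,\beta} \parens[\big]{\tildeCov{\zee_{\li,\alpha}}{\zee_{\ri,\beta}}}^2 ~\geq~ \frac{1}{s^{2d}} \parens[\Big]{ \sum_{\alpha,\beta} \abs[\big]{\tildeCov{\zee_{\li,\alpha}}{\zee_{\ri,\beta}}} }^2 ~=~ \frac{1}{s^{2d}} \parens[\big]{\tildeCov{\zee_{\li}}{\zee_{\ri}}}^2 \mcom
\]
where the last equality is just the definition of $\tildeCov{\zee_{\li}}{\zee_{\ri}}$. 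Combining the two displays gives the per-vertex bound $\tildeVar{\zee_{\li} \vert \zee_{\ri}} \leq \tildeVar{\zee_{\li}} - \tfrac{1}{s^{3d}}\parens[\big]{\tildeCov{\zee_{\li}}{\zee_{\ri}}}^2$.

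Finally I would take $\Ex{\li \sim L}{\cdot}$ and $\Ex{\ri \sim R}{\cdot}$ of both sides and apply Cauchy--Schwarz (equivalently, convexity of $x \mapsto x^2$) once more to pass from $\Ex{\li,\ri}{\parens[\big]{\tildeCov{\zee_{\li}}{\zee_{\ri}}}^2}$ down to $\parens[\big]{\Ex{\li,\ri}{\tildeCov{\zee_{\li}}{\zee_{\ri}}}}^2$, which produces exactly the claimed inequality. There is no real obstacle here: the argument is entirely about the genuine local distributions of $\tildeEx{\cdot}$ (so $\tildeVar{\cdot}$ and $\tildeCov{\cdot}{\cdot}$ behave like honest variances and covariances), and the only point requiring care is checking that $S$ and $T$ are small enough for the previous lemma to apply — which is precisely why the degree is assumed to be at least $4d$.
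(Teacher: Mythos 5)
Your proof is correct and follows essentially the same route as the paper's: apply the preceding BRS-style lemma with $S=N(\li)$, $T=N(\ri)$, drop the $\tildeVar{\zee_{\ri,\beta}}$ denominator via the bound by $1$, apply Cauchy--Schwarz over the at most $s^{2d}$ pairs $(\alpha,\beta)$, average over $(\li,\ri)$, and apply Cauchy--Schwarz once more. You also add a small but welcome clarification (that terms with $\tildeVar{\zee_{\ri,\beta}}=0$ have vanishing covariance and hence contribute nothing), which the paper glosses over.
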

\begin{proof}

    \begin{align*}
        \Ex{\ri \in R}{\Ex{\li}{\tildeVar{\zee_{\li}|\zee_{\ri}}}} &= \Ex{\li,\ri}{\tildeVar{\zee_{\li}|\zee_{\ri}}}\\
        &\leq \Ex{\li,\ri}{\tildeVar{\zee_{\li}} - \frac{1}{s^d} \sum_{\alpha, \beta} \frac{(\tildeCov{\zee_{N(\li),\alpha}}{\zee_{N(\ri),\beta}})^2}{\tildeVar{\zee_{N(\ri),\beta}}}} \\
        &\leq \Ex{\li,\ri}{\tildeVar{\zee_{\li}} - \frac{1}{s^d} \sum_{\alpha, \beta} \left( \tildeCov{\zee_{N(\li),\alpha}}{\zee_{N(\ri),\beta}} \right)^2} \\
        &\leq \Ex{\li,\ri}{\tildeVar{\zee_{\li}} - \frac{1}{s^{3d}} \left( \sum_{\alpha, \beta} \abs{ \tildeCov{\zee_{N(\li),\alpha}}{\zee_{N(\ri),\beta}}}\right)^2} \\
        &= \Ex{\li}{\tildeVar{\zee_{\li}}} - \frac{1}{s^{3d}} \Ex{\li,\ri}{\left( \tildeCov{\zee_{\li}}{\zee_{\ri}} \right)^2} \\
        &\leq \Ex{\li}{\tildeVar{\zee_{\li}}} - \frac{1}{s^{3d}} \left( \Ex{\li,\ri}{ \tildeCov{\zee_{\li}}{\zee_{\ri}} } \right)^2 
\qedhere    \end{align*}
\end{proof}
That is, a pseudocodeword which is not $\eta$-good will see its average pseudovariance decrease by $\Omega(\eta^2)$ upon random conditioning. Because the (pseudo-)variance is bounded between 0 and 1, this can be used to obtain an $\eta$-good pseudocodeword after a small constant number of conditionings (although the number of conditionings will depend on $\eta$).
\begin{corollary}\label{lem:low_covariance_solution}
	Given any pseudocodeword $\tildeEx{\cdot}$ of degree $\geq 2d\left(\frac{s^{3d}}{\eta^2}+1\right)$ with associated pseudocovariance operator $\tildeCov{\cdot}{\cdot}$ and an $\eta>0$, there exists a number $u^* \leq s^{3d}/\eta^2 $ such that 
	\[
		\Ex{r_1,r_2,\cdots,r_{u^*}}{\Ex{\li,\ri}{\tildecov[\zee_{\li},\zee_{\ri} \vert \zee_{r_1},\zee_{r_2},\cdots,\zee_{r_{u^*}}]}} \leq \eta
	\]
\end{corollary}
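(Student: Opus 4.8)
The plan is to run a standard potential-function argument, using \cref{lem:conditioning_reduces_variance} as the per-step estimate. First I would introduce the potential
\[
\Phi_u ~:=~ \Ex{r_1,\dots,r_u}{\Ex{\li}{\tildeVar{\zee_{\li} \vert \zee_{r_1},\dots,\zee_{r_u}}}}
\]
for $u \geq 0$, where $r_1,\dots,r_u$ are independent uniformly random vertices of $R$, so that $\Phi_0 = \Ex{\li}{\tildeVar{\zee_{\li}}}$ and $\Phi_{u^*}$ is (together with the corresponding covariance quantity) what we need to control. Since every conditioned pseudocodeword $\condPE{\cdot}{\zee_{r_1},\dots,\zee_{r_u}}$ is again a valid pseudoexpectation (of degree reduced by $2du$), the bound $\tildeVar{\zee_S}\leq 1$ established right after the definition of pseudocovariance applies to it, and hence $0 \leq \Phi_u \leq 1$ for all $u$ in range.

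The main work is the per-step decrement. For each relevant $u$ I would fix $r_1,\dots,r_u$, apply \cref{lem:conditioning_reduces_variance} to the pseudocodeword $\condPE{\cdot}{\zee_{r_1},\dots,\zee_{r_u}}$, and then average the resulting inequality over $r_1,\dots,r_u$. Invoking \cref{lem:conditioning_reduces_variance} here is legitimate because conditioning on $u$ vertex neighborhoods lowers the SoS degree by $2du$, and the assumed starting degree $2d(s^{3d}/\eta^2+1)$ is chosen precisely so that at least $4d$ degrees remain for every $u \leq s^{3d}/\eta^2$. Using Jensen's inequality to pull the average over $r_1,\dots,r_u$ inside the square (valid since the average covariance $\Ex{\li,\ri}{\tildeCov{\zee_{\li}}{\zee_{\ri} \vert \cdots}}$ is a nonnegative quantity, being a sum of absolute covariances), I get
\[
\Phi_{u+1} ~<~ \Phi_u - \frac{1}{s^{3d}}\, C_u^2, \qquad \text{where}\qquad C_u ~:=~ \Ex{r_1,\dots,r_u}{\Ex{\li,\ri}{\tildeCov{\zee_{\li}}{\zee_{\ri} \vert \zee_{r_1},\dots,\zee_{r_u}}}} \geq 0 \mper
\]

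The final step is to telescope and conclude by contradiction. If $C_u > \eta$ for every integer $u$ in $\{0, 1, \dots, \lceil s^{3d}/\eta^2 \rceil - 1\}$, then summing the per-step bound over these $u$ yields $\Phi_{\lceil s^{3d}/\eta^2\rceil} < \Phi_0 - \lceil s^{3d}/\eta^2\rceil \cdot \eta^2/s^{3d} \leq 1 - 1 = 0$, contradicting $\Phi_u \geq 0$. Hence some $u^* \leq \lceil s^{3d}/\eta^2 \rceil - 1 \leq s^{3d}/\eta^2$ satisfies $C_{u^*} \leq \eta$, which is exactly the claimed inequality.

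I do not expect a genuine obstacle here — this is a routine correlation-rounding/potential argument. The only points that need care are the degree bookkeeping (ensuring the starting SoS degree is large enough that \cref{lem:conditioning_reduces_variance}, which requires degree $\geq 4d$, can still be invoked on the pseudocodeword after each of the up to $s^{3d}/\eta^2$ conditionings) and checking that the tower property of iterated conditioned pseudoexpectations is being applied correctly, so that $\Phi_u$ genuinely equals the stated nested expectation and the per-step inequality composes.
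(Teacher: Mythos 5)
Your proposal is correct and takes essentially the same approach as the paper: the paper's proof is exactly this telescoping potential argument, just written more tersely (it goes directly by contradiction, invoking \cref{lem:conditioning_reduces_variance} at each step, and leaves the Jensen step and the SoS-degree bookkeeping implicit, whereas you spell them out).
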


\begin{proof}
	Suppose not, so that 
	\begin{align*}
		\Ex{r_1,r_2,\cdots,r_{u}}{\Ex{\li,\ri}{\tildecov[\zee_{\li},\zee_{\ri} \vert \zee_{r_1},\zee_{r_2},\cdots,\zee_{r_{u^*}}]}} >\eta  && \forall u\in [s^{3d}/\eta^2]
	\end{align*}
	Therefore, by \cref{lem:conditioning_reduces_variance}, for every $u\in [s^{3d}/\eta^2]$,
	\[
		\Ex{r_1,r_2,\cdots,r_{u}}{\Ex{\li}{\tildeVar{\zee_{\li}\big\vert  \zee_{r_1},\zee_{r_2},\cdots,\zee_{r_{u}}}}} < \Ex{r_1,r_2,\cdots,r_{u-1}}{\Ex{\li}{\tildeVar{\zee_{\li}\big\vert  \zee_{r_1},\zee_{r_2},\cdots,\zee_{r_{u-1}}}}} - \frac{1}{s^{3d}}\eta^2 
	\]
	This is impossible, since all pseudovariances are bounded between 0 and 1.
\end{proof}

Finally, the next lemma shows how to use the $\eta$-good property obtained above via random conditioning. Essentially, it allows us to write a pseudoexpectation of products as a product of pseudoexpectations. However, this can only be done when these pseudoexpectations are applied to products coming from a uniform measure on $L\times R$, since that is the kind of guarantee on pseudocovariance that $\eta$-good property gives us.

\begin{lemma}\label{lem:using_eta_good}
	Let $\{X_{\li}\}_{\li\in L}$ and $\{Y_{\ri}\}_{\ri\in R}$ be two collections of $d$-local functions on $\Sigma^E$ such that for every $\li\in L$, $X_{\li}(f)$ only depends on $f_{\li}$ and for every $\ri\in R$, $Y_{\ri}(f)$ only depends on $f_{\ri}$. Further, assume that $\norm{X_{\li}}_{\infty} \leq 1$ and $\norm{Y_{\ri}}_{\infty}\leq 1$ for all $\li \in L, \ri \in R$. Then, for an $\eta$-good pseudocodeword $\tildeEx{\cdot}$,
	\[
		\Ex{\li,\ri}{\tildeEx{X_{\li}(\zee) \cdot Y_{\ri}(\zee)}} ~\leq~ \Ex{\li,\ri}{\tildeEx{X_{\li}(\zee)} \cdot \tildeEx{Y_{\ri}(\zee)}} + \eta 
	\]
\end{lemma}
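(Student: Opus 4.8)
The plan is to reduce the inequality to the definition of $\eta$-goodness by expanding both sides in the indicator variables and applying the triangle inequality. First I would write each local function in its canonical polynomial form: since $X_{\li}$ depends only on $\zee_{\li}$, we have $X_{\li}(\zee) = \sum_{\alpha \in \Sigma^{N(\li)}} X_{\li}(\alpha)\, \zee_{\li,\alpha}$, and similarly $Y_{\ri}(\zee) = \sum_{\beta \in \Sigma^{N(\ri)}} Y_{\ri}(\beta)\, \zee_{\ri,\beta}$. Since $\deg(X_{\li}\cdot Y_{\ri}) \le 2d$ and the covariances $\tildecov(\zee_{\li,\alpha},\zee_{\ri,\beta})$ lie within the degree budget of an $\eta$-good pseudocodeword, linearity of $\tildeEx{\cdot}$ together with the definition of pseudo-covariance gives, for each fixed pair $\li,\ri$,
\[
\tildeEx{X_{\li}(\zee)\, Y_{\ri}(\zee)} - \tildeEx{X_{\li}(\zee)}\,\tildeEx{Y_{\ri}(\zee)} ~=~ \sum_{\alpha, \beta} X_{\li}(\alpha)\, Y_{\ri}(\beta)\, \tildecov(\zee_{\li,\alpha}, \zee_{\ri,\beta}) \mper
\]

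Next I would take absolute values and use $\norm{X_{\li}}_{\infty} \le 1$ and $\norm{Y_{\ri}}_{\infty} \le 1$ to bound each coefficient by $1$, so that the right-hand side is at most $\sum_{\alpha,\beta} \abs{\tildecov(\zee_{\li,\alpha},\zee_{\ri,\beta})}$, which is exactly $\tildeCov{\zee_{\li}}{\zee_{\ri}}$ by definition. Hence
\[
\tildeEx{X_{\li}(\zee)\, Y_{\ri}(\zee)} ~\le~ \tildeEx{X_{\li}(\zee)}\,\tildeEx{Y_{\ri}(\zee)} + \tildeCov{\zee_{\li}}{\zee_{\ri}} \mper
\]
Averaging over $\li$ and $\ri$ chosen uniformly and independently, and invoking the $\eta$-good hypothesis $\Ex{\li,\ri}{\tildeCov{\zee_{\li}}{\zee_{\ri}}} \le \eta$ from \cref{def:eta_good}, yields the claim.

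The argument is essentially this single expansion plus the triangle inequality, so there is no substantive obstacle; the only points requiring care are bookkeeping. One should confirm the SoS-degree is large enough both to apply $\tildeEx{\cdot}$ to the degree-$2d$ polynomial $X_{\li}\cdot Y_{\ri}$ and to make the pairwise covariances (hence $\tildeCov{\zee_{\li}}{\zee_{\ri}}$) well-defined — this holds automatically since the covariance already appears in the definition of $\eta$-goodness. One should also observe that when $\li \sim \ri$ the neighborhoods $N(\li)$ and $N(\ri)$ share the connecting edge, so $\zee_{\li,\alpha}\,\zee_{\ri,\beta}$ vanishes identically whenever $\alpha$ and $\beta$ disagree on that edge (a consequence of the canonical constraints $Z_{e,j}Z_{e,j'} = 0$ for $j \ne j'$); these dropped terms only help, since we pass to absolute values regardless.
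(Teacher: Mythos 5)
Your proof is correct and follows essentially the same route as the paper's: expand $X_{\li}$ and $Y_{\ri}$ in the indicator variables, use linearity to reduce the difference to a weighted sum of pseudo-covariances, bound the weights by the $\ell_\infty$ norms and take absolute values, recognize the result as $\tildeCov{\zee_{\li}}{\zee_{\ri}}$, and average to invoke the $\eta$-good hypothesis. The extra observations about degree budgets and shared edges are harmless side remarks, not a different argument.
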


\begin{proof}
	\begin{align*}
		&\Ex{\li,\ri}{\tildeEx{X_{\li}(\zee) \cdot Y_{\ri}(\zee)} - \tildeEx{X_{\li}(\zee)} \cdot \tildeEx{Y_{\ri}(\zee)}} \\
		&~=~ \Ex{\li,\ri}{\sum_{\alpha \in \Sigma^{N(\li)} \atop \beta\in \Sigma^{N(\ri)}}
                     \tildeEx{X_{\li}(\alpha) \zee_{N(\li),\alpha} \cdot Y_{\ri}(\beta)  \zee_{N(\ri),\beta}} - 
		 \sum_{\alpha \in \Sigma^{N(\li)} \atop \beta\in\Sigma^{N(\ri)}} \cdot \tildeEx{X_{\li}(\alpha) \zee_{N(\li),\alpha}} \tildeEx{Y_{\ri}(\beta) \zee_{N(\ri),\beta}}} \\
		&~=~ \Ex{\li,\ri}{\sum_{\alpha,\beta} X_{\li}(\alpha) Y_{\ri}(\beta) \cdot \inparen{~\tildeEx{
           \zee_{N(\li),\alpha} \cdot \zee_{N(\ri),\beta}} -
           \tildeEx{\zee_{N(\li),\alpha}} \cdot \tildeEx{\zee_{N(\ri),\beta}}~}} \\
		&~\leq~  \Ex{\li,\ri}{\norm{X_{\li}}_{\infty} \norm{Y_{\ri}}_{\infty} \cdot {\sum}_{\alpha,\beta}
           \abs*{~\tildeEx{
           \zee_{N(\li),\alpha} \cdot \zee_{N(\ri),\beta}} -
           \tildeEx{\zee_{N(\li),\alpha}} \cdot \tildeEx{\zee_{N(\ri),\beta}}~}~} \\
		&~\leq~  \Ex{\li,\ri}{\tildecov[\zee_{\li},\zee_{\ri}]} \leq \eta
	\end{align*}		
%
\end{proof}
\section{Expander Mixing Lemma for SoS}\label{app:eml_sos}

In this section, we include proof of the well-known fact that SDPs such as the SoS hierarchy can reason with the expander mixing lemma (EML). We start by giving a version of EML for vector valued functions, which can be obtained by simply applying the usual EML coordinate-wise, along with an application of Cauchy-Schwartz inequality.
\begin{lemma}[EML for vector-valued functions]
	\label{high_dimensional_eml}
		Let $\{v_{\li}\}_{\li \in L}$ and $\{u_{\ri}\}_{\ri \in R}$ be a collection of vectors in $\R^N$. Then,
		\[
			\abs*{\Ex{\li \sim \ri}{\ip{v_{\li}}{u_{\ri}}} - \Ex{\li,\ri}{\ip{v_{\li}}{u_{\ri}}}} \leq \lambda \sqrt{\Ex{\li}{\norm{v_{\li}}^2}} \sqrt{\Ex{\ri}{\norm{u_{\ri}}^2}}
		\]
	\end{lemma}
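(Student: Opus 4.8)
The plan is to reduce the statement to the scalar expander mixing lemma (\cref{lem:eml}) applied coordinate by coordinate, and then to recombine the coordinates using the Cauchy--Schwarz inequality. The whole argument is short and elementary; no positivity or SDP reasoning is needed here (that is only required for the pseudoexpectation version, \cref{lem:eml_pseudoexpectation}, which is handled separately).

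First I would expand the inner product in coordinates, writing $\ip{v_{\li}}{u_{\ri}} = \sum_{k=1}^N v_{\li}(k)\, u_{\ri}(k)$, so that by linearity of expectation the quantity inside the absolute value on the left-hand side equals $\sum_{k=1}^N \parens*{\Ex{\li \sim \ri}{v_{\li}(k)\, u_{\ri}(k)} - \Ex{\li,\ri}{v_{\li}(k)\, u_{\ri}(k)}}$. For each fixed $k$, set $f_k \colon L \to \R$, $f_k(\li) = v_{\li}(k)$, and $g_k \colon R \to \R$, $g_k(\ri) = u_{\ri}(k)$. Applying \cref{lem:eml} to $f_k$ and $g_k$ gives $\abs*{\Ex{\li \sim \ri}{f_k(\li) g_k(\ri)} - \Ex{\li}{f_k(\li)} \Ex{\ri}{g_k(\ri)}} \leq \lambda \cdot \norm{f_k}_2 \norm{g_k}_2$, where $\norm{f_k}_2^2 = \Ex{\li}{v_{\li}(k)^2}$ and $\norm{g_k}_2^2 = \Ex{\ri}{u_{\ri}(k)^2}$. (Note $\Ex{\li,\ri}{v_\li(k) u_\ri(k)} = \Ex{\li}{f_k(\li)}\Ex{\ri}{g_k(\ri)}$ since $\li$ and $\ri$ are independent.)

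Then by the triangle inequality the left-hand side is at most $\lambda \sum_{k=1}^N \norm{f_k}_2 \norm{g_k}_2$, and a final application of Cauchy--Schwarz over the index $k$ bounds this by $\lambda \sqrt{\sum_{k=1}^N \norm{f_k}_2^2} \cdot \sqrt{\sum_{k=1}^N \norm{g_k}_2^2}$. Finally I would observe $\sum_{k=1}^N \norm{f_k}_2^2 = \sum_{k=1}^N \Ex{\li}{v_{\li}(k)^2} = \Ex{\li}{\norm{v_{\li}}^2}$ and likewise $\sum_{k=1}^N \norm{g_k}_2^2 = \Ex{\ri}{\norm{u_{\ri}}^2}$, giving exactly the claimed bound. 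The only point that requires any care is keeping the normalization consistent — the scalar EML in \cref{lem:eml} is stated with the expectation-normalized norm $\norm{\cdot}_2$, and with that convention the per-coordinate factors assemble cleanly into the Euclidean norms $\norm{v_{\li}}$, $\norm{u_{\ri}}$ under the same expectation normalization — so there is no real obstacle.
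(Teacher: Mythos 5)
Your proof is correct and follows essentially the same route as the paper: decompose the inner product coordinate-wise, apply the scalar expander mixing lemma to each coordinate, and recombine with Cauchy--Schwarz over the coordinate index. The only cosmetic difference is that the paper writes the coordinate average as $\Ex{i \in [N]}{\cdot}$ (matching its convention of equipping $\R^N$ with the expectation inner product) rather than a sum over $k$, but since both sides of the inequality scale the same way under renormalization, this makes no substantive difference.
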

	
	\begin{proof}
		\begin{align*}
			\abs*{\Ex{\li \sim \ri}{\ip{v_{\li}}{u_{\ri}}} - \Ex{\li,\ri}{\ip{v_{\li}}{u_{\ri}}}} &= \abs*{\Ex{\li \sim \ri}{\Ex{i\in [N]}{v_{\li}(i)u_{\ri}(i)}} - \Ex{\li,\ri}{\Ex{i\in [N]}{v_{\li}(i)u_{\ri}(i)}}} \\
			&\leq \Ex{i}{ \abs*{\Ex{\li \sim \ri}{v_{\li}(i)u_{\ri}(i)} - \Ex{\li,\ri}{v_{\li}(i)u_{\ri}(i)}} } \\
			&\leq \Ex{i}{ \lambda \sqrt{\Ex{\li}{v_{\li}(i)^2}} \cdot \sqrt{\Ex{\ri}{u_{\ri}(i)^2}} } \\
			&\leq \lambda \sqrt{ \Ex{i}{\Ex{\li}{v_{\li}(i)^2}} \cdot \Ex{i}{\Ex{\ri}{u_{\ri}(i)^2}} } \\
			&= \lambda \sqrt{\Ex{\li}{\norm{v_{\li}}}^2 \cdot \Ex{\ri}{\norm{u_{\ri}}^2}}
		\end{align*}
	\end{proof}

Next, we use the non-negativity of sum of squares of low-degree polynomials to show a certain matrix is positive semidefinite, allowing us to use the EML for vector-valued functions to prove the EML for pseudoexpectations. This is the only part in the entire proof that uses the non-negativity of sum-of-squares of low-degree polynomials. The rest of the argument can also be carried out with linear programming hierarchies for example. It is an interesting question whether list decoding can be carried out for AEL without the use of SDPs.
\begin{lemma}[EML for pseudoexpectations]\label{lem:eml_pexp_appendix}
	Let $\{X_{\li}\}_{\li\in L}$ and $\{Y_{\ri}\}_{\ri\in R}$ be two collections of $d$-local functions on $\Sigma^E$ such that for every $\li\in L$, $X_{\li}(f)$ only depends on $f_{\li}$ and for every $\ri\in R$, $Y_{\ri}(f)$ only depends on $f_{\ri}$.
Then for a $\lambda$-spectral expander, we have
	\[
		\abs*{\Ex{\li\sim \ri}{\tildeEx{X_{\li}(\zee) Y_{\ri}(\zee) }} -
                  \Ex{\li,\ri}{\tildeEx{X_{\li}(\zee) Y_{\ri}(\zee)}}} \leq \lambda
                \sqrt{\Ex{\li}{\tildeEx{X_{\li}(\zee)^2}}} \sqrt{\Ex{\ri}
                  {\tildeEx{Y_{\ri}(\zee)^2}}} \mper
	\]
\end{lemma}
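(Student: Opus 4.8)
The plan is to reduce the statement to the vector-valued expander mixing lemma (\cref{high_dimensional_eml}) by exhibiting a Gram (pseudo-moment) representation of the local functions involved. Since the pseudocodeword has SoS-degree $t \geq 2d$, every product $X_{\li}(\zee) Y_{\ri}(\zee)$, $X_{\li}(\zee) X_{\li'}(\zee)$, and $Y_{\ri}(\zee) Y_{\ri'}(\zee)$ is a polynomial of degree at most $2d \leq t$ in $\zee$ (recall that a $d$-local function is represented by a degree-$\leq d$ polynomial), so $\tildeEx{\cdot}$ is defined on all of these.

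First I would consider the finite-dimensional real vector space $W$ spanned by the polynomial representations of $\{X_{\li}\}_{\li \in L}$ and $\{Y_{\ri}\}_{\ri \in R}$ inside $\R[\zee]^{\leq d}$, and define the symmetric bilinear form $B(p,q) := \tildeEx{p \cdot q}$ on $W$. Because $t \geq 2d$, every $p \in W$ has degree at most $d \leq t/2$, so condition (2) of \cref{def:constraints_on_sos} gives $B(p,p) = \tildeEx{p^2} \geq 0$; that is, $B$ is positive semidefinite. Equivalently, the finite Gram matrix indexed by $\{X_{\li}\}_{\li} \cup \{Y_{\ri}\}_{\ri}$ with entries given by $B$ is PSD, so it admits a square-root factorization: there is an integer $N$ and a linear map $\Phi : W \to \R^N$ such that $B(p,q) = \ip{\Phi(p)}{\Phi(q)}$ for all $p,q \in W$. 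This PSD factorization is the only place the argument uses the sum-of-squares (rather than merely linear-programming) structure of $\tildeEx{\cdot}$, which matches the remark in the lemma preamble.

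Next I would set $v_{\li} := \Phi(X_{\li})$ for $\li \in L$ and $u_{\ri} := \Phi(Y_{\ri})$ for $\ri \in R$. By construction of $\Phi$,
\[
\ip{v_{\li}}{u_{\ri}} = \tildeEx{X_{\li}(\zee) Y_{\ri}(\zee)}, \qquad \norm{v_{\li}}^2 = \tildeEx{X_{\li}(\zee)^2}, \qquad \norm{u_{\ri}}^2 = \tildeEx{Y_{\ri}(\zee)^2}.
\]
Applying \cref{high_dimensional_eml} to the collections $\{v_{\li}\}_{\li \in L}$ and $\{u_{\ri}\}_{\ri \in R}$ gives
\[
\abs*{\Ex{\li \sim \ri}{\ip{v_{\li}}{u_{\ri}}} - \Ex{\li,\ri}{\ip{v_{\li}}{u_{\ri}}}} \leq \lambda \sqrt{\Ex{\li}{\norm{v_{\li}}^2}} \sqrt{\Ex{\ri}{\norm{u_{\ri}}^2}},
\]
and substituting the three identities above yields exactly the claimed inequality.

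The only step requiring care is the Gram factorization: one must confirm that $B$ is applied only to polynomials of degree at most $d$ (so that the SoS non-negativity clause applies) and that all products appearing have degree at most $t$ (so that the relevant pseudoexpectations are defined), both of which follow from $t \geq 2d$. Everything else is a direct translation through $\Phi$, and \cref{high_dimensional_eml} is itself just the scalar expander mixing lemma applied coordinatewise followed by Cauchy--Schwarz, so I do not anticipate any further obstacle.
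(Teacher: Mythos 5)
Your proposal is correct and follows essentially the same approach as the paper. The paper constructs the $2n\times 2n$ Gram matrix $M$ with entries $\tildeEx{X_iX_j}$, $\tildeEx{X_iY_j}$, etc., proves it PSD by observing that $v^TMv = \tildeEx{\parens[\big]{\sum_i (x_iX_i(\zee)+y_iY_i(\zee))}^2}\ge 0$, and takes a Cholesky factorization to produce the vectors $v_\li, u_\ri$; your phrasing via the bilinear form $B(p,q)=\tildeEx{pq}$ on the span $W$ of the $X_\li$'s and $Y_\ri$'s is the same argument stated one abstraction level up, with the same degree bookkeeping ($W\subseteq\R[\zee]^{\leq d}$, $t\geq 2d$) and the same invocation of the vector-valued mixing lemma at the end.
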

\begin{proof}	
Let us use $i$ and $j$ as indices in $[2n]$ that can be used to index $L$ and $R$ via some canonical mapping between $L$ or $R$ and $[n]$. Consider the $2n\times 2n$ matrix $M$, with
\[
		M_{ij} = \begin{cases}
			\tildeEx{X_i(\zee)X_j(\zee)}, & 1\leq i\leq n, 1\leq j\leq n \\
			\tildeEx{X_i(\zee)Y_{j-n}(\zee)}, & 1\leq i\leq n, n+1\leq j\leq 2n \\
			\tildeEx{Y_{i-n}(\zee)X_j(\zee)}, &n+1\leq i\leq 2n, 1\leq j\leq n\\
			\tildeEx{Y_{i-n}(\zee)Y_{j-n}(\zee)} &n+1\leq i\leq 2n,n+1\leq j\leq 2n
		\end{cases}
\]
We will prove that the matrix $M$ is positive semidefinite for any collection of functions as in the lemma statement. To do this, we show that for any vector $v = (x_1,x_2,\cdots,x_n,y_1,y_2,\cdots,y_n)$, the quadratic form $v^TMv$, is non-negative.
	\begin{align*}
		v^TMv &~=~ \Ex{i,j}{M_{ij} x_ix_j + M_{i,j+n}x_iy_j + M_{i+n,j} y_ix_j + M_{i+n,j+n} y_iy_j}\\
		&~=~ \Ex{i,j}{\tildeEx{X_i(\zee)X_j(\zee)}x_ix_j+\tildeEx{X_i(\zee)Y_j(\zee)}x_iy_j+\tildeEx{Y_i(\zee)X_j(\zee)}y_ix_j+\tildeEx{Y_i(\zee)Y_j(\zee)}y_iy_j} \\
		&~=~ \Ex{i,j}{\tildeEx{(x_i X_i(\zee) + y_i Y_i(\zee))(x_jX_j(\zee)+y_jY_j(\zee))}} \\
		&~=~ \tildeEx{\Ex{i}{(x_i X_i(\zee) + y_i Y_i(\zee))}^2}  ~\geq~ 0
	\end{align*}
By Cholesky decomposition, there exist vectors $\{v_{\li}\}_{\li \in L}$ and $\{u_{\ri}\}_{\ri\in R}$ such that 			
\[		
\tildeEx{X_{\li}(\zee) Y_{\ri}(\zee)} = \ip{v_{\li}}{u_{\ri}}, \quad
\tildeEx{X_{\li}(\zee)^2} = \ip{v_{\li}}{v_{\li}}, \quad \text{and} \quad
\tildeEx{Y_{\ri}(\zee)^2} = \ip{u_{\ri}}{u_{\ri}}
\]
We can now apply \cref{high_dimensional_eml} to these vectors.
\[
		\abs*{\Ex{\li\sim \ri}{\tildeEx{X_{\li}(\zee)Y_{\ri}(\zee)}} - \Ex{\li,\ri}{\tildeEx{X_{\li}(\zee)
                      Y_{\ri}(\zee)}}} ~\leq~ \lambda \sqrt{\Ex{\li}{\tildeEx{X_{\li}(\zee)^2}}}
                \sqrt{\Ex{\ri}{\tildeEx{Y_{\ri}(\zee)^2}}} \qquad \qedhere
\]
\end{proof}

\section{Covering Lemma}\label{app:covering}

\begin{lemma}[Covering Lemma]
	Let $g\in \Sigma^E$ and $\alpha \in (0,1)$.
	There exists a distribution $\calD$ over $\ex$ with support size at most $s^d \cdot n + 1$ such that for any $h \in \ex$ such that $\distR{g} < 1-\alpha$, the distribution $\calD$ satisfies
	\[
		\Ex{f\sim \calD}{\distR{f}} < 1-\alpha^2 \mper
	\]
\end{lemma}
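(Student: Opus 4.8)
The plan is to run the standard norm‑minimization (entropy‑maximization) argument, working entirely through the Euclidean embedding $\embed : \Sigma^E \to (\R^{s^d})^R$. Recall that under the expectation inner product $\ip{\embed(f_1)}{\embed(f_2)} = 1 - \disR{f_1}{f_2}$, so in particular $\norm{\embed(f)}^2 = 1$ for every $f$, the hypothesis $\disR{g}{h} < 1-\alpha$ becomes $\ip{\embed(g)}{\embed(h)} > \alpha$, and the target conclusion becomes $\ip{\Ex{f\sim \calD}{\embed(f)}}{\embed(h)} > \alpha^2$. If no $h \in \ex$ satisfies the hypothesis the statement is vacuous and we may return any point mass (support size $1 \le s^d\cdot n+1$); so assume some $h_0 \in \ex$ has $\ip{\embed(g)}{\embed(h_0)} > \alpha$.

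First I would set $v(\calD) \defeq \Ex{f\sim \calD}{\embed(f)}$ and $\Psi(\calD)\defeq \norm{v(\calD)}^2$, and take $\calD^*$ to be a minimizer of the convex function $\Psi$ over the set $K$ of probability distributions supported on $\ex$ subject to the linear constraint $\ip{v(\calD)}{\embed(g)} \ge \alpha$. The set $K$ is a nonempty (it contains $\delta_{h_0}$) compact convex subset of a finite‑dimensional simplex and $\Psi$ is continuous, so a minimizer exists. Since $v$ is affine‑linear in $\calD$, the first‑order optimality condition along the segment from $\calD^*$ to any feasible $\calD'$ reads $\ip{v(\calD^*)}{v(\calD') - v(\calD^*)} \ge 0$, i.e. $\ip{v(\calD^*)}{v(\calD')} \ge \Psi(\calD^*)$ for every $\calD' \in K$.

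Now fix any $h \in \ex$ with $\ip{\embed(g)}{\embed(h)} > \alpha$. The point mass $\delta_h$ is feasible, since its constraint value is $\ip{\embed(h)}{\embed(g)} > \alpha$, so applying the optimality condition with $\calD' = \delta_h$ yields $\ip{v(\calD^*)}{\embed(h)} \ge \Psi(\calD^*)$. On the other hand, Cauchy–Schwarz together with $\norm{\embed(g)} = 1$ and feasibility of $\calD^*$ gives $\Psi(\calD^*) \ge \ip{v(\calD^*)}{\embed(g)}^2 \ge \alpha^2$. If this is strict we are done. The only way it is an equality is that Cauchy–Schwarz is tight and the constraint is tight, which forces $v(\calD^*) = \alpha\,\embed(g)$; but then $\ip{v(\calD^*)}{\embed(h)} = \alpha\,\ip{\embed(g)}{\embed(h)} > \alpha^2$ by the strict hypothesis on $h$. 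In all cases $\ip{v(\calD^*)}{\embed(h)} > \alpha^2$, i.e. $\Ex{f\sim \calD^*}{\disR{f}{h}} < 1-\alpha^2$, which is the claimed bound for $\calD^*$.

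Finally, to control the support size I would invoke Carathéodory's theorem: every quantity appearing in the statement (the constraint $\ip{v(\calD)}{\embed(g)} \ge \alpha$ and the conclusion $\ip{v(\calD)}{\embed(h)} > \alpha^2$ for all admissible $h$) depends on $\calD$ only through the vector $v(\calD) \in \R^{s^d\cdot n}$. Since $v(\calD^*)$ lies in the convex hull of $\Set{\embed(f) \mid f \in \supp(\calD^*)} \subseteq \R^{s^d\cdot n}$, it can be rewritten as a convex combination of at most $s^d\cdot n+1$ of those points; the associated distribution $\calD$ over $\ex$ then satisfies $v(\calD) = v(\calD^*)$, has $\abs{\supp(\calD)} \le s^d\cdot n+1$, and hence has all the required properties. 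I expect the only mildly delicate point is the strict‑inequality bookkeeping in the third paragraph (and splitting off the vacuous case); everything else is the routine convex‑optimality‑plus‑Carathéodory argument, mirroring the algorithmic version \cref{lem:algo_covering}.
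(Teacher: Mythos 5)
Your proof is correct and follows essentially the same norm-minimization plus Carath\'eodory argument as the paper's proof in \cref{app:covering}. The only differences are minor refinements: you minimize over the closed feasible set $\ip{v(\calD)}{\embed(g)} \ge \alpha$ (which makes existence of the minimizer immediate, whereas the paper's set $T$ is cut out by a strict inequality), you split off the vacuous case explicitly, and you phrase the contradiction via the first-order variational inequality rather than by substituting the perturbation $g_1 = \tee\, g_0 + (1-\tee)\,\embed(h)$ and optimizing the resulting quadratic in $\tee$ --- but these are the same argument written slightly more carefully, not a different route.
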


\begin{proof}
	Let $\calH = \inbraces{h\in \ex \suchthat \distR{g} <1-\alpha} = \inbraces{h\in \ex \suchthat \ip{\embed(g)}{\embed(h)} > \alpha}$.
	
	Define the set $T = \inbraces{v\in \conv{\embed(\ex)} \suchthat \ip{v}{\embed(g)} >\alpha}$.

\begin{claim}\label{claim:covering}
	Let $g_0 = \argmin_{v\in T} \norm{v}^2$. Then for any $h\in \calH$, it holds that $\ip{g_0}{\embed(h)}>\alpha^2$.
\end{claim}

To finish the proof using this claim, we write $g_0$ as a convex combination of elements in $\embed(\ex)$ and let the corresponding distribution over $\ex$ be $\calD$. By the \Caratheodory theorem, $\calD$ can be chosen with support size at most $s^d \cdot n + 1$. Finally,
\begin{align*}
	\Ex{f\sim \calD}{\distR{f}} &= \Ex{f\sim \calD}{1 - \ip{\embed(f)}{\embed(h)}} \\
	&= 1- \ip{\Ex{f\sim \calD}{\embed(f)}}{\embed(h)} \\
	&= 1- \ip{g_0}{\embed(h)} < 1-\alpha^2 \hfill \qedhere
\end{align*}
\end{proof}

We next prove the claim.
\begin{proof}[Proof of \cref{claim:covering}]
	Note that $\norm{g_0}\cdot \norm{\embed(g)}\geq \ip{g_0}{\embed(g)}>\alpha$ means $\norm{g_0}>\alpha$.

If not, there exists an $h \in \calH$ such that $\ip{g_0}{\embed(h)} \leq \alpha^2$. For a $\tee \in[0,1]$ to be chosen later, consider $g_1 = \tee \cdot g_0+(1-\tee)\cdot \embed(h)$, which is also in $\conv{\embed(\fx)}$, and $\ip{g_1}{\embed(g)} = \tee \cdot \ip{g_0}{\embed(g)}+(1-\tee)\cdot \ip{\embed(h)}{\embed(g)} >\alpha$.
\begin{align*}
	\norm{g_1}^2 &~=~ \ip{\tee \cdot g_0+(1-\tee)\cdot \embed(h)}{\tee \cdot g_0+(1-\tee)\cdot \embed(h)} \\
	&~=~ \tee^2 \cdot \norm{g_0}^2+2\tee(1-\tee) \cdot \ip{g_0}{\embed(h)}+(1-\tee)^2 \cdot \norm{\embed(h)}^2\\
	&~\leq~ \tee^2 \cdot \norm{g_0}^2+2\tee(1-\tee)\cdot \alpha^2+(1-\tee)^2
\end{align*}
The optimum of this quadratic function is strictly less than $\norm{g_0}^2$ if $\norm{g_0}>\alpha$, which is indeed the case as $\norm{g_0}\cdot \norm{\embed(g)}\geq \ip{g_0}{\embed(g)}>\alpha$ means $\norm{g_0}>\alpha$.
\end{proof}
%

\end{document}